\newbox\BA@first@box
\newtheorem{theorem}{\bf Theorem}
\newtheorem{property}{\bf Property}
\newtheorem{lemma}{\bf Lemma}
\newtheorem{assumption}{\bf Assumption}
\newtheorem{definition}{\bf Definition}
\newtheorem{proposition}{\bf Proposition}
\newtheorem{remark}{\bf Remark}
\newtheorem{claim}{\bf Claim}
\newtheorem*{claim*}{\bf Claim}
\newcommand{\Head}{\text{Head}}
\newcommand{\Tail}{\text{Tail}}
\newcommand{\In}{\text{In}}
\newcommand{\Out}{\text{Out}}
\newcommand{\Ord}{\text{Ord}}
\newcommand{\Span}{\textrm{colspan}}
\newcommand{\rank}{\text{rank}}
\newcommand{\Grank}{\text{Grank}}
\newcommand{\Granki}[1]{\Grank\left(\mathbf{H}_1^{T_1^{(#1)}}, \mathbf{H}_2^{T_1^{(#1)}}, \mathbf{G}_2^{T_2^{(#1)}}\right)}
\newcommand{\mA}{\mathbf{A}}
\newcommand{\mB}{\mathbf{B}}
\newcommand{\mH}[1]{\mathbf{H}_{#1}}
\newcommand{\mG}[1]{\mathbf{G}_{#1}}
\newcommand{\HT}[2]{\mathbf{H}_{#1}^{T_1^{(#2)}}}
\newcommand{\GT}[2]{\mathbf{G}_{#1}^{T_2^{(#2)}}}
\newcommand{\HU}[1]{\mathbf{H}_{#1}^{U_1^{(i)}}}
\newcommand{\HI}[1]{\mathbf{H}_{#1}^{I_1^{(i)}}}
\newcommand{\GI}[1]{\mathbf{G}_{#1}^{I_2^{(i)}}}
\newcommand{\GU}[1]{\mathbf{G}_{#1}^{U_2^{(i)}}}
\newcommand{\HO}[1]{\mathbf{H}_{#1}^{O_1^{(i)}}}
\newcommand{\GA}[1]{\mathbf{G}_{#1}^{A_2^{(i)}}}
\newcommand{\GB}[1]{\mathbf{G}_{#1}^{B_2^{(i)}}}
\newcommand{\HOI}[2]{\mathbf{H}_{#1}^{O^{(i)}_{1,#2}}}
\newcommand{\GAI}[2]{\mathbf{G}_{#1}^{A^{(i)}_{2,#2}}}
\newcommand{\FIj}[2]{\mathbf{F}^{I_{#1}^{(i)}, \left\{ e_{i,#2} \right\}}}
\newcommand{\HTall}{
    \begin{bmatrix}
        \HT{1}{i} \\ \HT{2}{i} 
    \end{bmatrix}
}
\newcommand{\HIall}{
    \begin{bmatrix}
        \HI{1} \\ \HI{2} 
    \end{bmatrix}
}
\newcommand{\HUall}{
    \begin{bmatrix}
        \HU{1} \\ \HU{2} 
    \end{bmatrix}
}
\newcommand{\HOall}{
    \begin{bmatrix}
        \HO{1} \\ \HO{2} 
    \end{bmatrix}
}
\newcommand{\HUIall}{
    \begin{bmatrix}
        \HU{1} & \HI{1}  \\
        \HU{2} & \HI{2} 
    \end{bmatrix}
}
\newcommand{\HUOall}{
    \begin{bmatrix}
        \HU{1} & \HO{1}  \\
        \HU{2} & \HO{2} 
    \end{bmatrix}
}
\newcommand{\HGall}{
    \begin{bmatrix}
        \HT{2}{i} & \GT{2}{i}  \\
    \end{bmatrix}
}
\title{Alignment based Network Coding for Two-Unicast-Z Networks }
\author{
Weifei Zeng\\
RLE, MIT\\
{Cambridge, MA, USA}\\
	{\tt weifei@mit.edu}
	\and
	Viveck R. Cadambe \\
	Department of Electrical Engineering\\
	The Pennsylvania State University \\
	{\tt viveck@engr.psu.edu}
\and
Muriel M\'{e}dard \\
RLE, MIT\\
{Cambridge, MA, USA}\\
	{\tt medard@mit.edu}
\thanks{This material is based upon work supported by the Air Force Office of Scientific Research (AFOSR) under award
No. FA9550-13-1-0023.}
\thanks{This work was also supported by Viveck R. Cadambe's startup grant provided by the Department of Electrical
Engineering at the Pennsylvania State University.}
}
\begin{document}
\date{}
\maketitle
\fontfamily{cmr}
\selectfont

\begin{abstract}
    In this paper, we study the wireline \emph{two-unicast-Z} communication network over directed acyclic graphs. The
    two-unicast-$Z$ network is a two-unicast network where the destination intending to decode the second message has
    apriori side information of the first message. We make three contributions in this paper:
	\begin{enumerate}
    \item We describe a new linear network coding algorithm for two-unicast-Z networks over directed acyclic graphs. Our
        approach includes the idea of interference alignment as one of its key ingredients. 
        {
        For graphs of a bounded
        degree, our algorithm has linear complexity in terms of the number of vertices, and
        polynomial complexity in terms of the number of edges. } %based on the intuition that maximizing the linear coding capacity of the two-unicast-Z network requires an
\item We prove that our algorithm achieves the rate-pair $(1,1)$ whenever it is feasible in the network. Our proof
    serves as an alternative, albeit restricted to two-unicast-Z networks over directed acyclic graphs, to an earlier
    result of Wang et. al. which studied necessary and sufficient conditions for feasibility of the rate pair $(1,1)$ in
    two-unicast networks.  
\item We provide a new proof of the classical max-flow min-cut theorem for directed acyclic graphs.  
	\end{enumerate}

\end{abstract}
\newpage
\section{Introduction}
Since the advent of network coding \cite{Ahlswede_Cai_etal}, characterizing the capacity region of networks of
orthogonal noiseless capacitated links, often termed the network coding capacity, has been an active area of research.
Inspired by the success of \emph{linear} network coding for \emph{multicast} networks \cite{Koetter_Medard, Li_Yeung_Cai:linear}, a
significant body of work has been devoted to understanding the design and performance limits of linear network codes even for non-multicast communication scenarios.
Previous approaches to linear network code design for non-multicast settings have at least one of two
drawbacks: the network code design is restricted to a limited set of network topologies, or the approach has
a prohibitive computational complexity.  The goal of our paper is develop ideas and algorithms that fill this gap in literature. 
The main contribution of our paper is the development of a low-complexity linear network coding algorithm
for \emph{two-unicast-Z} networks - a network communication setting with two independent message sources and two corresponding destination nodes, where
one destination has a priori knowledge of the undesired message (See Fig. \ref{fig:twounicast-Z}). We begin with a
brief survey of previous, related literature.

\subsection{Related Work}

The simplest linear network code is in fact the technique of routing, which is used to show the max-flow min-cut theorem. In addition to the single-source single-destination setting of the max-flow min-cut theorem, routing has been shown to be optimal for several classes of networks with multiple source
messages in \cite{TCHu, Schrijver, Meng_routing}\footnote{In fact, in \cite{Li_Li_Conjecture}, routing has been
conjectured to achieve network capacity for multiple unicast networks over undirected directed acyclic graphs.}.  The
technique of random linear network coding, which is optimal for multicast networks \cite{Koetter_Medard, Ho_etal_Award},
is also shown to achieve capacity for certain non-multicast networks \cite{Koetter_Medard,Erez_Feder,
Yeung_NetworkCoding}\footnote{Linear network coding techniques which do not necessarily involve choosing co-efficients
    randomly have also been studied for multicast and certain non-multicast settings in \cite{Li_Yeung_Cai:linear,
    Yeung_NetworkCoding}}. The ideas of random linear network coding and routing have been combined to develop network
    coding algorithms for arbitrary networks in \cite{Kim_evolutionary}. Nonetheless, it is well known that the
    techniques of random linear network coding and routing are, in general, sub-optimal linear network codes for
    networks with multiple sources.

To contrast the optimistic results of \cite{Koetter_Medard, Li_Yeung_Cai:linear, Erez_Feder, TCHu, Li_Li_Conjecture,
Meng_routing}, pessimistic results related to the performance limits of linear network codes have been shown in
\cite{dougherty_insufficiency_2005, Blasiak_Kleinberg_Lubetzky, kamath_twounicastishard}. Through deep
connections between linear network coding and matroidal theory, it has been shown that linear network codes are
sub-optimal in general \cite{dougherty_insufficiency_2005, dougherty_networks_2007}. In fact, most recently
\cite{kamath_twounicastishard}, it has been shown that even the best linear code cannot, in general, achieve the
capacity of the \emph{two-unicast} network. 
%where there are two independent message sources and two corresponding
%destinations, each intending to decode to the message generated by the corresponding source. 

Literature has also studied the computational limits of construction and performance evaluation
of linear codes. It is shown in \cite{lehman2004complexity} that characterizing the set of rates achievable by
\emph{scalar} linear codes\footnote{We use the term \emph{linear coding} for network codes where all the codewords come
from a vector space, and encoding functions are linear operators over this vector space. The term \emph{scalar} linear
network coding is used when the dimensionality of this vector space is equal to $1$. It is known that vector linear
network coding strictly outperforms scalar linear network coding in general \cite{Medard_Effros}.}  is NP-complete.
Furthermore, for a given field size, the computational complexity of determining the best (scalar or vector) linear code
is associated with challenging open problems related to polynomial solvability
\cite{Dougherty_Freiling_Zeger_complexity} and graph coloring \cite{Langberg_Sprinston}. Nonetheless, 
characterization of the computational complexity of determining the set of all rates achievable by linear network coding remains an open problem. 
{
One main challenge in resolving this question is to develop ideas and algorithms for constructing linear network
codes for non-multicast settings.} The study and development of linear network codes for non-multicast settings is also
important from an engineering standpoint, especially in current times, because its utility has been recently
demonstrated in settings that arise from modeling distributed storage systems \cite{Cadambe_Jafar_Maleki}, wireless
networks \cite{Jafar_Topological}, and content disribution systems \cite{Niesen_Mohammad}. We review the principal
approaches to developing linear network codes outside the realm of routing and random linear network coding. 
%linear code is closely to solvability of a set of polynomial equations \cite{Dougherty_Freiling_Zeger_complexity}, and
%to determing the representability of matroids \cite{dougherty_networks_2007}.

\textbf{Network Codes over the binary field:}
Initial approaches to developing codes for non-multicast settings restricted their attention to codes over the binary
field. Since the binary field provides a small set of choices in terms of the linear combinations that can be obtained
by an encoding node, it is possible to search over the set of all coding solutions relatively efficiently. This idea was
exploited to develop a linear programming based network coding algorithms in \cite{traskov_koetter, Ho_constructive}.
Reference \cite{CCWang:TwoUnicast} presents a noteworthy result that demonstrates the power of carefully designed
network codes over the binary field. Through a careful understanding of the network communication graph, 
\cite{CCWang:TwoUnicast} characterized the feasibility of rate tuple $(1,1)$ in two unicast networks. The result of
\cite{CCWang:TwoUnicast} can be interpreted as follows: the rate tuple $(1,1)$ is achievable if and only if (i) the
min-cut between each source and its respective destination is at least equal to $1$, and (ii) the \emph{generalized
network sharing bound} - a network capacity outer bound formulated in \cite{Kamath_Tse_Ananthram} - is at least $2$.
Furthermore, the rate $(1,1)$ is feasible if and only if it is feasible through linear network coding over the binary
field. In general, however, coding over the binary field does not suffice even for the two unicast network
\cite{Kamath_Tse_Ananthram}.
%The role of neutralization has also been identified in uncovering the sum-capacity of a class of networks in
%\cite{Zeng_TwoUnicastZ}.

\textbf{Interference Alignment:} \emph{Interference alignment,} which is a technique discovered in the context of
interference management for wireless networks, has recently emerged as a promising tool for linear network code design
even for wireline networks. Specifically, an important goal in the design of coding co-efficients for non-multicast
networks is to ensure that an unwanted message does not corrupt a desired message at a destination. The unwanted message
can be interpreted as an interferer at the destination, and thus, tools from interference management in wireless
networks can be inherited into the network coding setting. 

Interference alignment was first explicitly identified as a tool for multiple unicast network coding in
\cite{Das_etal_ISIT2010}, which describes a class of networks where the asymptotic interference alignment scheme of
\cite{Cadambe_Jafar_int} is applicable. This class of networks has been further studied and generalized in
\cite{Meng_Das_TransIT, Bavirisetti_etal}.  The power of interference alignment for network coding was demonstrated in
references \cite{Wu_Dimakis,Cadambe_Jafar_Maleki, Cadambe_Huang_Jafar_Li, Dimitris_Dimakis_Cadambe}; these references 
developed alignment-based erasure codes to solve open problems related to minimizing repair bandwidth in distributed data storage
systems. Reference \cite{Maleki_Cadambe_Jafar} used interference alignment to characterize the capacity of classes of the \emph{index coding} problem \cite{Birk_Kol_INFOCOM}, a sub-class of the
class of general network coding capacity problems. The index coding problem is especially important because references \cite{Rouayheb_Sprinston_Georghiades, Effros_Equivalence}
have shown an equivalence between the general network coding capacity problem and the index coding problem. %network code for an arbitrary network can be mapped to an equivalent linear index code for a carefully constructed the
%general network coding capacity problem. 

%A first step to answer to this question relies on understanding the design and performance of linear network codes for
%networks. The classical techniques of routing and random linear network coding are known to be insufficient in general
%in finding optimal network coding solutions. In general, wireline networks require linear network codes whose coding
%coefficients depend on the structure of the network.  
\textbf{Index Coding Based Approaches:} The equivalence of the general network coding problem and the index coding
problem, which is established in \cite{Rouayheb_Sprinston_Georghiades, Effros_Equivalence}, opens another door to the
development of linear network codes for general networks. Specifically, for a given network coding setting,  the
approach of \cite{Effros_Equivalence} can be used first to obtain an equivalent index coding setting; then the
approaches of \cite{blasiak, shanmugam_coloring, Kim_indexcoding, kim_timesharing} can be used to develop linear
\emph{index} codes. 
{
However, these 
%of \cite{blasiak, shanmugam_coloring, Kim_indexcoding, kim_timesharing} 
index code design approaches have high computational complexity, since they require solving challenging graph coloring related
problems or linear programs whose number of constraints is exponential in terms of the number of users. 
}
Because of the nature of the mapping of \cite{Effros_Equivalence}, 
this means that 
these approaches require solving
linear programs where the number of constraints is exponential number in terms of the number of edges of the network in
consideration.  
Another common approach to obtaining index coding solutions is given in \cite{bar2011index},  which
connects the rate achievable via linear index coding to a graph functional known as \emph{minrank}. While in principle,
the minrank characterizes the rate achievable by the best possible linear index code, the min-rank of a matrix over a
given field size is difficult to evaluate, and NP-Hard in general \cite{Minrank_NPhard}; 
{
furthermore, there is no
systematic approach to characterizing the field size.
}

In summary, while recent ideas of interference alignment and connections to index coding broaden the scope of linear
network coding, these approaches inherit the main drawbacks of linear network
coding. Specifically, network coding approaches outside the realm of routing or random linear network coding are either
carefully hand-crafted for a restricted set of network topologies, or their enormous computational complexity inhibits
their utility. The motivation of our paper is to partially fill this gap in literature by devising algorithms for linear
network coding. We review our contributions next. 
\begin{figure}
	\begin{center}
		\includegraphics[height=2.0in, width=4in]{./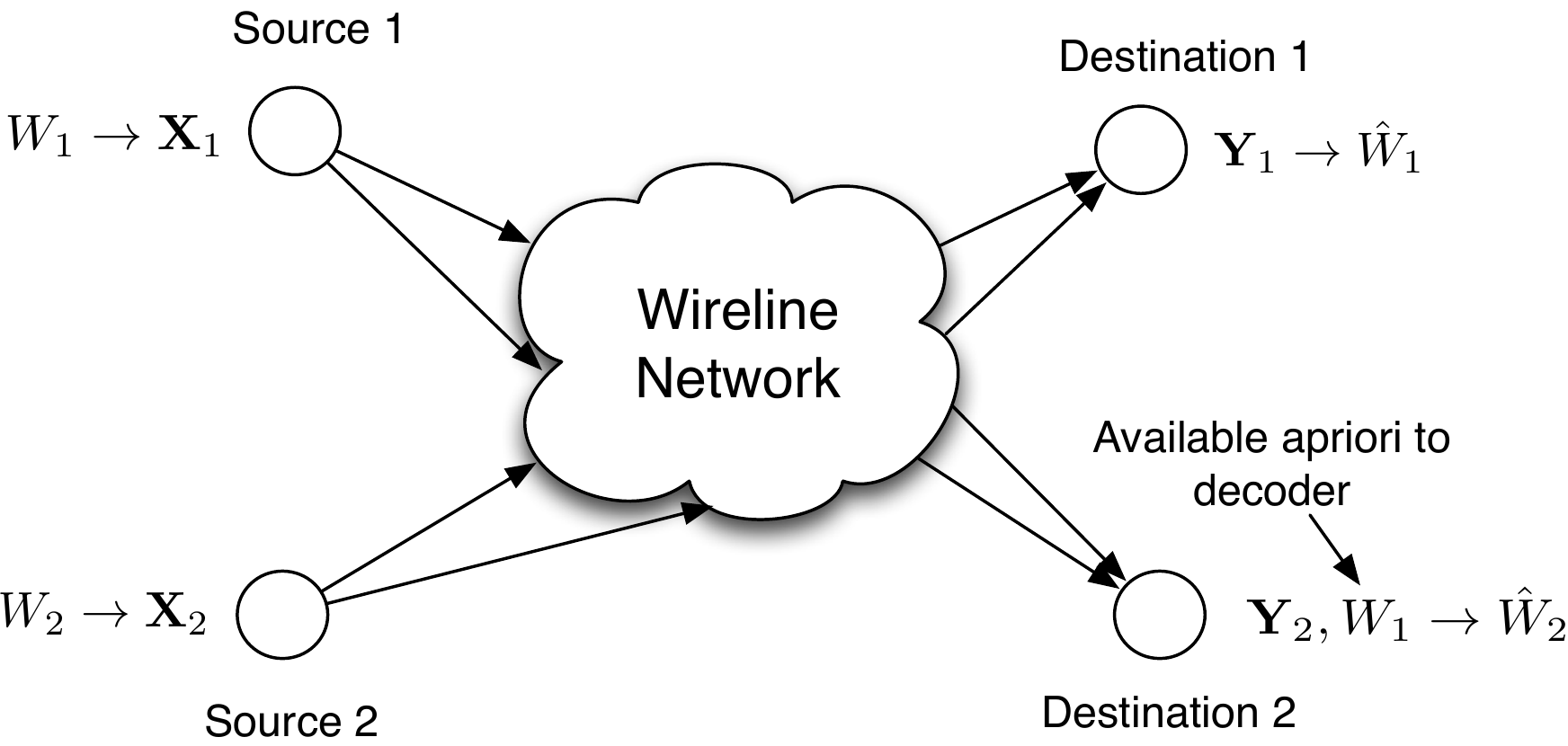}
\end{center}
\caption{The Two-unicast-Z Network}
\label{fig:twounicast-Z}
\end{figure}

%realm of systematic coding algorithms, one outlet is a  a single approach  This gap is noted in \cite{}, which
%identifies interference alignment based solutions that outperform the linear programming based approaches to the index
%coding problem. %identified where  networks coding schemes for multiple unicast  settings was boosted by \cite{} Since
%the routing solution to information transfer can be interpreted as a linear network coding scheme, perhaps the most
%classical demonstration of the power of linear network coding is the classical max-flow min-cut theorem. The  more
%interesting demonstration of the power of linear codes comes from \cite{}.  In the latter reference, it was shown that
%random linear network coding achieves the min-cut for multicast networks where a single source is demanded by more than
%one destination. In contrast to the power of random linear network codes, reference \cite{} also showed that the search
%for good linear codes for networks where there is more than one source in general is connected to wireline networks
%require linear network codes whose coding coefficients depend on the structure of the network.

\subsection{Contributions}
The goal of this paper is to devise systematic algorithms for linear network coding that incorporate ideas from
interference alignment. In this paper, 
{
we focus 
%our attention 
on \emph{two-unicast-Z} networks over directed acyclic
graphs.}
The two-unicast-Z network communication problem consists of two sources $s_1, s_2$, two destinations $t_1, t_2$
and two independent messages $W_1, W_2$. Message $W_i$ is generated by source $s_i$ and is intended to be decoded by
destination $t_i$. In the two-unicast-Z setting, destination $t_2$ has apriori side information of the message $W_1$.
Our nomenclature is inspired from the $Z$-interference channel \cite{Costa} in wireless communications, where, like our
network, only one destination faces interference\footnote{It is perhaps tempting to consider a network where $s_1$ is
not connected to $t_2$ and use the nomenclature of the two-unicast-Z for such a network. However, it is worth noting
that routing is optimal for such a network and each message achieves a rate equal to the min-cut between the respective
source and destination. Therefore, the study of such a network is not particularly interesting from a network capacity
viewpoint.}. In fact, with linear coding, relation between the sources and destinations in the two-unicast-Z setting is
the same as a $Z$-interference channel. We note that the two-unicast-Z network can be interpreted as a two unicast
network with an infinitely capacitated link between source $1$ and destination $2$. Therefore two-unicast-Z networks
form a subclass of two-unicast networks.

Despite the simplicity of its formulation, little is known about the two-unicast-Z network. For instance, while the generalized
network sharing (GNS) bound - a network capacity outerbound formulated in \cite{Kamath_Tse_Ananthram} - is loose in
general for two-unicast networks, we are not aware of any two-unicast-Z network where the GNS 
bound is loose. Similarly, while linear network coding is insufficient for two-unicast, it is not known whether linear
network coding suffices for two-unicast-$Z$ networks. In this paper, we use two-unicast-Z networks over directed acyclic
graphs as a framework to explore our ideas of linear network coding. We make three main contributions in this paper:
%end-to-end   

%Our nomenclature  other words,  two-unicast problem where one of the two messages is available at the destination that
%does  at two source nodes each of which is intended at a  two destinations 
%\subsection{Summary of Contributions}
\begin{enumerate}[(1)]
    \item In Section \ref{sec:algorithm}, we describe an algorithm that obtains linear network codes for two-unicast-Z
        networks over directed acyclic graphs. Our approach is based on designing coding co-efficients to maximize the
        capacity of the implied end-to-end Z-interference channels in the network \cite{Zeng_Cadambe_Medard}.  
        {
        For graphs whose degree is
        bounded by some parameter, 
        %the complexity of our algorithm grows linearly in the number of vertices of the graph and polynomially in the
        %number of edges. 
        the complexity of our algorithm is linear in the number of vertices and polynomial in the number of edges. 
        } We provide
        a high level intuitive description of our algorithm in Section \ref{sec:intuition}.
    \item In Section \ref{sec:mfmc}, we provide an alternate proof of the classical max-flow min-cut theorem for
        directed acyclic graphs. Our proof therefore adds to the previous literature that has uncovered several proofs
        of the theorem \cite{ west2001introduction,Shannon_Elias_Feinstein, Ahlswede_Cai_etal}. Our proof is a direct,
        linear coding based proof, and relies on tools from elementary linear algebra. This is in contrast to previous
        proofs which rely on graph theoretic results (Menger's theorem) or linear programming. This proof, in fact,
        inspires our algorithm for two-unicast-Z networks. We provide an intuitive description of the proof in Section
        \ref{sec:intuition}.
    \item In Section \ref{sec:one-one}, we prove that our algorithm achieves a rate of $(1,1)$ whenever it is feasible.
        Note that the necessary and sufficient conditions for the feasibility of the rate pair $(1,1)$ has been
        characterized in \cite{CCWang:TwoUnicast}. Our proof in Section \ref{sec:one-one} provides an
        alternate proof of the feasibility of the rate pair $(1,1),$ when restricted to two-unicast-$Z$ networks over directed acyclic graphs. Like
        our alternate proof to the max-flow min-cut theorem, our proof of Section \ref{sec:one-one} also relies significantly on elementary linear algebra.
\end{enumerate} 
%new linear network coding algorithm based  For the an intuitive explanation of our algorithm algorithm achieves a rate
%tuple that is at least as large as $(1,1)$. In addition, we show that 

\section{Intuition Behind the algorithm}
\label{sec:intuition}
Consider the two-unicast-$Z$ network described in Fig. \ref{fig:twounicast-Z}. In this network, with linear coding at
all the nodes in the network, the input output relationships can be represented as 
$$\mathbf{Y}_{1} = \mathbf{X}_1 \mathbf{H}_{1}  + \mathbf{X}_2 \mathbf{H}_{2} , 
~~\mathbf{Y}_{2} = \mathbf{X}_2 \mathbf{G}_{2}~,$$ 
{
where for $i \in \{1,2\},$  $\mathbf{X}_{i}$ is a row vector representing the input symbols on the outgoing edges of the $i$th source,
$\mathbf{Y}_{i}$ is row vector representing the symbols on the incoming edges of the $i$th destination node followed by
interference cancellation with the side information if $i=2$. 
}
Note that if $\mathbf{H}_{1},\mathbf{H}_{2}$ are fixed and known, the input-output relations
are essentially akin to the $Z$-interference channel. Using the ideas of El Gamal and Costa \cite{Gamal_Costa} for the
interference channel, the set of achievable rate pairs $(R_1,R_2)$ can be described (see \cite{Zeng_Cadambe_Medard, ramamoorthy_doubleunicast, Ho_hankobayashi}) as the rate tuples $(R_1, R_2)$ satisfying 
\begin{align} 
    R_{1} &\leq \rank\left(\mathbf{H}_{1}\right),  \quad  R_{2} \leq \rank\left(\mathbf{G}_{2}\right) \label{eq:individual}
    \\ R_{1}+R_{2} &\leq \rank\left(\begin{bmatrix}\mathbf{H}_{1}\\
        \mathbf{H}_{2}\end{bmatrix}\right) +
    \textrm{rank}\left(\left[\mathbf{H}_{2}~~\mathbf{G}_{2}\right]\right)-\mathbf{rank}(\mathbf{H}_{2})\label{eq:sumrate}
\end{align}

The goal of our algorithm is to specify the linear coding co-efficients at all the nodes in the network, which in turn
specifies the matrices $\mathbf{H}_{1},\mathbf{H}_{2},\mathbf{G}_{2}$. Once these matrices are specified, 
{
the rate region
achieved in the network is specified by (\ref{eq:individual}), (\ref{eq:sumrate}).
}
Here, we describe our approach to designing the linear coding co-efficients in the two-unicast-$Z$ network.

%$In fact, unlike (\ref{eq:1}), interference alignment is not needed at the source and destination nodes if the
%co-efficients $\mathbf{H}_{1},\mathbf{H}_{2},\mathbf{G}_{2}$ are fixed apriori. 

To describe our intuition, we begin with the familiar single source setting, and describe the ideas of our algorithm
restricted to this setting. Note that with a single source and single destination, with linear coding in the network,
the end-to-end relationship can be represented as $\mathbf{Y}=\mathbf{X}\mathbf{H}$, where $\mathbf{X}$ and $\mathbf{Y}$
respectively represent the symbols carried by the source and destination edges, and $\mathbf{H}$ represents the transfer
matrix from the source to the destination edges. We know from classical results that the linear coding co-efficients can
be chosen such that the rank of $\mathbf{H}$ is equal to the min-cut of the network. Here, we provide an alternate
perspective of this classical result. Our examination of the single-source setting provides a template for our algorithm
for the two-unicast-$Z$ network which is formally described in Section \ref{sec:algorithm}. Our approach also yields
an alternate proof for the max-flow min-cut theorem which is provided in Section \ref{sec:mfmc}.
\subsection{Algorithm for Single-Unicast Network} 
\label{sec:singleunicastalgo}
We focus on a scenario shown in Fig. \ref{fig:singleunicast}. Denote the network communication graph
$\mathcal{G}=(\mathcal{V}, \mathcal{E})$, where $\mathcal{V}$ denotes the set of vertices and $\mathcal{E}$ denotes the
set of edges. Now, suppose that, as shown in Figure \ref{fig:singleunicast}, a linear coding solution has been
formulated for $\tilde{\mathcal{G}} = (\mathcal{V}, \mathcal{E}-\{e\})$, where $e$ denotes an edge coming into the
destination node. The question of interest here is the following: How do we encode the edge $e$ so that the end-to-end
rate is maximized? We assume that our coding strategy is restricted to linear schemes. 

Let $\mathbf{X}$ be a $1 \times S$ vector denoting the source symbols input on the $S$ edges emanating from the source node.
Let $\mathbf{H}$ denote a $S \times (D-1)$ linear transform between the input and $D-1$ destination edges - all the
destination edges excluding edge $e$. Let $\mathbf{p}_1, \mathbf{p}_2, \ldots, \mathbf{p}_k$ be $1 \times S$ vectors
respectively denoting the linear transform between the source and the $k$ edges coming into edge $e$, that is, the $i$th
incoming edge.  Now, our goal is to design the coding strategy for edge $e$, that is, to choose scalars $\alpha_1,
\alpha_2, \ldots, \alpha_k$ such that the rate of the system 
$$\mathbf{Y} = \mathbf{X} \left[\mathbf{H}~~~ \sum_{i=1}^{k}
\alpha_{j}\mathbf{p}_{j}\right]$$ 
is maximized, given matrix $\mathbf{H}$ and vectors
$\mathbf{p}_{1},\mathbf{p}_{2},\ldots, \mathbf{p}_{k}$. Equivalently, the goal is to choose scalars $\alpha_1, \alpha_2,
\ldots, \alpha_k$ such that the rank of $\left[\mathbf{H}~~~ \sum_{i=1}^{k} \alpha_{j}\mathbf{p}_{j}\right]$ is
maximized.  The solution to this problem is quite straightforward - one can notice that if 
\begin{equation}
    \rank\left(\left[\mathbf{H} ~ ~ \mathbf{p}_{1} ~ ~ \mathbf{p}_{2} ~ ~ \ldots ~ ~ \mathbf{p}_{k}\right]\right) 
    > \rank\left(\mathbf{H}\right)
		\label{eq:condition_mflowmincut}
\end{equation}
then the scalars $\alpha_1, \ldots, \alpha_{k}$ can be chosen such that the rank of $\left[\mathbf{H} ~~\sum_{i=1}^{k}
\alpha_{j}\mathbf{p}_{j}\right]$ is equal to the $\rank\left(\mathbf{H}\right)+1$.  Since $\rank\left(\mathbf{H}\right)$ is
the rate obtained by the destination if edge $e$ is ignored, the implication is that if (\ref{eq:condition_mflowmincut})
is satisfied, then, we can design a linear coding strategy such that edge $e$ provides one additional dimension to the
destination.  In fact, if (\ref{eq:condition_mflowmincut}) is satisfied and the field of operation is sufficiently
large, then choosing the scalars $\alpha_1, \ldots, \alpha_k$ randomly, uniformly over the field of operation and
independent of each other increases the rank of $\left[\mathbf{H} ~~\sum_{i=1}^{k} \alpha_{j}\mathbf{p}_{j}\right]$ by
$1,$  implying the existence a linear coding solution.

\begin{figure}[ht]
    \centering
    \includegraphics[scale=0.7]{./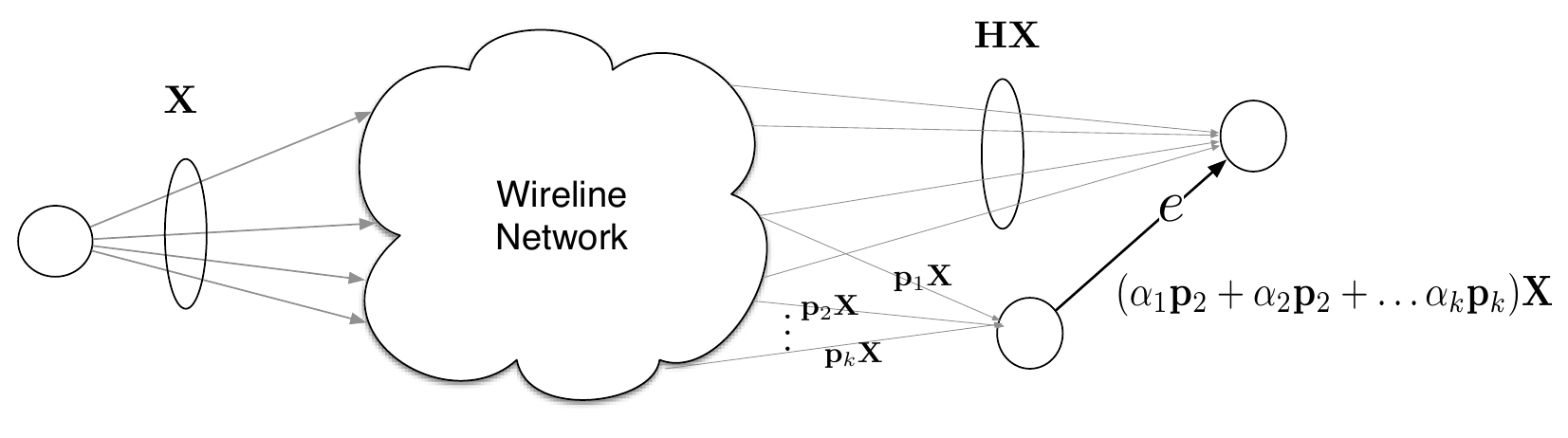}
    \caption{\small{A single-unicast scenario depicted pictorially. The goal is to find scalars $\alpha_1, \ldots, \alpha_k$.}}
    \label{fig:singleunicast}
\end{figure}

A solution to the scenario of Fig. \ref{fig:singleunicast} naturally suggests a linear coding algorithm for the single unicast problem. Suppose we are given a \emph{directed} acyclic graph $\mathcal{G}=(\mathcal{V},\mathcal{E})$, a set of source edges ${S} \subset \mathcal{E}$, a set of destination edges $D \subset \mathcal{E}.$ Our strategy removes the last topologically ordered edge $e \in {D}$ and finds a linear coding solution for the remaining graph. That is, specifically, we develop a linear coding solution for $\mathcal{G}=(\mathcal{V},\mathcal{E}-\{e\})$, with source edges ${S}$ and destination edges $D-\{e\}\cup \textrm{In}(v),$ where $v$ represents the tail node of edge $e$, and $\textrm{In}(v)$ represents the set of edges incoming on to edge $v$.
Therefore, we have reduced our original problem, which intended to design coding co-efficients for $|\mathcal{E}|$ edges, to one which needs to design coding co-efficients for $|\mathcal{E}|-1$ edges, albeit with a slightly different set of destination edges in mind. We can now recursively iterate the same procedure to this smaller problem, removing the last edge as per topological ordering at each iteration and modifying the destination edge set accordingly until all the edges are removed except the source edges. A trivial coding solution applies to this graph, which forms a starting point for the recursive algorithm we have described. 

While our insight might appear superfluous in the context of the single unicast setting, it does lead to an alternate proof for the max-flow min-cut theorem for directed acyclic graphs. To conclude our discussion, we provide an intuitive description of the proof; the proof is formally provided in Section \ref{sec:mfmc}. In our proof, we make the inductive assumption that the max-flow min-cut theorem is valid for the source $S$ and for any destination set which is a subset of $\mathcal{E}-\{e\}.$ Under this assumption, we show using ideas from classical multicast network coding literature that the optimal linear coding solutions for the two possible destination sets $D-\{e\}$ and $D-\{e\} \cup \In(v)$ can be combined into a single linear coding solution that simultaneously obtains the min-cut for both destination sets. Then, we use this combined solution along with the solution to Fig. \ref{fig:singleunicast} and show that this linear coding solution achieves a rate equal to the min-cut for destination set $D$. More specifically, we show that if the edge ${e}$ belongs to a min-cut for destination $D$, then the inductive assumption implies that, for this combined solution, (\ref{eq:condition_mflowmincut}) holds; our strategy of choosing coding co-efficients randomly over the field ensures that a rank that is equal to the min-cut is achieved for graph $\mathcal{G}$ with destination $D$ as well.

%The algorithm presented above is insightful even in the familiar single-unicast setting because of the following observation: both routing and random coding strategies follow naturally as optimal decisions when we try to maximize the rank of the end-to-end linear transform. 
\subsection{Algorithm for Two-Unicast-Z Networks}
\label{sec:twounicastzalgo}
\begin{figure}[ht]
    \centering
    \includegraphics[scale=0.7]{./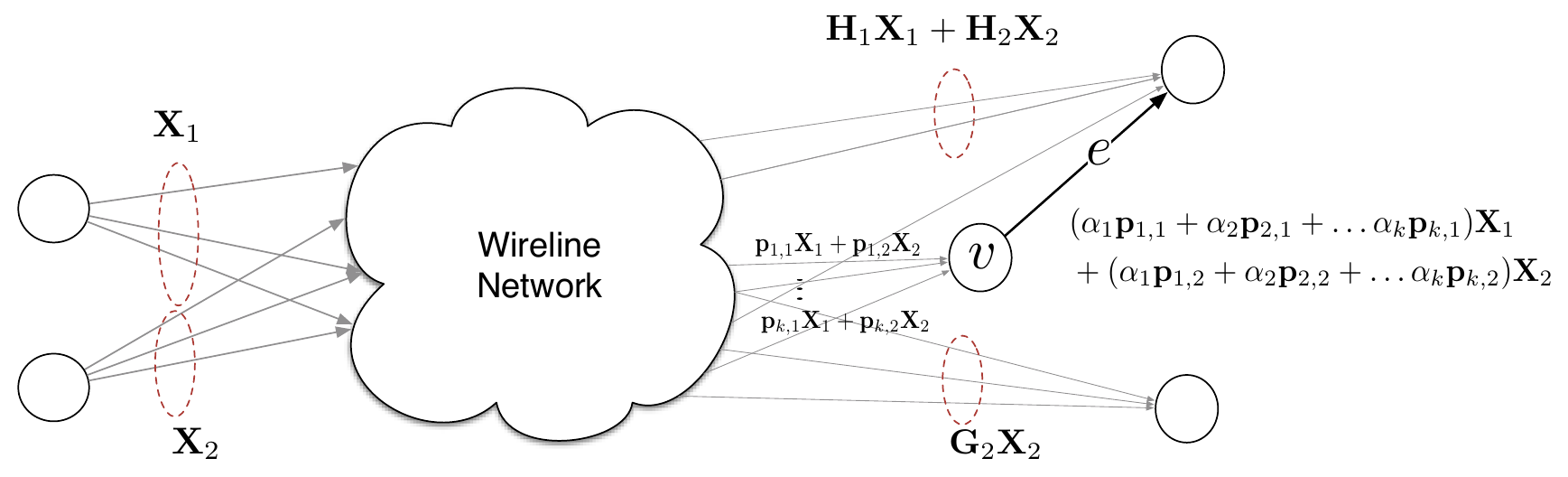}
    \caption{\small{A two-unicast-Z scenario depicted pictorially. The goal is to find scalars $\alpha_1, \ldots, \alpha_k$ to maximize (\ref{eq:2}).}}
    \label{fig:twounicastz-algo}
\end{figure}

Consider a two-unicast-$Z$ network of the form shown in Fig. \ref{fig:twounicastz-algo}. The graph $\mathcal{G}=(\mathcal{V},\mathcal{E})$ consists of two sets of source edges ${S}_{1},{S}_{2}$, two sets of destination edges ${T}_{1},{T}_{2}$, with the destination $2$ being aware of the message of the first source apriori. Now, consider a situation where a coding solution has been formulated for all the edges of the graph, with the exception of edge $e \in \mathcal{D}_{1}$. We are interested in understanding how to encode the edge $e$ so that the end-to-end rate is maximized. %We assume that our coding strategy is restricted to linear schemes. 

Our heuristic is based on maximizing the \emph{sum-rate} that is, the right hand side of equation (\ref{eq:sumrate}).
Based on Fig. \ref{fig:twounicastz-algo}, our goal is to find $\alpha_{1},\alpha_{2},\ldots, \alpha_{k}$ such that
{
\begin{align} 
    \rank\left(
    \begin{bmatrix} 
        \mathbf{H}_{1} & \sum_{i=1}^{k}\alpha_{i}\mathbf{p}_{i,1} \\
        \mathbf{H}_{2} & \sum_{i=1}^{k} \alpha_{i}\mathbf{p}_{i,2} 
    \end{bmatrix}
    \right) +
    \rank\left(
    \begin{bmatrix}
    \mathbf{H}_{2} & \sum_{i=1}^{k} \alpha_{i}\mathbf{p}_{i,2} & \mathbf{G}_{2}
    \end{bmatrix}
    \right)
    -
    \rank \left(
    \begin{bmatrix}
    \mathbf{H}_{2} & \sum_{i=1}^{k} \alpha_{i} \mathbf{p}_{i,2}
    \end{bmatrix}
    \right) \label{eq:2} 
\end{align} 
}
is maximized. To do so, we examine two cases:
\begin{description}
    \item[Case 1 ]\hspace{5pt}  If $\mathbf{p}_{1,2}, \ldots ,\mathbf{p}_{k,2}$ lie in the span of $\mathbf{H}_{2}$,
        then, clearly, choosing $\alpha_{i}$s randomly and uniformly over the field is the best
        strategy with a probability that tends to $1$ as the field size increases. 
        {
        %This is because the third, negative, term in (\ref{eq:2}) is equal to $0$, and random linear coding
        %maximizes the first two terms.
        This is because in the third and negative term in (\ref{eq:2}), the column corresponding to the random linear
        combination of $\mathbf{p}_{i,2}$'s do not contribute to the rank of the matrix, while random linear coding
        maximizes the first two terms with high probability. 
        }
    \item[Case 2 ]\hspace{5pt}  If $\mathbf{p}_{1,2}, \ldots ,\mathbf{p}_{k,2}$ does not lie in the span of
        $\mathbf{H}_{2}$, then the solution is a bit more involved. We divide this case into two sub-cases
		\begin{description}
            \item[Case 2a ]\hspace{5pt} Suppose that $\mathbf{p}_{1,2}, \ldots ,\mathbf{p}_{k,2}$ do not lie in the
                span of $[ \mathbf{H}_{2}~~\mathbf{G}_{2}]$, then chosing $\alpha_{i}$s randomly maximizes the
                expression of (\ref{eq:2}). In particular, we note that choosing $\alpha_{i}$s randomly increases the
                two positive terms and the negative term of (\ref{eq:2}) by $1$, effectively increasing the the
                expression of (\ref{eq:2}) by $1$, as compared with 
                $\rank 
                \left(
                \begin{bmatrix}
                    \mathbf{H}_{1} \\ \mathbf{H}_{2} 
                \end{bmatrix}
                \right) + 
                \rank \left( 
                \begin{bmatrix}
                \mathbf{H}_{2} & \mathbf{G}_{2}
                \end{bmatrix}
                \right) -
                \rank\left(\mathbf{H}_{2}\right)$. We later show in Lemma \ref{lma:grankcol} that the expression of
                (\ref{eq:2}) cannot be increased by more than $1$; this implies the optimality of
                the random coding approach for the case in consideration here.
            \item [Case 2b ]\hspace{5pt} Suppose that $\mathbf{p}_{1,2}, \ldots ,\mathbf{p}_{k,2}$ lies in the span of
                $\begin{bmatrix}
                \mathbf{H}_{2} & \mathbf{G}_{2}
                \end{bmatrix} $, but does not lie in the span of $\mathbf{H}_{2}$. In this case, the
                optimal strategy is to choose co-efficients $\alpha_{1},\alpha_{2},\ldots, \alpha_{k}$ so that
                $\sum_{i=1}^{k} \alpha_{i} \mathbf{p}_{i,2}$ is \emph{a random vector in the intersection of the column
                spaces of }$\mathbf{H}_{2}$ and 
                $\begin{bmatrix}
                \mathbf{p}_{1,2} &\mathbf{p}_{2,2} & \ldots & \mathbf{p}_{k,2}
                \end{bmatrix}~.
                $ In other words, we intend to \emph{align the local coding vector on edge $e$ in the space of
                    $\mathbf{H}_{2}$}.
			\end{description}
                %alpha_{i}$s randomly maximizes the expression of \ref{}. In particular, we note that choosing $\alpha_{i}$s randomly increases the two positive terms and the negative term of \ref{} by $1$, effectively increasing the sum-rate by $1$. It can be verified that the sum-rate cannot be increased by more than $1$ with linear schemes establishing the optimality of the random coding approach.
		\end{description} 
From the above discussion, it is interesting to note that we naturally uncover interference alignment in Case 2b as a technique that maximizes the expression of (\ref{eq:2}).
In fact, we show later in Lemma \ref{lem:alignment} that, if 
\begin{align} 
    \rank \left(
    \begin{bmatrix} 
        \mathbf{H}_{1} & \mathbf{p}_{1,1} & \mathbf{p}_{2,1} & \ldots & \mathbf{p}_{k,1} \\ 
        \mathbf{H}_{2} & \mathbf{p}_{1,2} & \mathbf{p}_{2,2} & \ldots & \mathbf{p}_{k,2} 
    \end{bmatrix} 
    \right)
    +
    \rank\left( 
    \begin{bmatrix}
    \mathbf{H}_{2}& \mathbf{p}_{1,2} & \mathbf{p}_{2,2} & \ldots & \mathbf{p}_{k,2} & \mathbf{G}_{2} 
    \end{bmatrix}
    \right) \nonumber \\ 
    -
    \rank\left(
    \begin{bmatrix}
    \mathbf{H}_{2} &  \mathbf{p}_{1,2} & \mathbf{p}_{1,2} & \ldots & \mathbf{p}_{k,2} 
    \end{bmatrix}
    \right) \nonumber \\ 
    > \rank\left(
    \begin{bmatrix}
     \mathbf{H}_{1} \\ \mathbf{H}_{2} 
    \end{bmatrix}
    \right) + 
    \rank\left(
    \begin{bmatrix}
    \mathbf{H}_{2} & \mathbf{G}_{2}
    \end{bmatrix}
    \right) 
    - \rank\left(\mathbf{H}_{2} \right) 
    \label{eq:condfordest1}
\end{align}
then, our choice of $\alpha_1,\alpha_2,\ldots,\alpha_k$ increases (\ref{eq:2}) by $1$ as compared with
$\rank\left(
\begin{bmatrix}
\mathbf{H}_{1} \\ \mathbf{H}_{2} 
\end{bmatrix}
\right) + 
\rank\left(
\begin{bmatrix}
\mathbf{H}_{2} & \mathbf{G}_{2}
\end{bmatrix}
\right) - \rank\left(\mathbf{H}_{2} \right)$.
 
\begin{figure}[ht]
    \centering
    \includegraphics[scale=0.7]{./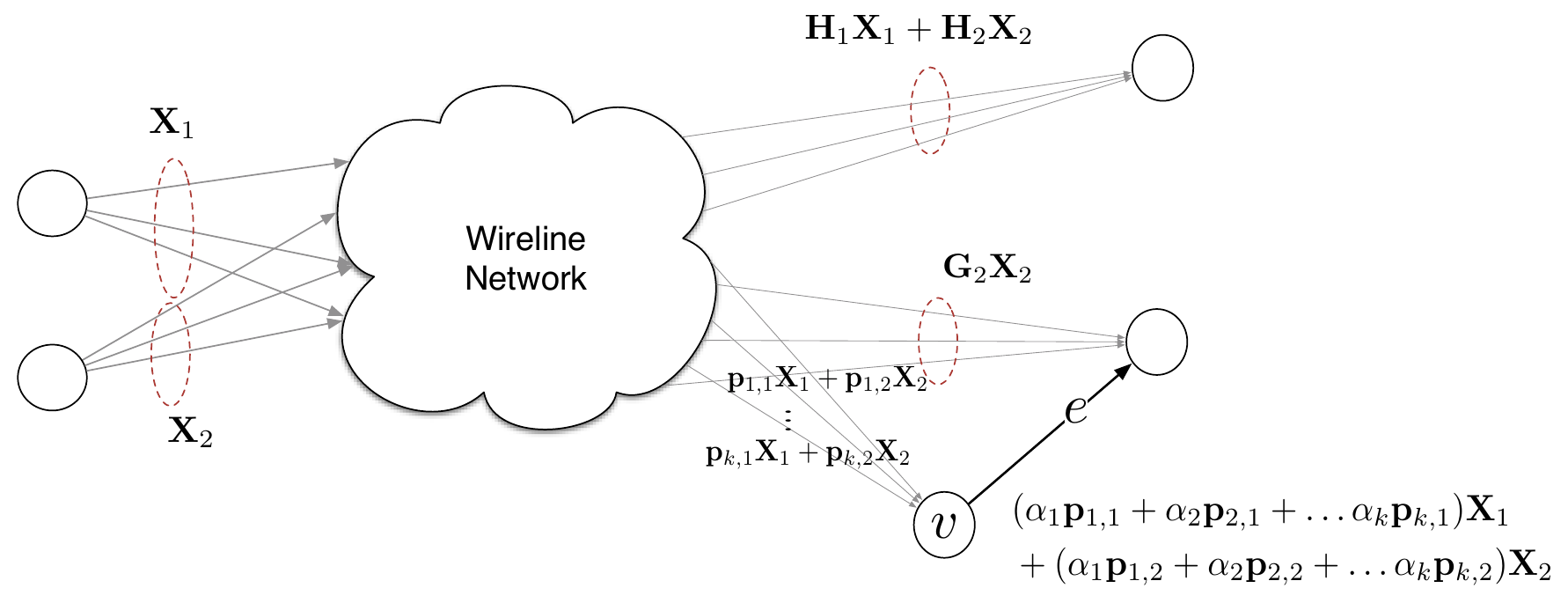}
    \caption{\small{A two-unicast-Z scenario depicted pictorially. The goal is to find scalars $\alpha_1, \ldots, \alpha_k$ to maximize (\ref{eq:2}).}}
    \label{fig:twounicastz-algo1}
\end{figure}

Now consider the scenario of Fig. \ref{fig:twounicastz-algo1}. In this scenario, the goal of maximizing the right hand
side of \eqref{eq:sumrate} is tantamount to choosing scalars $\alpha_1,\alpha_2,\ldots,\alpha_k$ to maximize
$\rank \left( 
\begin{bmatrix}
\mathbf{H}_{2} &\mathbf{G}_{2} & \sum_{i=1}^{k} \alpha_{i}\mathbf{p}_{i,2}
\end{bmatrix}
\right).$ 
The scenario is similar to the single-unicast problem discussed earlier, and choosing the scalars $\alpha_i$ randomly,
uniformly over the field of operation and independent of each other maximizes the sum-rate. In fact, it is easy to see
that this strategy improves the sum-rate by $1$, if 
\begin{align} 
    &\rank\left(
    \begin{bmatrix} \mathbf{H}_{1} \\ \mathbf{H}_{2} \end{bmatrix}
    \right) +
    \rank\left( 
    \begin{bmatrix}
    \mathbf{H}_{2} & \mathbf{G}_{2} & \mathbf{p}_{1,2} & \mathbf{p}_{2,2} & \ldots & \mathbf{p}_{k,2}
    \end{bmatrix}
    \right) - 
    \rank\left(\mathbf{H}_{2} \right) \\
&> \rank
\left(
\begin{bmatrix} 
    \mathbf{H}_{1} \\ \mathbf{H}_{2} 
\end{bmatrix}
\right) + 
\rank\left(
\begin{bmatrix}
\mathbf{H}_{2}~~\mathbf{G}_{2}
\end{bmatrix}
\right) - \rank\left(\mathbf{H}_{2}\right).
\label{eq:condfordest2}
\end{align}

\begin{remark}
    Strictly speaking, the sum-rate of the two-unicast-Z network is equal to the minimum of
    $$\rank\left(\left[\mathbf{H}_{1}~~\sum_{\ell=1}^{k} \alpha_\ell
    \mathbf{p}_{\ell,1}\right]\right)+\rank(\mathbf{G}_{2})$$ and the expression of (\ref{eq:2}). In this paper, for the
    sake of simplicity, we restrict our attention to maximizing the expression of (\ref{eq:2}); our ideas and algorithms
    can be easily modified to maximize the smaller of $$\rank\left(\left[\mathbf{H}_{1}~~\sum_{\ell=1}^{k} \alpha_\ell
    \mathbf{p}_{\ell,1}\right]\right)+\rank(\mathbf{G}_{2})$$ and (\ref{eq:2}).
\end{remark}

To summarize, we observe from (\ref{eq:condfordest1}) and (\ref{eq:condfordest2})  that an edge $e$ can increase the
right hand side of (\ref{eq:sumrate}) by $1$ if, the parent edges of the edge $e$ in combination to the other already
existing edges in the destination can increase the right hand side of (\ref{eq:sumrate}) by at least $1$. Furthermore,
our strategy to maximize the right hand side of (\ref{eq:2}) automatically uncovers the idea of alignment. Before
proceeding, we briefly explain how the problems of Fig. \ref{fig:twounicastz-algo} and \ref{fig:twounicastz-algo1} can
be composed naturally into a recursive algorithm to design linear coding co-efficients for the entire network.

We represent the network as a directed acyclic graph $\mathcal{G}=(\mathcal{V},\mathcal{E})$, where $\mathcal{V}$
denotes the set of vertices, and $\mathcal{E}$ denotes the set of edges. We begin with the last edge of the graph, that
is the edge $e$ with the highest topological order in the graph. This edge $e$ is incident on either the first
destination or the second destination. Given a coding solution for the graph ${\mathcal{G}_1} =
(\mathcal{V},\mathcal{E}-\{e\}),$ we can design a linear coding solution based on the above approach. Therefore, we aim
to design a coding solution for the smaller graph ${\mathcal{G}_1}.$ To do this, we add all the parent edges of $e$ to
the corresponding destination. That is, if edge $e$ is in destination $1$, modify destination 1 in $\mathcal{G}_{1}$ to
include all the edges incoming on to the vertex $v,$ where $v$ is the vertex from which edge $e$ emanates (See Figs.
\ref{fig:twounicastz-algo}). Similarly, if edge $e$ is incident onto destination $2$, we remove edge $e$ to reduce the
problem to a smaller graph $\mathcal{G}_1$ and all the parent edges of $e$ to destination $2$ (See Fig.
\ref{fig:twounicastz-algo1}). Now, our goal is to find a linear coding solution to the smaller problem $\mathcal{G}_1$.
We proceed similarly by identifying the last topologically ordered edge in $\mathcal{G}_1$ and removing it to obtain
$\mathcal{G}_{2}$, and further modifying the destinations. If we proceed similarly, removing one edge at a time from
graph $\mathcal{G}$, we obtain a sequence of graphs $\mathcal{G}_1,\mathcal{G}_{2},\ldots,$  to eventually obtain a
graph $\mathcal{G}_{N}$ where the destination edges coincide with the source edges. Starting with a trivial coding
solution for $\mathcal{G}_{N}$, we build a coding solution for the sequence of graphs
$\mathcal{G}_{N},\mathcal{G}_{N-1},\ldots,\mathcal{G}_{1}$ and eventually obtain a coding solution for graph
$\mathcal{G}$. Our approach is formally outlined in Section \ref{sec:algorithm}. Before we proceed, we note that in our
sequence of graphs obtained above, it can transpire that the last topologically ordered edge in one of the graphs
belongs to both destinations. We omit an explanation of this scenario here, since our approach in handling this scenario
is similar to the one depicted in Fig. \ref{fig:twounicastz-algo}.

\section{System Model}
\label{sec:model}
Consider a
directed acyclic graph (DAG) $\mathcal{G} = (\mathcal{V}, \mathcal{E})$, where $\mathcal{V}$  denotes the set of
vertices and $\mathcal{E}$ denotes the set of edges. We allow multiple edges between vertices, hence, $\mathcal{E}
\subset \mathcal{V} \times \mathcal{V} \times \mathbb{Z}_{+}$, where $\mathbb{Z}_{+}$ denotes the set of positive
integers.  For an edge $e = (u, v, i) \in \mathcal{E}$, we denote $\text{Head}(e)=v$ and $\text{Tail}(e) = u$; in other words, when the direction of the edge is denoted by an arrow, the vertex at the arrow head is the head vertex, and the vertex at the tail of the arrow is the tail vertex of the edge. When
there is only one edge between node $u$ and node $v$, we simply denote the edge as $(u,v)$. For a given vertex
$v \in \mathcal{V}$, we denote $\In(v) = \left\{ e\in \mathcal{E} : \Head(e) = v\right\} $ and $\Out(v) =
\left\{e\in\mathcal{E}: \Tail(e) = v \right\}$.  

% describe the problem setups and notations
In this paper, we focus on the networks with one or more unicast sessions. We define each source as a node in
$\mathcal{V}$, while each destination as \emph{a subset of edges} in $\mathcal{E}$. Subsequently, a single unicast
network problem $\Omega$ can be specified by a $3$-tuple $(\mathcal{G}, s, T)$, where $\mathcal{G}=(\mathcal{V},
\mathcal{E})$ is the underlying graph, $s \in \mathcal{V}$ is the source node and $T \subset \mathcal{E}$ is set of
destination edges. Every node in the graph represents an encoding node, and every edge in the graph represents an orthogonal, delay-free, link of unit capacity. 

In two-unicast-Z networks, we use the set $\mathcal{S} = \left\{ s_1, s_2 \right\}$, where $s_1, s_2
\in \mathcal{V}$, to denote set of two sources.  We use $\mathcal{T}=\left\{ T_1, T_2 \right\}$ to denote the set of two
destinations, where $T_1$ and $T_2$ each is a set of edges, i.e., $T_i \subset \mathcal{E}, i = 1, 2$.  To keep the scenario general, 
we allow an edge to belong both destinations, i.e.,   $T_1 \cap T_2$ need not be the null set. Furthermore, the head
vertices for edges in the same destination may not have to be the same, i.e., for $e_1, e_2
\in T_i, i = 1, 2$    it is possible that $\Head(e_1) \neq \Head(e_2)$. Without loss of generality, we consider graphs where the edges with the highest topological order belongs to $T_1 \cup T_2$.

%\begin{figure}[ht]
%    \centering
%    \includegraphics[scale=0.45]{./figs/NetSketch.eps}
%    \caption{A Two-Unicast-Z network}
%    \label{fig:networksketch}
%\end{figure}
In the two-unicast-Z network, the sources $s_1$ and $s_2$ generate independent messages $W_1$ and $W_2$ respectively. The message $W_1$ is available a priori to destination $T_2$. The goal of the two-unicast-$Z$ network is to design encoding functions at every node in the network and decoding functions such that $W_1$ is recoverable from the symbols carried by the edges in $T_1,$ and $W_2$ is recoverable from the symbols carried by the edges in $T_2$ and the side information $W_1$. Without loss of generality, assume that $s_i$ communicates with at least one edge in $T_i$ for $i \in \{1,2\}$. Similar to the case of single unicast, we can denote a two-unicast-Z network coding problem $\Omega$ using a $3$-tuple, i.e.  $\Omega = (\mathcal{G},
\mathcal{S}, \mathcal{T})$, where $\mathcal{S} = \left\{ s_1, s_2 \right\}, \mathcal{T} = \left\{ T_1, T_2 \right\}$.
%The
%sources transmit the coded symbols $\mathbf{X}_1$ and $\mathbf{X}_2$ respectively into the network. The destinations
%receive collections of symbols $\mathbf{Y}_1$ and $\mathbf{Y}_2$ on edges $T_1$ and $T_2$ respectively. The goals of the two-unicast-Z network is to decode $W_1$ using the symbols $Y_1$, and to decode $W_2$ using the symbols $Y_1$ and the side information $W_1$. 

 % Define rate pair, code, solution, linear code
A rate pair $(R_1, R_2)$ is \textit{achievable} if for every $\epsilon > 0, \delta > 0$, there
exists a coding scheme which encodes message $W_i$ at a rate $R_i - \delta_i$, for some $0 \leq \delta_i \leq \delta$,
such that the average decoding error probability is smaller than $\epsilon$. The capacity region is the closure of
the set of all achievable rate pairs.

\subsubsection*{Topological Order}
Since the graphs considered in this paper are directed acyclic graphs, there exists a standard topological order $\Ord_{\mathcal{V}}$ on the set of vertices $\mathcal{V}$ of the graph. The order $\Ord_{\mathcal{V}}$ satisfies the following property: if there is an edge $(u, v, i) \in \mathcal{E}$, then, 
$\Ord_{\mathcal{V}}(u) < \Ord_{\mathcal{V}}(v)$. We define a partial order $\Ord_{\mathcal{E}}$ on the set of edge $\mathcal{E}$ such that the order of an edge is equal to
the order of the tail node of the edge,  i.e., $\Ord_{\mathcal{E}}(e) = \Ord_{\mathcal{E}} (\Tail(e)), e \in \mathcal{E}$. Note that
all edges sharing the same tail node have the same order. When there is no ambiguity, we omit the subscript in the ordering and simply denote the ordering on the edges (or vertices) as $\Ord$.

% specify submatrix notations.
\subsection*{Linear Network Coding} In this paper, we consider scalar linear coding, where the encoded symbol along each edge is an element of a finite field $\mathbb{F}$. We use the algebraic framework of linear network coding of \cite{Koetter_Medard} to relate the linear coding co-efficients at the vertices of the graph to the encoded symbols carried by the edges. Specifically, we describe a linear coding solution for a network using a local coding matrix $\mathbf{F}$, whose $i$-th column corresponds to the edge with topological order $i$ and stores the local coding vector on the edge from the symbols carried by its parent edges. The
linear transfer matrix of the entire network therefore given by $\mathbf{M} = \left( \mathbf{I} - \mathbf{F}
\right)^{-1}$ and the transfer matrices between sources and destinations can be obtained as submatrices of the network
transfer matrix $\mathbf{M}$. %Therefore, a linear coding algorithm presents a systematic way of designing the local
%coding matrix $\mathbf{F}$ for a network. 

For $i \in \{1,2\},$ we denote the symbols carried by the edges emanating from source $i$ by the $1 \times |\Out(s_i)|$ vector $\mathbf{X}_{i}$. Similarly, we denote the symbols carried by the edges in $T_i$ to be the $1 \times |T_i|$ vector $\mathbf{Y}_{i}.$  For a linear coding scheme, we write 
\begin{align}
    \mathbf{Y}_1 &= \mathbf{X}_1 \mathbf{H}_1  + \mathbf{X}_2 \mathbf{H}_2 \\
    \mathbf{Y}_2 &= \mathbf{X}_1 \mathbf{G}_1 + \mathbf{X}_2 \mathbf{G}_2 ~,
\end{align}
where $\mathbf{H}_i$ is the $|\Out(s_i)|\times |T_1|$ transfer matrix from $\mathbf{X}_i$ to $\mathbf{Y}_1$ and $\mathbf{G}_i$ is the $|\Out(s_i)|\times |T_2|$  transfer
matrix from $\mathbf{X}_i$ to $\mathbf{Y}_2$. Note that the matrices $\mH{i}$ and $\mG{i}$ are sub-matrices of the network
transfer matrix $\mathbf{M}$ by selecting the rows and columns corresponding to the specified
source and destination edges respectively. 
%We denote, 
%\begin{align}
    %\mathbf{H} &= 
    %\begin{bmatrix}
        %\mathbf{H}_1 \\
        %\mathbf{H}_2  
    %\end{bmatrix}, &
     %\mathbf{G} &= 
   %\begin{bmatrix}
        %\mathbf{G}_1 \\
        %\mathbf{G}_2  
    %\end{bmatrix}~, 
    %\label{eqn:transfermatrix}
%\end{align}
%where we use the vertical bar in $\mathbf{M}$ to separate the transfer matrices for $T_1$ and $T_2$ edges. Hence,
%$\mathbf{Y}_1 = \mathbf{X} \mathbf{H}$ and $\mathbf{Y}_2 = \mathbf{X} \mathbf{G}$. 
%For convenience, we abuse the notation to use the edges to indicate the column of the transfer matrices.  For example,
%the columns of $\mathbf{H}_i$ correspond to the edges in the destination set $T_1$, while the columns of matrix
%$\mathbf{G}_i$ correspond to the edges in set $T_2$. For a subset $U \subset T_1$, the matrix $\mathbf{H}_j^U, j = 1,2 $
%denotes the submatrices of $\mathbf{H}_j$ formed by the columns corresponding to destination edges in $U$. The matrices
%$\mathbf{G}_j^U$ are defined similarly when $U$ is a subset of $T_2$. 

For the two-unicast-$Z$ network, the rate pair
$(R_1, R_2)$ achieved by a linear coding scheme is characterized by 
\begin{align} 
    R_{1} &\leq \rank\left(\mathbf{H}_{1}\right),  \quad  R_{2} \leq \rank\left(\mathbf{G}_{2}\right)~, \label{eq1:individual}
    \\ R_{1}+R_{2} &\leq \rank\left(\begin{bmatrix}\mathbf{H}_{1}\\
        \mathbf{H}_{2}\end{bmatrix}\right) +
    \rank\left(
    \begin{bmatrix}
    \mathbf{H}_{2} & \mathbf{G}_{2}
    \end{bmatrix}
    \right)
    -\rank\left(\mathbf{H}_{2}\right)~.\label{eq1:sumrate}
\end{align}
Note that the rate region does not depend on the matrix $\mathbf{G}_{1}$ since destination $2$ cancels the effect of $\mathbf{X}_{1}\mathbf{G}_{1}$ using its side information. The rate region can be derived as a simple corollary of the result of \cite{Gamal_Costa}, which obtains the capacity of a class of deterministic $2$-user interference channels. We refer the reader to \cite{Zeng_Cadambe_Medard} for a proof. 

\subsubsection*{Notations} 
% \emph{Notations}:
The cardinality of a set $E$ is denoted by $|E|$. For sets $A$ and $B$, $A \backslash B$ denotes the set of elements in
$A$ but not in $B$. 
%A index set of $n$ elements is denoted as $[n] = \left\{ 1,2,\dots,n \right\}$. 
{
For a matrix $\mathbf{A}$, $\Span(\mathbf{A})$ denotes its column span and $\text{Ker}(\mathbf{A})$ denotes the
nullspace of $\Span(\mathbf{A})$. 
}
In a graph $\mathcal{G} = (\mathcal{V}, \mathcal{E})$, For $u, v \in \mathcal{V}, u
\leadsto v$ indicates that $u$ communicates with $v$, i.e. there is a path from $u$ to $v$ on $\mathcal{G}$. $u \not
\leadsto v$ means that $u$ does not communicates with $v$ on $\mathcal{G}$. For $i,j \in \mathcal{E}$, $i \leadsto j$ is
equivalent to $\Out(i) \leadsto \In(j)$. We denote sub-matrices of the local and global coding matrices by super-scripts. Specifically, for two edge sets $\mathcal{E}_{1},\mathcal{E}_{2} \subset \mathcal{E},$ the matrices $\mathbf{F}^{\mathcal{E}_{1},\mathcal{E}_{2}}$ and $\mathbf{M}^{\mathcal{E}_{1},\mathcal{E}_{2}}$ respectively represent $|\mathcal{E}_{1}|\times |\mathcal{E}_{2}|$ dimensional sub-matrices of $\mathbf{F}$ and $\mathbf{M}$ derived from the rows corresponding to $\mathcal{E}_{1}$ and columns corresponding to $\mathcal{E}_{2}$. For example, we can write $\mathbf{H}_{1} = \mathbf{M}^{\Out(s_1),T_1}.$

% Define cuts
%For any sets $\mathcal{C} \subseteq \mathcal{E}$, $A \subseteq \mathcal{V}, B \subseteq \mathcal{E}$, we say
%$\mathcal{C}$ is an $A-B$ cut set if there exists no directed path from any vertex in $A$ to any vertex in $\Head(B) =
%\left\{ \Head(e), e \in B \right\}$ on the graph $(\mathcal{V}, \mathcal{E}\backslash \mathcal{C})$.  We define function
%$c^{\mathcal{G}}(A, B) \define \min_{\mathcal{C}} \{|\mathcal{C}| : \mathcal{C} \text{ is a } A-B \text{ cut set}\} .$
%When the graph in consideration can be gleaned unambiguously from the context, we drop the superscript $\mathcal{G}$.
%For convenience, we mildly abuse notation when $A,$ and/or $B$ is a singleton; we write $c(\{v\},\{w\})$ as simply
%$c(v;w)$.

In the context of the two-unicast-Z networks, a \emph{GNS-cut set} as defined in
\cite{Kamath_Tse_Ananthram} is a set $\mathcal{Q} \in \mathcal{E}$, such that $\mathcal{Q}$ is 
\begin{enumerate}
  \item a $s_1 - T_1$ cut-set, and,
  \item a $s_2 - T_2$ cut-set, and,
  \item a $s_2 - T_1$ cut-set.
\end{enumerate}
The \emph{GNS-cut set bound} of a two-unicast-Z network is defined to be the cardinality of the smallest GNS set in the network. It is shown in \cite{Kamath_Tse_Ananthram} that the GNS-cut set bound is an information theoretic upper bound on the sum-rate achievable in the network. 

\section{Recursive, alignment-based, linear network coding algorithm}
\label{sec:algorithm}
In this section, we present a scalar linear network code construction for the two-unicast-Z network problem. The
algorithm consists of two sub-routines, the \emph{destination reduction} algorithm which is described in Section
\ref{subsec:destreduction}, and the \emph{recursive code construction} which is described in Section
\ref{subsec:recursivecoding}. When both of them are run, the recursive coding routine returns the coding matrix
$\mathbf{F}$. The rate achieved can be obtained via (\ref{eq1:individual}),(\ref{eq1:sumrate}). The sub-routines are
pictorially depicted in Fig. \ref{fig:recursive_coding} for the network in Fig. \ref{fig:example1}. Before we describe
these sub-routines, we begin with some preliminary definitions and lemmas that will be useful in the algorithm
description. 
\subsection{Preliminaries}
\begin{definition}[Grank]
    Given matrices $\mathbf{H}_{1},\mathbf{H}_{2},\mathbf{G}_{2}$ of dimensions $P_1 \times Q_1$, $P_2 \times Q_1$ and
    $P_2 \times Q_2$ respectively, where $P_1, P_2, Q_1, Q_2$ are positive integers, the \emph{Grank} is defined as 
   \begin{align*}
        \Grank(\mathbf{H}_{1},\mathbf{H}_{2},\mathbf{G}_{2}) = \rank \left( 
        \begin{bmatrix}
            \mathbf{H}_1 \\ \mathbf{H}_2
        \end{bmatrix}
        \right)
        + \rank \left( 
        \begin{bmatrix}
            \mathbf{H}_2 & \mathbf{G}_2
        \end{bmatrix}
        \right)
        - \rank \left( \mathbf{H}_2 \right).
   \end{align*}
\end{definition}
{Note that the Grank is related to the sum-rate of the two-unicast-Z network where
$\mathbf{H}_{1},\mathbf{H}_{2}$ and $\mathbf{G}_{2}$ respectively represent the transfer matrices between source 1 and
destination $1$, source 2 and destination $1,$ and source $2$ and destination $2$.}
{
{
\begin{remark}
We can show that 
{
$$\textrm{Grank}(\mathbf{H}_{1},\mathbf{H}_{2},\mathbf{G}_{2}) = \min_{\mathbf{G}_1 \in \mathbb{F}^{P_1
\times Q_2}} \textrm{rank} \begin{bmatrix} \mathbf{H}_{1} & \mathbf{G}_{1} \\ \mathbf{H}_{2} &
    \mathbf{G}_{2}\end{bmatrix}. $$
}
Our use of the term ``Grank'' is inspired by the above observation which indicates
the quantity of interest is closely related to the rank of an appropriate matrix.

%{\color{red} not sure what is $S_2 \times T_1$ here}
    \end{remark}
\begin{remark}
    We have shown in \cite{Zeng_Cadambe_Medard}, that, if $\mathbf{H}_{1}, \mathbf{H}_{2}$ and $\mathbf{G}_{2}$ respectively
    represent the transfer matrices between source $1$ and destination $1$, source 2 and destination $1$, and source $2$
    and destination $2$, then $\text{Grank}(\mathbf{H}_1, \mathbf{H}_{2},\mathbf{G}_{2})$ is upper bounded by minimum
    generalized network sharing cut value of the network.
\end{remark}}

}

We state some useful properties of the Grank next.

\begin{lemma}
    Let $\mathbf{H}_{1},\mathbf{H}_{2},\mathbf{G}_{2}, \mathbf{A},\mathbf{B}$ and $\mathbf{C}$ be matrices with entries from a finite field $\mathbb{F},$ respectively having dimensions $P_1
    \times Q_1$, $P_2 \times Q_1,$ $P_2 \times Q_2, P_1 \times M, P_2 \times M$ and $P_2 \times N$, for
    positive integers $P_1, P_2, Q_1, Q_2, M,N$. Then the following properties hold. 

	\begin{enumerate}[(i)]
		\item Concatenation of columns to matrices does not reduce Grank.

    {$$\Grank\left([~\mathbf{H}_{1} ~ ~ \mathbf{A}~], [~\mathbf{H}_{2} ~~ \mathbf{B}~]
    ,[~\mathbf{G}_{2}~~\mathbf{C}~]\right) \geq \Grank(\mathbf{H}_{1},\mathbf{H}_{2},\mathbf{G}_{2}).$$ }
\item Concatenating $M$ column increases the Grank by at most $M$.
		$$ \Grank([\mathbf{H}_{1}~~\mathbf{A}], [\mathbf{H}_{2}~~\mathbf{B}], \mathbf{G}_{2}) \leq \Grank(\mathbf{H}_{1},\mathbf{H}_{2},\mathbf{G}_{2})+ M $$
		$$ \Grank(\mathbf{H}_{1}, \mathbf{H}_{2}, \left[\mathbf{G}_{2} ~~ \mathbf{C}\right]) \leq \Grank(\mathbf{H}_{1},\mathbf{H}_{2},\mathbf{G}_{2})+ N.$$
\item Concatenation of linearly dependent columns does not change the Grank. 
Suppose that 
$$\textrm{colspan}\begin{bmatrix}\mathbf{A} \\ \mathbf{B}\end{bmatrix} \subseteq  \textrm{colspan}\begin{bmatrix}\mathbf{H}_1 \\ \mathbf{H}_2\end{bmatrix}.$$
$$\textrm{colspan}\begin{bmatrix}\mathbf{C} \end{bmatrix} \subseteq  \textrm{colspan}\begin{bmatrix}\mathbf{G}\end{bmatrix},$$
	then 
$$\Grank\left([~\mathbf{H}_{1} ~ ~ \mathbf{A}~], [~\mathbf{H}_{2} ~~ \mathbf{B}~] ,[~\mathbf{G}_{2}~~\mathbf{C}~]\right) = \Grank(\mathbf{H}_{1},\mathbf{H}_{2},\mathbf{G}_{2}).$$
\end{enumerate}
   %\begin{align*}
   %    \mathbf{M} &= 
   %    \left[
   %    \begin{array}{c|c}
   %        \mathbf{H}_1 & \mathbf{G}_1 \\
   %        \mathbf{H}_2 & \mathbf{G}_2 
   %\end{array} \right], 
   %&
   % \mathbf{M}^\prime &= 
   % \left[
   % \begin{array}{cc|cc}
   %        \mathbf{H}_1 &\mathbf{A}_1 & \mathbf{G}_1 &\mathbf{B}_1 \\
           %\mathbf{H}_2 &\mathbf{A}_2 & \mathbf{G}_2 &\mathbf{B}_2
       %\end{array} 
  % \right]
  % \end{align*}
  % then, $\Grank(\mathbf{M}^\prime) \geq \Grank(\mathbf{M})$.
      \label{lma:grankcol}
\end{lemma}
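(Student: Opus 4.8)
The plan is to argue directly from the definition $\Grank(\mathbf{H}_1,\mathbf{H}_2,\mathbf{G}_2)=\rank\begin{bmatrix}\mathbf{H}_1\\\mathbf{H}_2\end{bmatrix}+\rank\begin{bmatrix}\mathbf{H}_2 & \mathbf{G}_2\end{bmatrix}-\rank(\mathbf{H}_2)$, tracking how each of the three rank terms moves when columns are appended. The only subtlety is that appending columns $\mathbf{B}$ after $\mathbf{H}_2$ can only \emph{increase} the negative term $\rank(\mathbf{H}_2)$, so for part (i) a term-by-term comparison fails and one needs a compensating bound. Everything follows from two elementary facts over $\mathbb{F}$: \textbf{(F1)} appending $k$ columns to a matrix never decreases its rank and raises it by at most $k$; \textbf{(F2)} for subspaces $W\subseteq W'$ and any subspace $U$ of a common space, $\dim(W+U)-\dim W\ge\dim(W'+U)-\dim W'$, and dually, for a linear map $\psi$ and subspaces $V\subseteq V'$ of its domain, $\dim\psi(V')-\dim\psi(V)\le\dim V'-\dim V$ — both because the intersection of a fixed subspace (respectively $\ker\psi$) with a growing subspace can only grow, by rank--nullity. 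I will apply the second form of (F2) through the coordinate projection $\pi:\mathbb{F}^{P_1+P_2}\to\mathbb{F}^{P_2}$ onto the last $P_2$ coordinates, for which $\pi$ maps the column space of $\begin{bmatrix}\mathbf{H}_1\\\mathbf{H}_2\end{bmatrix}$ onto $\Span(\mathbf{H}_2)$ and the column space of $\begin{bmatrix}\mathbf{H}_1 & \mathbf{A}\\\mathbf{H}_2 & \mathbf{B}\end{bmatrix}$ onto $\Span\begin{bmatrix}\mathbf{H}_2 & \mathbf{B}\end{bmatrix}$.

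I would dispatch (iii) first, as it is cleanest. The hypothesis $\Span\begin{bmatrix}\mathbf{A}\\\mathbf{B}\end{bmatrix}\subseteq\Span\begin{bmatrix}\mathbf{H}_1\\\mathbf{H}_2\end{bmatrix}$ says each column of $\begin{bmatrix}\mathbf{A}\\\mathbf{B}\end{bmatrix}$ is a linear combination of columns of $\begin{bmatrix}\mathbf{H}_1\\\mathbf{H}_2\end{bmatrix}$; restricting that same combination to the bottom $P_2$ rows yields $\Span(\mathbf{B})\subseteq\Span(\mathbf{H}_2)$. Hence appending $\mathbf{B}$ leaves $\rank\begin{bmatrix}\mathbf{H}_1\\\mathbf{H}_2\end{bmatrix}$, $\rank(\mathbf{H}_2)$, and $\rank\begin{bmatrix}\mathbf{H}_2 & \mathbf{G}_2\end{bmatrix}$ unchanged; together with the hypothesis $\Span(\mathbf{C})\subseteq\Span(\mathbf{G}_2)$ (I read the undefined ``$\mathbf{G}$'' in the statement as $\mathbf{G}_2$) this gives $\rank\begin{bmatrix}\mathbf{H}_2 & \mathbf{B} & \mathbf{G}_2 & \mathbf{C}\end{bmatrix}=\rank\begin{bmatrix}\mathbf{H}_2 & \mathbf{G}_2\end{bmatrix}$. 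All three terms are unchanged, so the two Granks coincide.

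For (i), put $a:=\rank\begin{bmatrix}\mathbf{H}_1 & \mathbf{A}\\\mathbf{H}_2 & \mathbf{B}\end{bmatrix}-\rank\begin{bmatrix}\mathbf{H}_1\\\mathbf{H}_2\end{bmatrix}$ and $c:=\rank\begin{bmatrix}\mathbf{H}_2 & \mathbf{B}\end{bmatrix}-\rank(\mathbf{H}_2)$. Applying (F2) to $\pi$ and the inclusion $\Span\begin{bmatrix}\mathbf{H}_1\\\mathbf{H}_2\end{bmatrix}\subseteq\Span\begin{bmatrix}\mathbf{H}_1 & \mathbf{A}\\\mathbf{H}_2 & \mathbf{B}\end{bmatrix}$ gives $c\le a$. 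Then
\[
\Grank\big([\mathbf{H}_1\,\mathbf{A}],[\mathbf{H}_2\,\mathbf{B}],[\mathbf{G}_2\,\mathbf{C}]\big)=\rank\begin{bmatrix}\mathbf{H}_1 & \mathbf{A}\\\mathbf{H}_2 & \mathbf{B}\end{bmatrix}+\rank\begin{bmatrix}\mathbf{H}_2 & \mathbf{B} & \mathbf{G}_2 & \mathbf{C}\end{bmatrix}-\rank\begin{bmatrix}\mathbf{H}_2 & \mathbf{B}\end{bmatrix},
\]
and substituting $\rank\begin{bmatrix}\mathbf{H}_1 & \mathbf{A}\\\mathbf{H}_2 & \mathbf{B}\end{bmatrix}=\rank\begin{bmatrix}\mathbf{H}_1\\\mathbf{H}_2\end{bmatrix}+a\ge\rank\begin{bmatrix}\mathbf{H}_1\\\mathbf{H}_2\end{bmatrix}+c$, then $\rank\begin{bmatrix}\mathbf{H}_2 & \mathbf{B} & \mathbf{G}_2 & \mathbf{C}\end{bmatrix}\ge\rank\begin{bmatrix}\mathbf{H}_2 & \mathbf{G}_2\end{bmatrix}$ by (F1), and $\rank\begin{bmatrix}\mathbf{H}_2 & \mathbf{B}\end{bmatrix}=\rank(\mathbf{H}_2)+c$, the two copies of $c$ cancel and the right-hand side is $\ge\Grank(\mathbf{H}_1,\mathbf{H}_2,\mathbf{G}_2)$.

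For (ii), the bound $\Grank(\mathbf{H}_1,\mathbf{H}_2,[\mathbf{G}_2\,\mathbf{C}])\le\Grank(\mathbf{H}_1,\mathbf{H}_2,\mathbf{G}_2)+N$ is immediate, since only $\rank\begin{bmatrix}\mathbf{H}_2 & \mathbf{G}_2\end{bmatrix}$ changes and it grows by at most $N$ by (F1). For the other bound, write $\Grank([\mathbf{H}_1\,\mathbf{A}],[\mathbf{H}_2\,\mathbf{B}],\mathbf{G}_2)=\Grank(\mathbf{H}_1,\mathbf{H}_2,\mathbf{G}_2)+a+b-c$ with $a,c$ as above and $b:=\rank\begin{bmatrix}\mathbf{H}_2 & \mathbf{B} & \mathbf{G}_2\end{bmatrix}-\rank\begin{bmatrix}\mathbf{H}_2 & \mathbf{G}_2\end{bmatrix}$; by (F1), $a\le M$, and by the first form of (F2) with $W=\Span(\mathbf{H}_2)\subseteq W'=\Span\begin{bmatrix}\mathbf{H}_2 & \mathbf{G}_2\end{bmatrix}$ and appended block $\mathbf{B}$ we get $b\le c$, so $a+b-c\le a\le M$. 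The main obstacle is simply keeping the directions of (F1)/(F2) straight — the negative term is exactly what makes (i) more than ``extra columns cannot hurt''. As a cross-check and shortcut, one can instead invoke the identity $\Grank(\mathbf{H}_1,\mathbf{H}_2,\mathbf{G}_2)=\min_{\mathbf{G}_1}\rank\begin{bmatrix}\mathbf{H}_1 & \mathbf{G}_1\\\mathbf{H}_2 & \mathbf{G}_2\end{bmatrix}$ from the Remark above: (i) follows by deleting the appended columns from any competitor matrix; both bounds in (ii) follow from (F1) inside the minimum (padding the top block with zeros for the $\mathbf{C}$ case); and (iii) follows by choosing, for the enlarged instance, the top block $[\mathbf{G}_1^{\star}\ \mathbf{G}_1^{\star}\mathbf{S}]$, where $\mathbf{G}_1^{\star}$ attains the minimum for the original instance and $\mathbf{C}=\mathbf{G}_2\mathbf{S}$, which leaves the rank unchanged.
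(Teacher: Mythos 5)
Your proof is correct, and it supplies the argument the paper omits: the paper only remarks that part (i) follows from submodularity of the rank function and that (ii)--(iii) follow from elementary rank properties of the $\Grank$ definition, without giving details. Your inequality $c\le a$ obtained via the projection $\pi$ onto the bottom $P_2$ coordinates, and the dual inequality $b\le c$ from the modularity of subspace dimension, are exactly the diminishing-returns form of rank submodularity made explicit, so your route is the one the paper gestures at rather than a genuinely different one. Two small points worth flagging: you correctly read the stray $\mathbf{G}$ in the hypothesis of (iii) as $\mathbf{G}_2$; and your closing ``shortcut'' via $\Grank(\mathbf{H}_1,\mathbf{H}_2,\mathbf{G}_2)=\min_{\mathbf{G}_1}\rank\begin{bmatrix}\mathbf{H}_1 & \mathbf{G}_1\\\mathbf{H}_2 & \mathbf{G}_2\end{bmatrix}$ is clean, but note the paper states that identity only as an unproven Remark, so it is best kept as a cross-check rather than as the primary argument.
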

{ Statement $(i)$ follows from submodularity of the rank function. Statements $(ii)$ and $(iii)$ follow from
elementary properties of the rank of a matrix and the definition of the $\Grank$. We omit a proof of the lemma here. } 

Next, we state a lemma that will be useful later on in generating our linear coding solutions.

\begin{lemma}
    Let $\mathbf{H}_{1},\mathbf{H}_{2},\mathbf{G}_{2}, \mathbf{A},\mathbf{B}$ be matrices with entries from a finite field $\mathbb{F},$ respectively having  of dimensions $P_1 \times
    Q_1$, $P_2 \times Q_1,$ $P_2 \times Q_2, P_1 \times M$ and $P_2 \times M$. Suppose that $$
    \Grank([\mathbf{H}_{1}~~\mathbf{A}], [\mathbf{H}_{2}~~\mathbf{B}], \mathbf{G}_{2}) >
    \Grank(\mathbf{H}_{1},\mathbf{H}_{2},\mathbf{G}_{2})$$
	Then, there exists a $M \times 1$ column vector $\mathbf{f}$ such that 
\begin{equation}
    \Grank([\mathbf{H}_{1}~~\mathbf{A}\mathbf{f}], [\mathbf{H}_{2}~~\mathbf{B}\mathbf{f}], \mathbf{G}_{2}) = \Grank(\mathbf{H}_{1},\mathbf{H}_{2},\mathbf{G}_{2})+1\label{eq:resultingcondition}
\end{equation}
Furthermore, if 
\begin{itemize}
		\item[(i)] $\textrm{colspan}(\mathbf{B}) \subset \textrm{colspan}(\left[\mathbf{H}_{2}~~\mathbf{G}_{2}\right])$
		\item[(ii)] $\textrm{colspan}(\mathbf{B}) \not\subset \textrm{colspan}(\mathbf{H}_{2}),$
	\end{itemize}
then, choosing a vector $\mathbf{v}$ randomly from the null space of the column space of $\left[\mathbf{H}_2 ~ ~
\mathbf{B}\right]$ and 
%setting $\mathbf{f}$ to be the last $M$ columns of $\mathbf{v}$ satisfies
{
setting $\mathbf{f}$ to be the last $M$ entries of $\mathbf{v}$ satisfies
}
(\ref{eq:resultingcondition}) with a probability that approaches $1$ as the field size $|\mathbb{F}|$ increases. In this case, the
vector $\mathbf{f}$ satisfies the following property: $\mathbf{B}\mathbf{f} \in \textrm{colspan}(\mathbf{H}_2)$.  If
$(i)$ or $(ii)$ are not satisfied, then picking the entries of $\mathbf{f}$ randomly and uniformly over the field
satisfies (\ref{eq:resultingcondition}) with a probability that approaches $1$ as the field size increases.

\label{lem:alignment}
\end{lemma}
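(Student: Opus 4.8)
The plan is to reduce the Grank statement to a rank statement via the characterization in the first remark after the definition of Grank, namely $\Grank(\mathbf{H}_1,\mathbf{H}_2,\mathbf{G}_2)=\min_{\mathbf{G}_1}\rank\begin{bmatrix}\mathbf{H}_1 & \mathbf{G}_1\\ \mathbf{H}_2 & \mathbf{G}_2\end{bmatrix}$. First I would establish the bare existence of $\mathbf{f}$: by Lemma \ref{lma:grankcol}(ii) the hypothesis forces $\Grank([\mathbf{H}_1\ \mathbf{A}],[\mathbf{H}_2\ \mathbf{B}],\mathbf{G}_2)=\Grank(\mathbf{H}_1,\mathbf{H}_2,\mathbf{G}_2)+1$ exactly. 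I would argue that the Grank, as a function of the appended column $(\mathbf{A}\mathbf{f},\mathbf{B}\mathbf{f})$ ranging over the column span of $\begin{bmatrix}\mathbf{A}\\ \mathbf{B}\end{bmatrix}$, takes only the two values $g$ and $g+1$ (monotonicity from (i), and the $+1$ cap from (ii) of Lemma \ref{lma:grankcol}), and that the set of $\mathbf{f}$ achieving $g$ is a proper subvariety cut out by the vanishing of certain minors; hence a random $\mathbf{f}$, or an $\mathbf{f}$ coming from a generic vector in any sufficiently rich subspace, avoids it with probability tending to $1$ as $|\mathbb{F}|\to\infty$ by Schwartz–Zippel. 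This handles the "$(i)$ or $(ii)$ not satisfied" clause directly: pick $\mathbf{f}$ uniformly at random.

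The substantive part is the alignment clause. Assume now $\Span(\mathbf{B})\subseteq\Span([\mathbf{H}_2\ \mathbf{G}_2])$ but $\Span(\mathbf{B})\not\subseteq\Span(\mathbf{H}_2)$. I would pick $\mathbf{v}$ uniformly at random from $\mathrm{Ker}([\mathbf{H}_2\ \mathbf{B}])$, write $\mathbf{v}=(\mathbf{v}',\mathbf{f})$ with $\mathbf{f}$ its last $M$ entries, so that $\mathbf{H}_2\mathbf{v}'+\mathbf{B}\mathbf{f}=\mathbf{0}$, i.e. $\mathbf{B}\mathbf{f}=-\mathbf{H}_2\mathbf{v}'\in\Span(\mathbf{H}_2)$; this is the claimed alignment property, and it is deterministic. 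It remains to show $\Grank([\mathbf{H}_1\ \mathbf{A}\mathbf{f}],[\mathbf{H}_2\ \mathbf{B}\mathbf{f}],\mathbf{G}_2)=g+1$ with high probability. Because $\mathbf{B}\mathbf{f}\in\Span(\mathbf{H}_2)$, the second and third rank terms in the Grank of the augmented matrices are unchanged (appending $\mathbf{B}\mathbf{f}$ adds nothing to $\rank[\mathbf{H}_2\ \mathbf{G}_2]$ or to $\rank(\mathbf{H}_2)$), so the increase must come entirely from the first term: I need $\rank\begin{bmatrix}\mathbf{H}_1 & \mathbf{A}\mathbf{f}\\ \mathbf{H}_2 & \mathbf{B}\mathbf{f}\end{bmatrix}=\rank\begin{bmatrix}\mathbf{H}_1\\ \mathbf{H}_2\end{bmatrix}+1$, i.e. the column $(\mathbf{A}\mathbf{f},\mathbf{B}\mathbf{f})$ is outside $\Span\begin{bmatrix}\mathbf{H}_1\\ \mathbf{H}_2\end{bmatrix}$. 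The key observation is that the hypothesis $\Grank$ strictly increases, combined with the fact that the only way the negative term $\rank(\mathbf{H}_2)$ could have absorbed the increase is ruled out because $\Span(\mathbf{B})\not\subseteq\Span(\mathbf{H}_2)$ does not force $\Span([\mathbf{H}_2\ \mathbf{B}])$ to exceed $\Span([\mathbf{H}_2\ \mathbf{G}_2])$ — so the strict increase was "charged" to the first rank term already, meaning $\Span\begin{bmatrix}\mathbf{H}_1 & \mathbf{A}\\ \mathbf{H}_2 & \mathbf{B}\end{bmatrix}\supsetneq\Span\begin{bmatrix}\mathbf{H}_1\\ \mathbf{H}_2\end{bmatrix}$. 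I would then show that the "bad" set of $\mathbf{v}\in\mathrm{Ker}([\mathbf{H}_2\ \mathbf{B}])$ for which $(\mathbf{A}\mathbf{f},\mathbf{B}\mathbf{f})$ nevertheless lands in $\Span\begin{bmatrix}\mathbf{H}_1\\ \mathbf{H}_2\end{bmatrix}$ is a proper linear subspace of the kernel — it is proper precisely because some witness column in $\Span\begin{bmatrix}\mathbf{A}\\ \mathbf{B}\end{bmatrix}$ escapes, and one checks that witness corresponds to a kernel vector (using $\Span(\mathbf{B})\subseteq\Span([\mathbf{H}_2\ \mathbf{G}_2])$ to produce the needed kernel element) — hence a uniformly random $\mathbf{v}$ avoids it with probability $1-O(1/|\mathbb{F}|)$.

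The main obstacle I anticipate is the bookkeeping in that last step: precisely identifying which column of $\begin{bmatrix}\mathbf{A}\\ \mathbf{B}\end{bmatrix}$ that achieves the Grank increase can be realized as $(\mathbf{A}\mathbf{f},\mathbf{B}\mathbf{f})$ for an $\mathbf{f}$ extending to a kernel vector of $[\mathbf{H}_2\ \mathbf{B}]$, and verifying that this makes the bad locus a proper subspace rather than the whole kernel. The delicate point is that the alignment constraint $\mathbf{B}\mathbf{f}\in\Span(\mathbf{H}_2)$ cuts down the available $\mathbf{f}$'s, so I must confirm the Grank-increasing direction survives inside this constrained family; this is exactly where the hypotheses $\Span(\mathbf{B})\subseteq\Span([\mathbf{H}_2\ \mathbf{G}_2])$ and $\Span(\mathbf{B})\not\subseteq\Span(\mathbf{H}_2)$ get used in tandem, and I would isolate it as a short sub-claim about dimensions of the relevant subspaces of $\mathrm{Ker}([\mathbf{H}_2\ \mathbf{B}])$ before invoking Schwartz–Zippel to finish.
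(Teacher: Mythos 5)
Your outline correctly identifies the right reduction and the right random-sampling step, but it leaves the crux of the lemma as an unproven sub-claim. The place where the proof actually lives is exactly the thing you flag at the end as something you "would isolate as a short sub-claim about dimensions." Concretely: once you restrict $\mathbf{f}$ to come from $\mathrm{Ker}\bigl([\mathbf{H}_2\ \mathbf{B}]\bigr)$ (so that $\mathbf{B}\mathbf{f}\in\Span(\mathbf{H}_2)$ automatically), you have confined $\mathbf{f}$ to the subspace
$$W=\left\{\mathbf{f}\in\mathbb{F}^{M}:\mathbf{B}\mathbf{f}\in\Span(\mathbf{H}_2)\right\},$$
and the "bad" set is
$$V=\left\{\mathbf{f}\in\mathbb{F}^{M}:\begin{bmatrix}\mathbf{A}\mathbf{f}\\ \mathbf{B}\mathbf{f}\end{bmatrix}\in\Span\begin{bmatrix}\mathbf{H}_1\\ \mathbf{H}_2\end{bmatrix}\right\}\subseteq W.$$
You need $V\subsetneq W$; your "key observation" paragraph does not establish this. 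The fact that some column of $\begin{bmatrix}\mathbf{A}\\ \mathbf{B}\end{bmatrix}$ escapes $\Span\begin{bmatrix}\mathbf{H}_1\\ \mathbf{H}_2\end{bmatrix}$ is immediate from the Grank increase (via Lemma~\ref{lma:grankcol}(iii)), but it only says $V\subsetneq\mathbb{F}^M$; it says nothing about whether $V$ exhausts the smaller subspace $W$. The "charging the increase to the first rank term" heuristic is where the argument goes soft: appending $\mathbf{B}$ may simultaneously increase the first and third rank terms, and whether anything is left over after the cancellation is exactly the question.

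What the paper does, and what your sketch is missing, is the explicit dimension count. Under condition~(i) one has $\rank[\mathbf{H}_2\ \mathbf{B}\ \mathbf{G}_2]=\rank[\mathbf{H}_2\ \mathbf{G}_2]$, and a direct computation (the paper carries it out by a basis-extension construction $\Gamma,\alpha_H,\alpha_B,\beta_B$) gives
$$\dim W-\dim V = \left(\rank\begin{bmatrix}\mathbf{H}_1 & \mathbf{A}\\ \mathbf{H}_2 & \mathbf{B}\end{bmatrix}-\rank\begin{bmatrix}\mathbf{H}_1\\ \mathbf{H}_2\end{bmatrix}\right)-\left(\rank[\mathbf{H}_2\ \mathbf{B}]-\rank\mathbf{H}_2\right)=\Grank\text{ increase}>0,$$
so $V$ is a proper subspace of $W$ and the random choice works. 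This identity is the entire content of the alignment clause; without proving it, the proposal does not yet establish the lemma. (As a side remark, noticing that $V\subseteq W$ always holds — since membership in $\Span\begin{bmatrix}\mathbf{H}_1\\ \mathbf{H}_2\end{bmatrix}$ in particular forces the bottom block into $\Span(\mathbf{H}_2)$ — makes the final Schwartz--Zippel step cleaner and would sharpen your write-up once the dimension count is in place.)
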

Since adding a single column can increase the $\Grank$ by at most $1$, the vector $\mathbf{f}$ that satisfies
$\eqref{eq:resultingcondition}$ maximizes the $\Grank([\mathbf{H}_{1}~~\mathbf{A}\mathbf{f}],
[\mathbf{H}_{2}~~\mathbf{B}\mathbf{f}], \mathbf{G}_{2}).$  The vector $\mathbf{f}$ will be useful in obtaining the code
construction in the recursive coding routine.  The approach of choosing $\mathbf{f}$ randomly to satisfy
\eqref{eq:resultingcondition} follows the spirit of random linear network coding \cite{Ho_etal_Award}. 
 
\subsection{Destination reduction}
\label{subsec:destreduction}
The destination reduction algorithm takes the original problem $\Omega = (\mathcal{G}, \mathcal{S}, \mathcal{T})$ and
generates a sequence of $N+1$ ordered two-unicast-Z network problems, for some $N \in \mathbb{Z}^+$, starting with the
original problem itself. We denote the sequence of problems as $\mathbb{P} = \left(\Omega^{(0)}, \Omega^{(1)},
\Omega^{(2)}, \dots, \Omega^{(N)} \right)$, where $\Omega^{(0)} = \Omega$ and $\Omega^{(i)} = (\mathcal{G}, \mathcal{S},
\mathcal{T}^{(i)})$ with $\mathcal{T}^{(i)} = \left\{ T_1^{(i)}, T_2^{(i)} \right\}$ being the destination sets for the
problem number $i$. In particular, all the problems have the same underlying graph $\mathcal{G}$ and source set
$\mathcal{S}$, but different destination sets, i.e., $\mathcal{T}^{(i)} \neq \mathcal{T}^{(j)}, i\neq j$. The algorithm
is formally described in Algorithm \ref{alg:dest}, in which the key procedure is to sequentially generate
$\Omega^{(i+1)}$ from the previous problem $\Omega^{(i)}$. We describe the process informally here.

\begin{algorithm}[h]
    \caption{Destination reduction algorithm} \label{alg:dest}
    \begin{algorithmic}[1]
        \Procedure{Reduction}{$\Omega^{(0)}$}
    \State $\mathbb{P} \gets ()$
    \State add $\Omega^{(0)}$ to $\mathbb{P}$
    \State $i \gets 0$
    \State $S \gets \left\{ e: \Tail(e) \in \mathcal{S} \right\}$
    \While{$T^{(i)}_1 \cup T^{(i)}_2 \not \subseteq S$}
    \State $E \gets \left\{ e: \arg \max_{e \in T^{(i)}_1 \cup T^{(i)}_2} \Ord(e)\right\}$
    \State $E_j \gets E \cap T^{(i)}_j, j = 1, 2$ 
    \State $v \gets \Tail(E)$ 
    \For{$j \gets 1, 2$}
        \If{$E_j \neq \varnothing$}
        \State $T^{(i+1)}_j \gets \left(T^{(i)}_j\backslash E_j \right) \cup \In(v)$
        \Else
        \State $T^{(i+1)}_j \gets T^{(i)}_j$
        \EndIf
    \EndFor
    \State $\mathcal{T}^{(i+1)} \gets \left\{ T^{(i+1)}_1, T^{(i+1)}_2 \right\}$
    \State $\Omega^{(i+1)} \gets (\mathcal{G}, \mathcal{S},
    \mathcal{T}^{(i+1)})$
    \State add $\Omega^{(i+1)}$ into $\mathbb{P}$
    \State $i \gets i+1$
    \EndWhile
    \State \textbf{return} $\mathbb{P}$
    \EndProcedure
    \end{algorithmic}
\end{algorithm}
\begin{figure}[tbp]
    \centering
    \includegraphics[scale=0.7]{./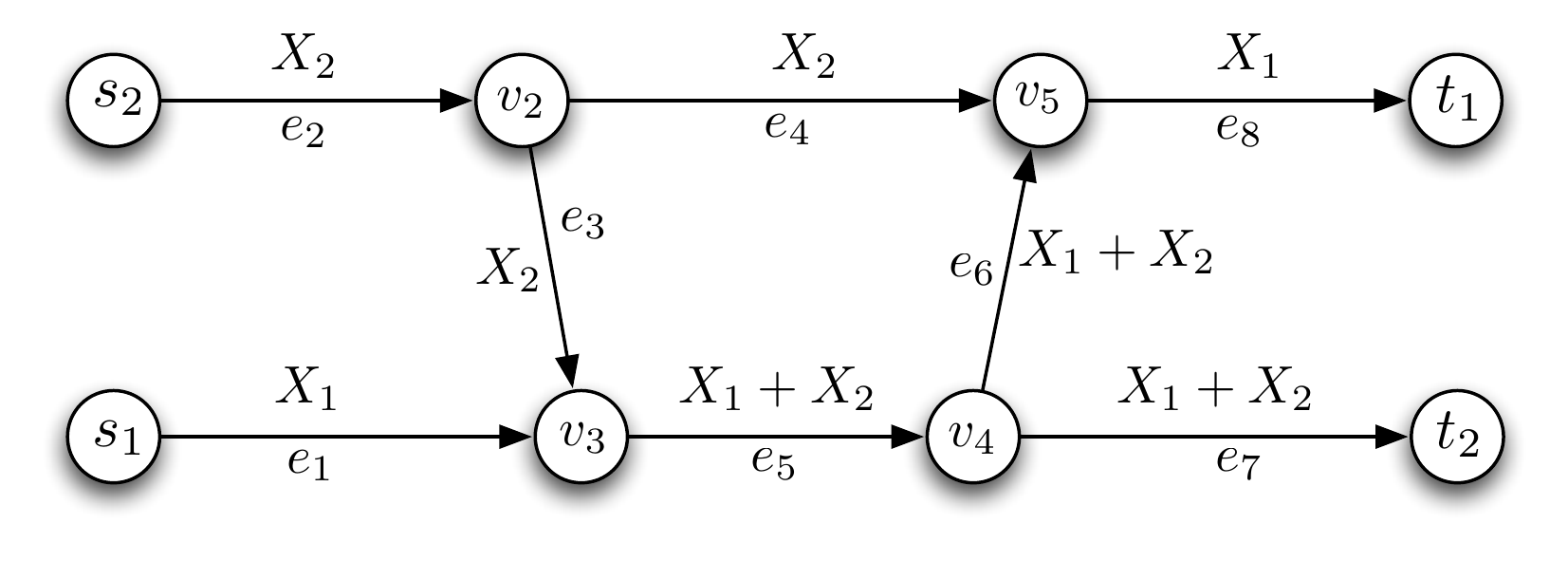}
    \caption{An example network used to demonstrate our algorithm operation}
    \label{fig:example1}
\end{figure}

{
Recall that in a directed acyclic graph, there is a total ordering $\Ord$ on the vertices of the graph. Also recall that
$\Ord$ induces a partial ordering on the edges, where the set of edges of the same topological order share a common tail
node. In brief, the destination reduction algorithm obtains problem $\Omega^{(i+1)}$ from $\Omega^{(i)}$ as follows.  We
find all the highest topologically ordered edges in the union of the two destination sets. In each destination set, if it
contains any of these edges, we replace them with their immediate parent edges. 
} 
{
Specifically, 
%the destination
%reduction algorithm obtains the problem $\Omega^{(i+1)}$ from $\Omega^{(i)}$ as follows.
given $\Omega^{(i)} = (\mathcal{G}, \mathcal{S}, \mathcal{T}^{(i)})$, let $E$ denote the set of edges in $T_1^{(i)} \cup T_2^{(i)}$ with the highest topological order. In other words, all edges in $E \subset T_1^{(i)} \cup T_2^{(i)}$ have the same topological order, and a strictly higher topological order with respect to every edge in $T_{1}^{(i)} \cup T_{2}^{(i)}\backslash E$.    
} For $j \in \left\{ 1, 2 \right\}$, let $E_j = T^{(i)}_j \cap {E}$. For each destination $j \in \left\{ 1, 2
\right\}$, if $T^{(i)}_j$ does not contain any highest topological ordered edge, i.e., if  $E_j = \varnothing$, then the
destination set remains unchanged in $\Omega^{(i+1)}$, i.e., $T^{(i+1)}_j = T^{(i)}_j$. Otherwise, all edges in $E_j$ are
removed in $T^{(i)}_j$ and replaced by $\In(v)$ to produce the new destination set $T^{(i+1)}_j$ in
$\mathcal{T}^{(i+1)}$, {that is, $T_j^{(i+1)} = (T_j^{(i)}\backslash E_{j})\cup \textrm{In}(v)$. }

\begin{figure}[h]
    \centering
    \includegraphics[scale=0.6]{./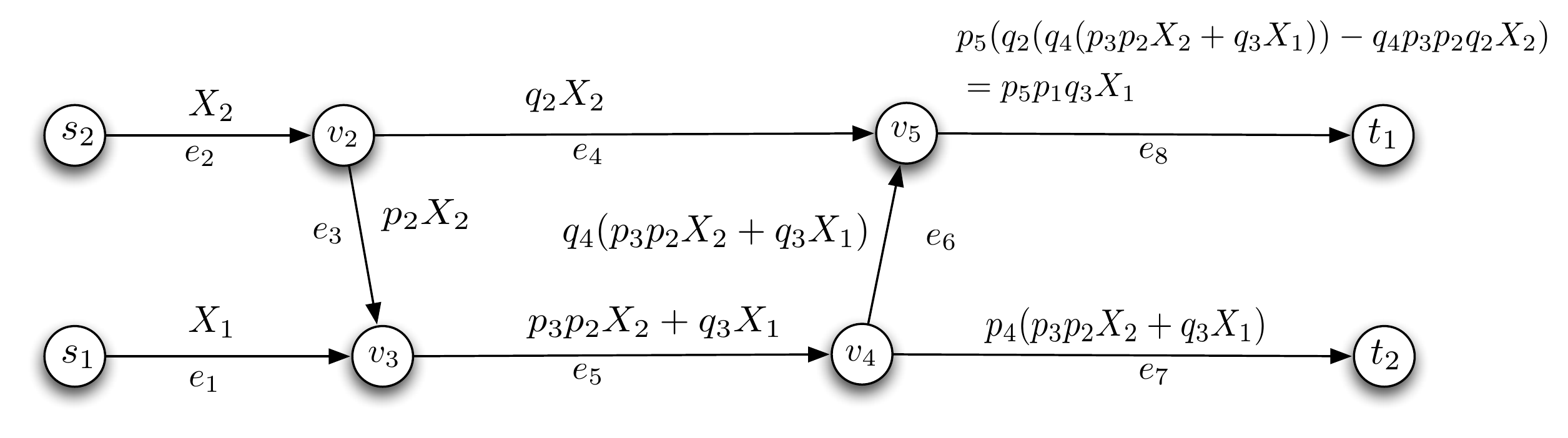}
    \begin{tabular}{|c|c|c|c|c|c|}
        \hline
        & $i=0$ & $i=1$ & $i=2$ & $i=3$ & $i=4$ \\
        \hline
        Destination $T_1^{(i)}$ & $e_8$ & $e_4, e_6$ & $e_4, e_5$ & $e_1, e_3, e_4$ & $e_1, e_2$ \\
        \hline
        Destination $T_2^{(i)}$ & $e_7$ & $e_7$ & $e_5$ & $e_1, e_3$ & $e_1, e_2$ \\
        \hline
    \end{tabular}
    \caption{The destination and recursive coding algorithms shown for the network of Fig. \ref{fig:example1}. The
scalars $p_2, q_2, p_3, p_4, q_4, p_5$ are chosen randomly and independently of each other. Note that the recursive
coding algorithm operating at Stage $0$ performs alignment step at vertex $v_5$.}
\label{fig:recursive_coding}
\end{figure}

{Before proceeding to describing our coding scheme, we list some useful and instructive properties of the destination
reduction algorithm; these properties can be easily checked for the example in Fig. \ref{fig:recursive_coding}.}

{
\begin{enumerate}[Property (i)]%\setlength{\itemindent}{0.6 in}

\item The set of edges $T_j^{(i)} \backslash T_j^{(i+1)}$ has a common tail node $v$, which also forms the common head
    node of all the edges in $T_{j}^{(i+1)} \backslash T_j^{(i)}.$ Furthermore, there are only two possibilities:
    $T_j^{(i+1)} \backslash T_j^{(i)}$ is empty or $T_j^{(i+1)} \backslash T_j^{(i)} = \In(v)$. 
\item An edge in the graph which communicates to at least one edge in $T_1$ appears in $T_1^{(i)}$ for some value 
    $i \in \{0,1,2,\ldots,N\}$. An edge which communicates to at least one edge in $T_2$ appears in $T_2^{(i)}$ for some value of $i$. Consider
    an edge $e$ which communicates to at least one edge in $T_1$ and at least one edge in $T_2.$ Let $k_1$ denote the
    largest number such that the edge $e$ belongs to $T_1^{(k_1)}.$ Let $k_2$ denote the largest number such that the
    edge $e$ belongs to $T_2^{(k_2)}.$ Then $k_1=k_2$.    
\item Consider a two-unicast-$Z$ problem $\Omega$, where every edge in the graph is connected to at least one of the destinations,
    that is, there is a path from every edge to at least one edge in $T_1 \cup T_2$. Then, the set of all edges have a
    lower topological order with respect to $\left(T_1^{(i)} \cup T_2^{(i)}\right)  \backslash \left(T_1^{(i+1)} \cup
    T_2^{(i+1)}\right)$ in $\mathcal{G}$ is equal to $$\bigcup_{i+1 \leq k \leq N} T_{1}^{(k)} \cup T_{2}^{(k)}.$$
\item In the $\Omega^{(N)},$ the destination edges are collocated with the source edges. That is $T_1^{(N)} \cup
    T_2^{(N)} \subseteq S_1 \cup S_2.$ Furthermore, if every edge emanating from the source nodes communicates with at
    least one of the destination nodes, then $T_1^{(N)} \cup T_2^{(N)} = S_1 \cup S_2.$
    {
\item For all $i$, $0 \leq i \leq N$ and $j = 1,2$, the destination set $T_j^{(i)}$ forms a cut set between both sources
    $s_1, s_2$ and the destination set $T_j$ for the original problem $\Omega$.
}
\end{enumerate}
Properties (i) and (iv) can be verified by examining the algorithm. We prove Properties (ii), (iii) and (v) in Appendix
\ref{app:destreduction}. It is instructive to note that Properties (ii) and (iii) imply that the collection of sets
$$\left\{\left(T_1^{(i)} \cup T_2^{(i)}\right)\backslash \left(T_1^{(i+1)} \cup
T_2^{(i+1)}\right):i=0,1,2,\ldots,N\right\}$$ forms a partition of the set of edges of the graph. Furthermore, this
partition is the same as the partition implied by the topological ordering on the edges, i.e., all edges of the same
topological order belong to one unique member of this partition. As we observe next in our description of the recursive
coding algorithm, the recursion at depth $j$ designs the local coding co-efficients for the edges in the $j$th member of
this partition.

}

\subsection{Recursive coding construction}
\label{subsec:recursivecoding}
%{\color{blue}
We describe the recursive coding algorithm formally in Algorithm \ref{alg:recurs}. We present an informal description
here. Without loss of generality, we only consider networks where every source edge communicates with at least one edge
in $T_1 \cup T_2$. The first step of the recursive coding construction begins with a trivial coding scheme for
$\Omega^{(N)}$. In particular, note that Property (iv) states that $T_1^{(N)} \cup T_2^{(N)}$ is equal to $S_1 \cup
S_2$. We set the local coding vector for an edge in $T_1^{(N)} \cup T_2^{(N)}$ to be the vector with co-efficient $1$
corresponding to the edge and $0$ elsewhere. We assume that the local coding vectors for all the edges outside of
$T_1^{(N)} \cup T_2^{(N)}$ to be indeterminate at this point; the local coding co-efficients for these edges will be
determined using the recursive coding algorithm. 
{
Each step of the recursion is referred to as a \emph{stage}. The recursive algorithm has $N$ stages, where at stage $i$,
the algorithm generates the code for $\Omega^{(i)}$ using the coding scheme for the previous stage
for $\Omega^{(i+1)}$.
}
In particular, the recursive coding algorithm accomplishes the
following: Given a linear coding scheme for the $(i+1)$th stage, that is, for $\Omega^{(i+1)},$ the algorithm at the $i$th
stage constructs a linear coding scheme for $\Omega^{(i)}$. Starting with the trivial coding scheme for $\Omega^{(N)}$,
our algorithm recursively constructs coding schemes for the problems $\Omega^{(N-1)}, \Omega^{(N-2)},\ldots,
\Omega^{(1)},$ which eventually leads to a coding scheme for the original problem $\Omega = \Omega^{(0)}$. 
{
Next we focus on the coding algorithm at stage $i$ assuming a linear coding scheme for $\Omega^{(i+1)}$ is given in the
previous stage.
}
%For convenience, we use stages to denote the status of algorithm as
%the recursive coding takes places. At stage $i$, the algorithm finishes coding for $\Omega^{(i)}$ from the linear coding
%shceme generated previously for $\Omega^{(i+1)}$.
%Each step of the recursion is referred to as a \emph{stage}. The recursive algorithm has $N$ stages, where at stage $i$,
%the algorithm generates the linear code $\Omega^{(i)}$ using the linear coding scheme generated in the previous stage
%for $\Omega^{(i+1)}$.

%The recursive coding algorithm takes as input, a code for problem
%	$\Omega^{(i+1)} \in \mathbb{P}, 0 \leq i \leq N-1$, and generates as an output, a code for the
%problem $\Omega^{(i)}$. We define some subsets of edges that will be used by the
%algorithm in this process. Consider the destination sets of problems
%$\Omega^{(i+1)}$ and $\Omega^{(i)}$. For 

A solution to the problem $\Omega^{(i+1)}$ will describe local coding co-efficients for all the edges in $\bigcup_{i+1
\leq k \leq N} T_1^{(i+1)} \cup T_{2}^{(i+1)},$ and leave the co-efficients for the remaining edges to be indeterminate.
Given a linear coding solution for the problem $\Omega^{(i+1)},$ the coding solution for the problem $\Omega^{(i)}$
inherits the linear coding co-efficients from the solution to $\Omega^{(i+1)}$ for all edges in $\bigcup_{i+1 \leq k
\leq N} T_1^{(i+1)} \cup T_{2}^{(i+1)}.$ To complete the description for a solution to $\Omega^{(i)}$, the algorithm
specifies the local coding co-efficients for edges in $\left(T_1^{(i)}\cup T_2^{(i)}\right) \backslash
\left(T_1^{(i+1)}\cup T_2^{(i+1)} \right).$ Because of Properties (ii) and (iii) of the destination reduction algorithm,
we note that specifying local coding co-efficients for edge in $\left(T_1^{(i)}\cup T_2^{(i)}\right) \backslash
\left(T_1^{(i+1)}\cup T_2^{(i+1)} \right)$ suffices to specify the global coding-coefficients for these edges as well,
since all the edges of which have a lower topological order with respect to this set have been assigned coding
co-efficients in the solution to $\Omega^{(i+1)}.$

We use the following notation in our description. For $j = 1,2$, let
\begin{align}
    U^{(i)}_j &= T^{(i)}_j  \cap T^{(i+1)}_j, & 
    I^{(i)}_j &=  T^{(i+1)}_j \backslash U^{(i)}_j,&
    O^{(i)}_j &=  T^{(i)}_j \backslash U^{(i)}_j.
\label{eq:notation1}
\end{align}

Recall from Property (i) that  all the edges in $T_j^{(i)} \backslash T_j^{(i+1)}$ have a common tail node $v$, which is also the head of all the edges in $T_{j}^{(i+1)} \backslash T_{j}^{(i)}$. The set $I_j^{(i)}$ is therefore contained in the set of \emph{incoming} edges on to this node $v$. The set $O_{j}^{(i)}$ is contained in the \emph{outgoing} edges from $v$. Based on Property (i), we observe that if $I_{1}^{(i)} \neq \phi, I_{2}^{(i)} \neq \phi,$ then $I_{1}^{(i)} = I_{2}^{(i)} = \textrm{In}(v)$. The set $U^{(i)}_j$ is the set of \emph{unchanged} destination edges between $T_j^{(i+1)}$ and $T_j^{(i)}$. 
 We divide $O^{(i)}_2$ into two disjoint subsets, $A^{(i)}_2$ and $B^{(i)}_2$, such that $O^{(i)}_2 =
A^{(i)}_2 \cup B^{(i)}_2$, where
\begin{align}
    A^{(i)}_2 &= O^{(i)}_2 \cap O^{(i)}_1 \, , & 
    B^{(i)}_2 &= O^{(i)}_2 \backslash O^{(i)}_1.
\label{eq:notation2}
\end{align}
Note that Property (ii) implies that $B_2^{(i)}$ contains edges that communicate with at least one edge in $T_2$ but not
$T_1$. Similarly, $A_2^{(i)}$ contains edges that communicate with at least one edge $T_1$ and at least one edge in
$T_2$. It is useful to note that $A_2^{(i)} \neq \phi \Rightarrow I_1^{(i)} \neq \phi,I_2 \neq \phi \Rightarrow
I_{1}^{(i)} = I_2^{(i)}.$ Since $B_2^{(i)}$ and $O_1^{(i)}$ form a partition of the set $\left(T_1^{(i)}\cup
T_2^{(i)}\right) \backslash \left(T_1^{(i+1)}\cup T_2^{(i+1)} \right),$ we specify local linear coding co-efficients for
$B_2^{(i)}$ and $O_1^{(i)}$ in the recursive coding algorithm. 

%Specifically, note that the edges in $O_j^{(i)}, j \in \{1,2\}$ have a higher topological order with respect to $T_1^{(k)} \cup T_2^{(k)}$ for all $k > i$, and are therefore indeterminate in the solution to $\Omega^{(i+1)}$.  The goal of our recursive coding algorithm is to determine the linear coding for $O_{1}^{(i)}$ and $O_{2}^{(i)}$ from a solution to $\Omega^{(i+1)}$. In other words, the goal of the recursive coding algorithm is to obtain the local coding co-efficients for the edges emanating from vertex $v.$
{
%Recall that when the destination reduction algorithm generates
%$\mathcal{T}^{(i+1)}$ from $\mathcal{T}^{(i)}$, it removes the highest topological
%ordered edges, all of which share a common tail node $v$, and adds the incoming
%edges of $v$ into the destination sets that have edges removed.  

%it suffices to obtain the local coding vectors for all the outgoing edges $O_{j}^{(i)}$. In fact, Property [??] implies that the linear coding solution of    }
\begin{algorithm}[tbp]
    \caption{Recursive Coding Algorithm} \label{alg:recurs}
    \begin{algorithmic}[1]
	    \Procedure{Recursive Coding}{$\mathbb{P},\mathbb{F}$} \Comment{$\mathbb{F}$ represents the field over which the coding is performed}
	    \State Denote $\mathbb{P}$ by $(\Omega^{(i)},\Omega^{(i+1)},\ldots,\Omega^{(N)})$. 
	    \State Denote $\Omega^{(i)} = (\mathcal{G},\mathcal{S},\{T_1^{(i)},T_2^{(i)}\})$ and use the notation (\ref{eq:notation1}),(\ref{eq:notation2}).
	    \State Denote $\mathcal{E}^{(k)} = {\cup_{j=k}^{N}T_1^{(j)}\cup T_2^{(j)}}$ for $k \in \{i,i+1,\ldots,N-1\}$

	    \State \If{ $\textrm{length}(\mathbb{P}) = 1$} {return $\mathbf{I}_{|\mathcal{E}^{(N)}|}$}
	    \EndIf
	    \State $\mathbf{F} = \mathbf{0}_{|\mathcal{E}^{(i)}|\times |\mathcal{E}^{(i)}|}$
	    \State $\mathbf{F}^{\mathcal{E}^{(i+1)},\mathcal{E}^{(i+1)}} = \Call{Recursive Coding}{(\Omega^{(i+1)}, \ldots, \Omega^{(N)})}$  \Comment{The recursion}
	    \State \Comment{In the next few steps, we will describe coding co-efficients $\mathbf{F}^{I_{j}^{(i)},O_{j}^{(i)}}, j \in \{1,2\}$}
	    \State \If{$B_2^{(i)} \neq \varnothing$}  \Comment{Phase 1}
	    \State $\mathbf{F}^{I_{2}^{(i)},B_2^{(i)}} \leftarrow$ \textrm{Uniformly random from the field} $\mathbb{F}$ 
	    \EndIf
	    \State $\overline{O} = \overline{A} = \phi$ \Comment{Temporary variables (sets) used in the for loop next}
	    \State \For{$e \in O_1^{(i)}$} \Comment{Phase 2: Encoding the edges in $O_1^{(i)}$ one edge at a time.} 
	    \State $\overline{O} = \overline{O} \cup e, \overline{A} = \overline{O} \cap A_2^{(i)}$ \Comment{$\overline{O}$ and $\overline{A}$ respectively represent the subsets of $O_{1}^{(i)}$ and $A_2^{(i)}$. In the next few steps, we will find the coding co-efficients for the edges in $\overline{O},\overline{A}$}
	    \State \If{ 
$\Grank\left(\mathbf{H}_{1}^{T_1^{(i+1)}},\mathbf{H}_{2}^{T_1^{(i+1)}},\mathbf{G}_{2}^{T_2^{(i+1)}}\right) >
\Grank\left(\left[\mathbf{H}_1^{U^{(i)}_1} ~ ~ \mathbf{H}_1^{\overline{O}}\right],\left[\mathbf{H}_2^{U^{(i)}_1} ~ ~ \mathbf{H}_2^{\overline{O}}\right],\left[ \mathbf{G}_2^{U^{(i)}_2} ~ ~ \mathbf{G}_2^{\overline{A}} ~ ~\mathbf{G}_2^{B^{(i)}_2}\right]\right)$
\State $$\&~~\Span\left(\mathbf{H}_2^{I^{(i)}_1}\right) \not \subset
\Span\left(
\begin{bmatrix}
    \mathbf{H}_2^{U^{(i)}_1} & \mathbf{H}_2^{\overline{O}}
\end{bmatrix}
\right)$$ $$\&~ ~
\Span\left(\mathbf{H}_2^{I^{(i)}_1}\right) \subset
\Span\left(
\begin{bmatrix}
    \mathbf{H}_2^{U^{(i)}_1} & \mathbf{H}_2^{\overline{O}} & \mathbf{G}_2^{U^{(i)}_2} &
    \mathbf{G}_2^{\overline{A}} & \mathbf{G}_2^{B^{(i)}_2}
\end{bmatrix}
\right)~.$$} \Comment{Alignment Step}
\State $\mathbf{v} \leftarrow \textrm{Random vector in } \textrm{ker}(\mathbf{H}_{2}^{T_{1}^{(i+1)}})$
\State $\mathbf{F}^{I_{1}^{(i)},\{e\}} \leftarrow \textrm{Last }|I_{1}^{(i)}|\textrm{rows of }\mathbf{v}$
%\label{eqn:con2}
\Else
\State $\mathbf{F}^{I_1^{(i)},\{e\}} \leftarrow \textrm{Uniformly at random from the field } \mathbb{F}$ \Comment{Randomization Step}
\EndIf
\EndFor
	    \State 
	    \State \textbf{return} $\mathbf{F}$
    \EndProcedure
    \end{algorithmic}
\end{algorithm}

{ Henceforth we will use the following notation. For any set of edges $P \subseteq T_1^{(i)}$, $i \in \{1,2,\}$, the transfer matrix between source $i$ and $P$ is denoted as $\mathbf{H}_{i}^{P}.$ For any set of edges $Q \subseteq T_2^{(i)}$,$i \in \{1,2\}$, the transfer matrix between source $i$ and $Q$ is denoted as $\mathbf{G}_{i}^{Q}.$ Furthormore, for the sake of consistency, we will assume that the columns of $\mathbf{H}_{i}^{P},\mathbf{G}_{i}^{Q}$ are ordered based on the topological orderings of the edges in $P$ and $Q$. Note that with our notation, $\mathbf{H}_{2}^{T_1^{(i)}\cap T_{2}^{(i)}} = \mathbf{G}_{2}^{T_1^{(i)}\cap T_{2}^{(i)}}.$ With this notation, the local coding vectors at node $v$ generate $\mathbf{H}^{O^{(i)}_1}_j$ from $\mathbf{H}^{I^{(i)}_1}_j$, and
$\mathbf{G}^{A^{(i)}_2}_j$ and $\mathbf{G}^{B^{(i)}_2}_j$ from
$\mathbf{G}^{I^{(i)}_2}_j$, for $j = 1,2$. 
Therefore, the goal of the recursive coding algorithm is to design local coding matrices}
$\mathbf{F}^{I_1^{(i)},O_1^{(i)}}, \mathbf{F}^{I_2^{(i)},B_2^{(i)}}$ at vertex
$v$, with dimensions $|\In(v)| \times |O^{(i)}_1|$, and $|\In(v)| \times |B^{(i)}_2|$ respectively. The global coding co-efficients will be determined as, for $j \in \{1,2\},$
\begin{align}
    \mathbf{H}_j^{O^{(i)}_1} 
    =
    \mathbf{H}_j^{I^{(i)}_1}
\mathbf{F}^{I_1^{(i)},O_1^{(i)}}
   ~, 
    \mathbf{G}_j^{A^{(i)}_2}
    = 
    \mathbf{G}_j^{I^{(i)}_2} 
\mathbf{F}^{I_2^{(i)},A_2^{(i)}}
   ~,
    \mathbf{G}_j^{B^{(i)}_2}  
    = 
    \mathbf{G}_j^{I^{(i)}_2}   
\mathbf{F}^{I_2^{(i)},B_2^{(i)}}
    \label{eq:stageicoding}
\end{align}
where note that, if $A_2^{(i)}$ is non-empty, then $\mathbf{F}^{I_2^{(i)},A_2^{(i)}}$ is a sub-matrix of  $\mathbf{F}^{I_1^{(i)},O_1^{(i)}}.$ This is because $A_2^{(i)} \subseteq O_1^{(i)}$ and, if $A_2^{(i)}$ is non-empty, $I_1^{(i)} = I_{2}^{(i)}$. 

%algorithm takes as input $\mathbf{H}_{1}^{I_1^{(i)}}, \mathbf{H}_{1}^{I_1^{(i)}}$ at node $v$.

%\begin{align}
    %\begin{bmatrix}
    %\mathbf{H}_1^{O^{(i)}_1}   \\
   %\mathbf{H}_2^{O^{(i)}_1}  
    %\end{bmatrix} 
    %&=
%\begin{bmatrix}
    %\mathbf{H}_1^{I^{(i)}_1}   \\
   %\mathbf{H}_2^{I^{(i)}_1}  
    %\end{bmatrix} 
 %\mathbf{F}^{(i)}_1 
   %,& 
    %\begin{bmatrix}
    %\mathbf{G}_1^{A^{(i)}_2}   \\
   %\mathbf{G}_2^{A^{(i)}_2}  
    %\end{bmatrix} 
    %&= 
%\begin{bmatrix}
    %\mathbf{G}_1^{I^{(i)}_2}   \\
   %\mathbf{G}_2^{I^{(i)}_2}  
    %\end{bmatrix} 
%\mathbf{F}^{(i)}_A 
%,&
    %\begin{bmatrix}
    %\mathbf{G}_1^{B^{(i)}_2}   \\
   %\mathbf{G}_2^{B^{(i)}_2}  
    %\end{bmatrix} 
    %&= 
%\begin{bmatrix}
    %\mathbf{G}_1^{I^{(i)}_2}   \\
   %\mathbf{G}_2^{I^{(i)}_2}  
    %\end{bmatrix} 
%\mathbf{F}^{(i)}_B~. 
%\end{align}

\textbf{Informal Description of the Recursive Coding Algorithm:} A formal description is described by Algorithm
\ref{alg:recurs}. Here, we present an informal description of the recursive coding algorithm for the $i$th stage,
assuming that stage $i+1$ is complete. As previously stated, our goal is to generate $\mathbf{F}^{I_1^{(i)},O_1^{(i)}},
\mathbf{F}^{I_2^{(i)},B_2^{(i)}}$ in (\ref{eq:stageicoding}).  We do this in two phases, in the first phase, we find
$\mathbf{F}^{I_2^{(i)},B_2^{(i)}}$ and in the second phase, we determine $\mathbf{F}^{I_1^{(i)},O_1^{(i)}}$. Our
strategy is based on the idea of designing the coding co-efficients so that the Grank is maximized at each stage. 

\subsubsection{Phase 1}
%determining $\mathbf{G}^{B_2^{(i)}}$.

If $B_2^{(i)} = \varnothing$, proceed to the next phase. Otherwise, select each
entry of the matrix $\mathbf{F}^{I_2^{(i)},B_2^{(i)}}$ uniformly at random from the underlying
finite field $\mathbb{F}$. We note that, over a sufficiently large field, our choice of $\mathbf{F}^{I_2^{(i)},B_2^{(i)}}$  maximizes
$$\Grank\left(\mathbf{H}_{1}^{T_1^{(i+1)}},\mathbf{H}_{2}^{T_1^{(i+1)}},\left[\mathbf{G}_{2} ~~
\mathbf{G}_{2}^{I_2^{(i)}}\mathbf{F}^{I_2^{(i)},B_2^{(i)}}\right]\right)$$ with high probability. This is because
choosing the entries of $\mathbf{F}^{I_2^{(i)},B_2^{(i)}}$ uniformly at random maximizes, the
rank of $\left[\mathbf{H}_{2}^{T_1^{(i+1)}}~~\mathbf{G}_{2} ~~
\mathbf{G}_{2}^{I_2^{(i)}}\mathbf{F}^{I_2^{(i),B_2^{(i)}}}\right]$
with a probability that tends to $1$ as $|\mathbb{F}|$ increases. After phase 1, we have found coding vectors for edges $B_2^{(i)}$.
\subsubsection{Phase 2} 
%determining $\mathbf{H}^{O_1^{(i)}}$.

Let $O_1^{(i)} = \left\{e_{i,1}, e_{i,2}, \dots,
e_{i,m}  \right\}$, where $m = |O_1^{(i)}|$. We design the coding co-efficients for the $m$ edges in $O_1^{(i)}$
one-by-one, with the co-efficients designed to maximize the Grank at each step. Our choice of coding co-efficients is
motivated by Lemma \ref{lem:alignment}. {In particular, the second phase is divided into $q$ steps, where in the $j$th
    step, we design the coding co-efficients for edge $e_{i,j}$.} Each step is classified as a \emph{alignment step} or
    a \emph{randomization step} as follows.

Let $O_{1,0}^{(i)} = \varnothing$ and $O_{1,j}^{(i)} =
\left\{ e_{i,1}, \dots , e_{i,j} \right\}$ be the subset of the first $j$
elements of $O_1^{(i)}$ for $1 \leq j \leq m$. Let $A_{2,j}^{(i)} = A_2^{(i)}
\bigcap O_{1,j}^{(i)}.$ 
%\textbf{Case I:} $\Grank\left( \mathbf{M}^{(i+1)} \right) =
%\Grank\left(\overline{\mathbf{M}}^{(i)}\right)$.

%In this case, pick each entry of the matrices $\mathbf{F}^{(i)}_1, j = 1,2$
%uniformly at random from the underlying finite field $\mathbb{F}$.

\textbf{Case I (Alignment):} 
The following conditions are satisfied.
\begin{align}
\Grank\left(\mathbf{H}_{1}^{T_1^{(i+1)}},\mathbf{H}_{2}^{T_1^{(i+1)}},\mathbf{G}_{2}^{T_2^{(i+1)}}\right) &>
\Grank\left(\left[\mathbf{H}_1^{U^{(i)}_1} ~ ~ \mathbf{H}_1^{O_{1,j-1}^{(i)}}\right],\left[\mathbf{H}_2^{U^{(i)}_1} ~ ~ \mathbf{H}_2^{O_{1,j-1}^{(i)}}\right],\left[ \mathbf{G}_1^{U^{(i)}_2} ~ ~ \mathbf{G}_1^{A_{2,j-1}^{(i)}} ~ ~\mathbf{G}_1^{B^{(i)}_2}\right]\right)\label{eqn:con0}\\
\Span\left(\mathbf{H}_2^{I^{(i)}_1}\right) &\not \subset
\Span\left(
\begin{bmatrix}
    \mathbf{H}_2^{U^{(i)}_1} & \mathbf{H}_2^{O_{1,j-1}^{(i)}}
\end{bmatrix}
\right) 
\label{eqn:con1}
\\
\Span\left(\mathbf{H}_2^{I^{(i)}_1}\right) &\subset
\Span\left(
\begin{bmatrix}
    \mathbf{H}_2^{U^{(i)}_1} & \mathbf{H}_2^{O_{1,j-1}^{(i)}} & \mathbf{G}_2^{U^{(i)}_2} &
    \mathbf{G}_2^{A^{(i)}_{2,j-1}} & \mathbf{G}_2^{B^{(i)}_2}
\end{bmatrix}
\right)~.
\label{eqn:con2}
\end{align}
In this case, we choose a vector $\mathbf{v}$ randomly in the nullspace of the column
space spanned by $\mathbf{H}_2^{T_1^{(i+1)}}$ and choose the column vector
$\mathbf{F}^{I_1^{(i)},\{e_{ij}\}}$ to be the last $|\mathbf{I}_{1}^{(i)}|$ rows
of column vector $\mathbf{v}$.  Note that the sets $O_{1,j}^{(i)},A_2^{(i)}$ are denoted by temporary variables $\overline{O},\overline{A}$ in Algorithm \ref{alg:recurs}.

\textbf{Case II (Randomization):} Otherwise, that is, at least one of the conditions (\ref{eqn:con0}),(\ref{eqn:con1}),(\ref{eqn:con2}) is violated. In this case, we select the vector $\mathbf{F}^{I_1^{(i)},\{e_{ij}\}}$ by choosing each of its entries uniformly at random from $\mathbb{F}$. 

If (\ref{eqn:con0})-(\ref{eqn:con2}) are satisfied, then
we refer to our coding step as the \emph{alignment} step. If at least one of (\ref{eqn:con0})-(\ref{eqn:con2}) is violated, we refer to the coding step as a \emph{randomization} step. After the $q$ steps of Phase 2, the coding co-efficients for 
$$\bigcup_{i \leq k \leq N} T_1^{(i+1)} \cup T_{2}^{(i+1)},$$ 
are determined. This completes the description of the recursive coding algorithm. 
Before proceeding, we discuss some properties of our algorithm.
\subsubsection*{Discussion}
  We note that if (\ref{eqn:con0}) is satisfied, then, for a sufficiently large field size, we can show using Lemma \ref{lem:alignment} that 
\begin{align*}
&\Grank\left(\left[\mathbf{H}_1^{U^{(i)}_1} ~ ~ \mathbf{H}_1^{O_{1,j}^{(i)}}\right],\left[\mathbf{H}_2^{U^{(i)}_1} ~ ~ \mathbf{H}_2^{O_{1,j}^{(i)}}\right],\left[ \mathbf{G}_2^{U^{(i)}_2} ~ ~ \mathbf{G}_2^{A_{2,j}^{(i)}} ~ ~\mathbf{G}_2^{B^{(i)}_2}\right]\right) \\&= 
\Grank\left(\left[\mathbf{H}_1^{U^{(i)}_1} ~ ~ \mathbf{H}_1^{O_{1,j-1}^{(i)}}\right],\left[\mathbf{H}_2^{U^{(i)}_1} ~ ~ \mathbf{H}_2^{O_{1,j-1}^{(i)}}\right],\left[ \mathbf{G}_2^{U^{(i)}_2} ~ ~ \mathbf{G}_2^{A_{2,j-1}^{(i)}} ~ ~\mathbf{G}_2^{B^{(i)}_2}\right]\right)+1
\end{align*}
with a probability that approaches $1$ if the size of the field $\mathbb{F}$ grows arbitrarily. We show this below 
\begin{align*}
&\Grank\left(\left[\mathbf{H}_1^{T_1^{(i+1)}} ~ ~ \mathbf{H}_1^{U^{(i)}_1} ~ ~ \mathbf{H}_1^{O_{1,j}^{(i)}}\right],\left[\mathbf{H}_2^{T^{(i+1)}_1} ~ ~ \mathbf{H}_2^{U^{(i)}_1} ~ ~ \mathbf{H}_2^{O_{1,j}^{(i)}}\right],\left[ \mathbf{G}_2^{T^{(i+1)}_2} ~ ~ \mathbf{G}_2^{U^{(i)}_2} ~ ~ \mathbf{G}_2^{A_{2,j-1}^{(i)}} ~ ~\mathbf{G}_2^{B^{(i)}_2} \right]\right)\\
& \stackrel{(a)}{=} \Grank\left(\left[\mathbf{H}_1^{T_1^{(i+1)}}\right],\left[\mathbf{H}_2^{T^{(i+1)}_1}\right],\left[ \mathbf{G}_2^{T^{(i+1)}_2}\right]\right)\\
& \stackrel{(b)}{>} \Grank\left(\left[\mathbf{H}_1^{U^{(i)}_1} ~ ~ \mathbf{H}_1^{O_{1,j-1}^{(i)}}\right],\left[\mathbf{H}_2^{U^{(i)}_1} ~ ~ \mathbf{H}_2^{O_{1,j-1}^{(i)}}\right],\left[ \mathbf{G}_2^{U^{(i)}_2} ~ ~ \mathbf{G}_2^{A_{2,j-1}^{(i)}} ~ ~\mathbf{G}_2^{B^{(i)}_2}\right]\right)\\
%\Grank\left(\left[\mathbf{H}_1^{U^{(i)}_1} ~ ~ \mathbf{H}_1^{O_{1,j}^{(i)}}\right],\left[\mathbf{H}_2^{U^{(i)}_1} ~ ~ \mathbf{H}_2^{O_{1,j}^{(i)}}\right],\left[ \mathbf{G}_1^{U^{(i)}_2} ~ ~ \mathbf{G}_1^{A_{2,j}^{(i)}} ~ ~\mathbf{G}_1^{B^{(i)}_2}\right]\right)
\end{align*}
\begin{align*}
\Rightarrow 
&\Grank\left(\left[\mathbf{H}_1^{U^{(i)}_1} ~ ~ \mathbf{H}_1^{O_{1,j}^{(i)}}\right],\left[\mathbf{H}_2^{U^{(i)}_1} ~ ~ \mathbf{H}_2^{O_{1,j}^{(i)}}\right],\left[ \mathbf{G}_2^{U^{(i)}_2} ~ ~ \mathbf{G}_2^{A_{2,j}^{(i)}} ~ ~\mathbf{G}_2^{B^{(i)}_2}\right]\right) \\
& = \Grank\left(\left[\mathbf{H}_1^{U^{(i)}_1} ~ ~ \mathbf{H}_1^{O_{1,j-1}^{(i)}} ~~ \mathbf{H}_1^{T_1^{(i+1)}}\mathbf{F}^{T_1^{(i+1)},\{e_{i,j}\}}\right],\left[\mathbf{H}_2^{U^{(i)}_1}  ~~ \mathbf{H}_2^{O_{1,j-1}^{(i)}} ~ ~ \mathbf{H}_2^{T_1^{(i+1)}}\mathbf{F}^{T_1^{(i+1)},\{e_{i,j}\}}\right]\right. \\ &~ ~ ~ ~ ~ ~ ~ ~ ~ ~ ~ ~ ~ ~ ~ \left.\left[ \mathbf{G}_2^{U^{(i)}_2} ~ ~ \mathbf{G}_2^{A_{2,j-1}^{(i)}} ~ ~ \mathbf{G}_2^{B^{(i)}_2}  ~ ~ \mathbf{G}_2^{T_2^{(i+1)}}\mathbf{F}^{T_1^{(i+1)},\{e_{i,j}\}}\right]\right) \\&\stackrel{(c)}{=} 
\Grank\left(\left[\mathbf{H}_1^{U^{(i)}_1} ~ ~ \mathbf{H}_1^{O_{1,j-1}^{(i)}}\right],\left[\mathbf{H}_2^{U^{(i)}_1} ~ ~ \mathbf{H}_2^{O_{1,j-1}^{(i)}}\right],\left[ \mathbf{G}_2^{U^{(i)}_2} ~ ~ \mathbf{G}_2^{A_{2,j-1}^{(i)}} ~ ~\mathbf{G}_2^{B^{(i)}_2}\right]\right)+1
\end{align*}
where $(a)$ follows from Statement $(iii)$ of Lemma \ref{lma:grankcol}, $(b)$ follows from (\ref{eqn:con0}) and $(c)$ follows from Lemma \ref{lem:alignment}.

{ 
It is also instructive to note that, if $p > q,$
$$ \Grank(\mathbf{H}_1^{T_1^{(p)}},\mathbf{H}_{2}^{T_1^{(p)}},\mathbf{G}_2^{T_2^{(p)}}) \geq \Grank(\mathbf{H}_1^{T_1^{(q)}},\mathbf{H}_{2}^{T_1^{(q)}},\mathbf{G}_2^{T_2^{(q)}}).$$
That is, the Grank of a stage $q$ is no smaller than stage $p$ which is downstream with respect to a stage $q.$ 
}

We note that our algorithm performs better than optimal routing (multi-commodity flow) for certain networks, for e.g., Fig. \ref{fig:example1}. On the other hand, for certain networks our simulations have revealed that routing outperforms our algorithm. %One such network is provided in Fig. \ref{fig:routingbeatsus} in the appendix.

\subsubsection*{Complexity}
First consider the destination reduction algorithm. Since each iteration corresponds to one vertex, i.e., the common tail node of the last topologically ordered edges, the algorithm terminates in 
$\mathcal{O}(|\mathcal{V}|)$ steps. 
%Moreover, by the construction of the algorithm,
%when it terminates, the following property is satisfied. 

Now consider the recursive coding algorithm. The algorithm traverses through all the intermediate
nodes in their topological order and performs either random coding or
alignment step for outgoing edges. At each node $v$, the complexity of the
coding operations is dominated by the complexity of the alignment step, if
it is performed, which is bounded by $\mathcal{O}(d^3)$, where $d$ is the
in-degree of node $v$. Therefore, we have the following property for the
recursive coding algorithm.

\begin{lemma}
   For the class of directed acyclic graphs with in-degree bounded by $D$, the
   complexity of the recursive coding algorithm is $\mathcal{O}(|\mathcal{V}|
   D^3)$.
\end{lemma}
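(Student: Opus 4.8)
The plan is to bound the running time of the recursive coding algorithm as (number of recursion stages) $\times$ (cost of a single stage), show that the first factor is $O(|\mathcal{V}|)$ and that the per-stage cost is $O(D^3)$ up to a term that also sums to $O(|\mathcal{V}|D^3)$, and then add up. First I would argue that the recursion has depth $N \le |\mathcal{V}|$. In iteration $i$ of the destination reduction (Algorithm \ref{alg:dest}), the chosen vertex $v = \Tail(E)$ has strictly higher topological order than the tail of every edge in $T_1^{(i+1)} \cup T_2^{(i+1)}$: that set is the union of the unchanged edges $U_j^{(i)}$, all of order $< \Ord(v)$, with a subset of $\In(v)$, whose tails precede $v$. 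Hence the vertices chosen across successive iterations are strictly decreasing in $\Ord$, so pairwise distinct, giving $N \le |\mathcal{V}|$; equivalently the recursive coding algorithm makes at most $|\mathcal{V}|+1$ nested calls, and it suffices to bound the work of one call, excluding its recursive sub-call.

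Next I would bound the cost of stage $i$, which designs the local coding coefficients at a single vertex $v$; write $d_i = |\In(v)| \le D$. Phase 1 fills the $|I_2^{(i)}| \times |B_2^{(i)}|$ matrix $\mathbf{F}^{I_2^{(i)},B_2^{(i)}}$ with random field elements, where $|I_2^{(i)}| \le d_i$. Phase 2 processes the edges of $O_1^{(i)}$ one at a time; a randomization step merely draws $\le d_i$ random field elements, while an alignment step additionally checks the containments (\ref{eqn:con0})--(\ref{eqn:con2}) and samples a vector $\mathbf{v} \in \textrm{ker}(\mathbf{H}_2^{T_1^{(i+1)}})$ by Gaussian elimination. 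The efficiency point, already invoked in the discussion preceding the lemma, is that this kernel depends only on $T_1^{(i+1)}$, not on which edge of $O_1^{(i)}$ is being processed, so a basis for it is computed once per stage, in $O(d_i^3)$ time provided the algorithm carries forward compact ($O(d_i)$-dimensional) representations of the relevant column spaces instead of recomputing from the full network transfer matrix; the residual per-edge work (taking a random combination of these basis vectors, re-checking the containments against the maintained bases, and writing the new global coding vectors via (\ref{eq:stageicoding})) is $O(d_i^2)$. Hence stage $i$ costs $O\left(d_i^3 + (|O_1^{(i)}| + |B_2^{(i)}|)\, d_i^2\right)$.

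Summing over the $\le |\mathcal{V}|$ stages, the first terms contribute $\sum_i d_i^3 \le |\mathcal{V}| D^3$. For the second terms I would use the partition property recorded after the destination-reduction properties: the sets $(T_1^{(i)} \cup T_2^{(i)}) \setminus (T_1^{(i+1)} \cup T_2^{(i+1)})$, of which $O_1^{(i)}$ and $B_2^{(i)}$ form a partition, themselves partition a subset of $\mathcal{E}$, so $\sum_i (|O_1^{(i)}| + |B_2^{(i)}|) \le |\mathcal{E}| \le D|\mathcal{V}|$, the last inequality because every vertex has in-degree at most $D$. Therefore $\sum_i (|O_1^{(i)}| + |B_2^{(i)}|)\, d_i^2 \le D^2 |\mathcal{E}| \le D^3 |\mathcal{V}|$, and adding the two contributions yields the claimed total of $O(|\mathcal{V}| D^3)$.

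The main obstacle will be justifying the $O(d_i^3)$ bound on the alignment step rigorously: read literally, Algorithm \ref{alg:recurs} allocates a matrix $\mathbf{F} = \mathbf{0}_{|\mathcal{E}^{(i)}| \times |\mathcal{E}^{(i)}|}$ per stage and manipulates objects such as $\mathbf{H}_2^{T_1^{(i+1)}}$ whose dimensions scale with $|\mathcal{E}|$, $|\Out(s_2)|$ or $|T_1^{(i+1)}|$ rather than with $d_i$. The real content of the proof is to exhibit an equivalent in-place implementation that maintains, between stages, compact data (bases of $\Span(\mathbf{H}_1^{T_1^{(i)}})$, of $\Span([\mathbf{H}_2^{T_1^{(i)}}~~\mathbf{G}_2^{T_2^{(i)}}])$, and the like, or coordinates relative to a fixed small basis) so that the only genuinely new linear-algebraic object created at $v$ lives in an $O(d_i)$-dimensional space --- making the kernel computation $O(d_i^3)$ --- and then to check that the containment tests (\ref{eqn:con0})--(\ref{eqn:con2}) and the updates (\ref{eq:stageicoding}) are all realizable within this representation without ever touching the global transfer matrix. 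Once this bookkeeping is set up, the counting above goes through verbatim.
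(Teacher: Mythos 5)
Your proof follows the same outline as the paper's own (very terse) justification: bound the number of recursion stages by $|\mathcal{V}|$, bound the per-stage cost by the cost of the alignment step, and multiply. The paper simply asserts that the per-stage cost is "bounded by $\mathcal{O}(d^3)$, where $d$ is the in-degree of node $v$," and you add the stage count, the $O(d_i^2)$ accounting for the per-edge work, and the partition argument showing $\sum_i\bigl(|O_1^{(i)}|+|B_2^{(i)}|\bigr)\le|\mathcal{E}|$. This is more careful than the paper, and the extra bookkeeping is correct.

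Where your write-up genuinely adds value is that you explicitly flag the issue that the paper sweeps under the rug: as written, the alignment step samples from $\text{ker}\bigl(\mathbf{H}_2^{T_1^{(i+1)}}\bigr)$ and evaluates Grank conditions on matrices whose dimensions scale with $|\Out(s_2)|$ and $|T_1^{(i+1)}|$, not with $d_i$. You correctly say this is the real content that has to be supplied. However, your proposed fix does not obviously close the gap. The bases you suggest carrying forward --- of $\Span\bigl(\mathbf{H}_1^{T_1^{(i)}}\bigr)$, of $\Span\bigl(\bigl[\mathbf{H}_2^{T_1^{(i)}}\ \ \mathbf{G}_2^{T_2^{(i)}}\bigr]\bigr)$, and similar --- are subspaces of $\mathbb{F}^{|\Out(s_1)|}$ and $\mathbb{F}^{|\Out(s_2)|}$, and their dimensions are bounded only by $\min\bigl(|\Out(s_j)|,\,|T_\ell^{(i)}|\bigr)$. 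With only the \emph{in}-degree bounded by $D$, a source can have out-degree up to $\Theta(|\mathcal{V}|)$, so these ranks are not $O(D)$, and neither the reduction to an "$O(d_i)$-dimensional" working space nor the claimed $O(d_i^3)$ kernel sampling follows. To make the $O(|\mathcal{V}|D^3)$ bound rigorous, you would need either an additional assumption (e.g.\ source out-degree also bounded by $D$, which bounds the Grank and hence all relevant ranks by $O(D)$), or a demonstrably incremental kernel/rank update whose cost at vertex $v$ is independent of these global quantities. Since the paper's own one-sentence justification has exactly the same unresolved issue, your proposal is faithful to the paper but not a complete proof; it is honest in naming the obstacle but does not actually surmount it.
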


%From a practical standpoint, our algorithm can be easily used on networks of the order of a few hundred of nodes. 

\section{Alternative proof of the max-flow min-cut theorem}
\label{sec:mfmc}
In this section, we give an alternate proof of the max-flow min-cut theorem. The intuition of our proof is provided in Section \ref{sec:intuition}. Our proof is based on induction on the number of edges of the graph.  We begin with some notation that we will use in this section. 

We denote the single unicast network graph as $\mathcal{G}_n = (\mathcal{V}_n, \mathcal{E}_n)$ with $n$ edges the subscript $n$ is a notation that is useful in our proof. As per the notation introduced in Section \ref{sec:model}, we denote the unicast
problem as $\Omega_n = \left( \mathcal{G}_n, s, T_n \right)$, where $s \in \mathcal{V}_n$ is the source and $T_n \subset
\mathcal{E}_n$ is the set of destination edges. A linear coding solution for $\Omega_n$ can be described by the local
coding matrix $\mathbf{F}_n$, which results in a linear network transfer $\mathbf{M}_n =
\left( \mathbf{I}_n - \mathbf{F}_n \right)^{-1}$. The source to destination linear transfer matrix $\mathbf{H}_n$ can be
obtained as the submatrix of $\mathbf{M}_n$, whose rows correspond to the source edges and whose columns
correspond to the destination edges. Mathematically, it can be done by multiplying $\mathbf{M}_n$ with with an incident
matrix $\mathbf{A}_n$ of size $S \times n$ and then an exit matrix $\mathbf{B}_n$ of size $D \times n$, where $S$ is
out-degree of the source node and $D$ is the cardinality of the destination edge set. Each row of $\mathbf{A}_n$ is a
length $n$ unit vector indicating the corresponding source edge coming out of the source.  Likewise, each row
$\mathbf{B}_n$ is a length $n$ unit vector indicating the index of the corresponding destination edge\footnote{Note the
    difference the incident/exit matrices and the input/output matrices $\mathbf{A}$ and $\mathbf{B}$ defined in
    \cite{Koetter_Medard}. The latter are not restricted to unit vector in rows and encompass the encoding and decoding
    operations at the source and receiver respectively.  Here, we focus on the transfer matrix observed for the network.
Hence, we are concerned with only incident and exit matrices.}. To the end, the
$(s-T_n)$ source to destination linear transfer matrix is given by $\mathbf{H}_n = \mathbf{A}_n \left( \mathbf{I}_n -
\mathbf{F}_n \right)^{-1} \mathbf{B}_n^T$. Our goal is to show that the rank of $\mathbf{H}_{n}$ can be made equal to
the min-cut between the source and the destination by choosing $\mathbf{F}_{n}$ appropriately. We use the following
notation for the min-cut: for any problem $\Omega=(\mathcal{G},s,T)$, we denote the min-cut between the source node $s$
and the destination edges $T$ as $c_{\mathcal{G}}(s,T)$. 

Without loss of generality, we assume $\mathbf{A}_n \cdot \mathbf{B}_n = \mathbf{0}$, that is, the source node $s$ is
not a tail node of any destination edges in $T$ in the graph $\mathcal{G}_n$. If there is any destination edge that is
coming directly from the source node, then this edge can be removed; it suffices to show the max-flow min-cut theorem
can be proved on the remaining graph. Next we introduce a few definitions and lemmas which will be useful in our proof.
\subsection{Preliminary Lemmas}
\begin{definition}[Atomic matrix]
    An atomic matrix of size $n \times n$ is an upper-triangular matrix, where all the off-diagonal elements are zero,
    except those elements in a single column. Given an upper-triangular matrix $\mathbf{U}$, the $i$-th atomic matrix of
    $\mathbf{U}$, denoted as $\mathbf{U}^{[i]}$, is the atomic matrix formed by setting all the off-diagonal elements of
    $\mathbf{U}$ to zero, except those in column $i$.
\end{definition}
A standard property in matrix algebra captures the relation between an upper-triangular matrix and its atomic matrices.% is captured by the lemma below.
\begin{lemma}[Atomic decomposition]
\label{lma:atomicdecomp}
    An $n \times n$ triangular matrix $\mathbf{U}$ with all one diagonal
    elements can be written as the product of its atomic matrices in the reverse
    index order, i.e. $\mathbf{U} = \mathbf{U}^{[n]} \cdot \mathbf{U}^{[n-1]} \cdots
    \mathbf{U}^{[1]}$.
\end{lemma}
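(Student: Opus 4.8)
The plan is to prove this by an additive decomposition of $\mathbf{U}$ together with a nilpotency-type identity among the atomic pieces, rather than by row reduction. Since all diagonal entries of $\mathbf{U}$ equal $1$, I would write $\mathbf{U} = \mathbf{I} + \mathbf{N}$ with $\mathbf{N}$ strictly upper-triangular. For each index $i$, let $\mathbf{N}_i$ denote the matrix whose $i$-th column coincides with the $i$-th column of $\mathbf{N}$ and whose other columns are zero. Then by construction $\mathbf{U}^{[i]} = \mathbf{I} + \mathbf{N}_i$ and $\mathbf{N} = \sum_{i=1}^{n} \mathbf{N}_i$, where moreover $\mathbf{N}_1 = \mathbf{0}$ because the first column of a strictly upper-triangular matrix vanishes (so $\mathbf{U}^{[1]} = \mathbf{I}$).

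The crux of the argument is the identity $\mathbf{N}_i \mathbf{N}_j = \mathbf{0}$ whenever $i \geq j$. I would verify this by inspecting entries: a nonzero entry of $\mathbf{N}_j$ sits in column $j$ and in a row strictly above $j$, while $\mathbf{N}_i$ acts nontrivially only on the $i$-th coordinate, so a nonzero product requires $i < j$. Granting this, I expand the product in reverse index order,
\[
\mathbf{U}^{[n]} \mathbf{U}^{[n-1]} \cdots \mathbf{U}^{[1]} = \prod_{i=n}^{1} (\mathbf{I} + \mathbf{N}_i) = \mathbf{I} + \sum_{i=1}^{n} \mathbf{N}_i + R,
\]
where $R$ collects all terms that contain at least two of the $\mathbf{N}_i$'s. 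Since the factors occur in strictly decreasing index order, every such term begins with a product $\mathbf{N}_{i_1}\mathbf{N}_{i_2}$ with $i_1 > i_2$, which is $\mathbf{0}$ by the identity; hence $R = \mathbf{0}$ and the product equals $\mathbf{I} + \mathbf{N} = \mathbf{U}$, as claimed.

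The argument is short, and the only genuinely substantive point is the vanishing identity $\mathbf{N}_i\mathbf{N}_j = \mathbf{0}$ for $i \geq j$; everything else is bookkeeping about which terms survive the expansion. As a sanity check I would run the $3 \times 3$ case to confirm that the decreasing order (highest index leftmost) is the correct one — the reversed order fails in general. An alternative route is induction on $n$, peeling off $\mathbf{U}^{[n]}$ and checking that $(\mathbf{U}^{[n]})^{-1}\mathbf{U}$ is again unit upper-triangular with its $n$-th column equal to $\mathbf{e}_n$, but the direct expansion above is cleaner and avoids invoking invertibility explicitly.
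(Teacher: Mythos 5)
Your proof is correct. The paper does not actually give an argument for this lemma --- it is stated as ``a standard property in matrix algebra'' without proof --- so there is no route to compare against, but your additive decomposition $\mathbf{U} = \mathbf{I} + \sum_i \mathbf{N}_i$ together with the vanishing identity $\mathbf{N}_i \mathbf{N}_j = \mathbf{0}$ for $i \geq j$ is a clean and complete way to establish it: the identity holds because $\mathbf{N}_i$ reads only the $i$-th row of its right factor while the nonzero entries of $\mathbf{N}_j$ live strictly above row $j$, and in the expansion of $\prod_{i=n}^{1}(\mathbf{I}+\mathbf{N}_i)$ every cross term has its leftmost two factors in strictly decreasing index order and hence vanishes.
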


Since the local coding matrix $\mathbf{F}_n$ is an $n \times n$ strict upper-triangular matrix, the quantity $\left(
\mathbf{I}_n - \mathbf{F}_n \right)$ is also an upper-triangular matrix but with all diagonal elements begin equal to $1$.  We are
interested in decomposing the inverse of atomic matrix $\left( \mathbf{I}_n - \mathbf{F}_n \right)^{[i]}$ in order to
understand the linear network transfer matrix. For that, we start with the following property.
\begin{property}
    \begin{align}
        \left(\left( \mathbf{I}_n - \mathbf{F}_n \right)^{[i]}\right)^{-1} = \left( \mathbf{I}_n + \mathbf{F}_n
        \right)^{[i]}~.
        \label{eqn:atomicinverse}
    \end{align}
\end{property}

For simplicity, we denote $\mathbf{E}_n = \left( \mathbf{I}_n + \mathbf{F}_n \right)$ and define $i$-th atomic matrix
$\mathbf{E}_n^{[i]}$ to be the $i$-th \emph{edge coding matrix}. Note that the $i$-th column entries of
$\mathbf{E}_n^{[i]}$ above the diagonal represent exactly the local coding vector of the $i$-th edge of the network.
With property~\eqref{eqn:atomicinverse}, we note the following. \begin{lemma}[Network transfer matrix decomposition]
    The network transfer matrix $\left( \mathbf{I}_n -\mathbf{F}_n \right)^{-1}$
    can be decomposed into the product of its edge matrices in the forward
    topological order, i.e. $\left( \mathbf{I}_n-\mathbf{F}_n \right)^{-1} =
    \prod_{i=1}^{n} \mathbf{E}_n^{[i]}$
    \label{lma:networkdecomp}
\end{lemma}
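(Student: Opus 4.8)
The plan is to derive the identity directly from the two facts already in hand: the atomic decomposition of Lemma~\ref{lma:atomicdecomp} and the atomic-inverse identity of Property~\eqref{eqn:atomicinverse}. Since $\mathbf{F}_n$ is strict upper-triangular, the matrix $\mathbf{I}_n - \mathbf{F}_n$ is upper-triangular with every diagonal entry equal to $1$, so Lemma~\ref{lma:atomicdecomp} applies and gives
\begin{align}
\mathbf{I}_n - \mathbf{F}_n = \left(\mathbf{I}_n - \mathbf{F}_n\right)^{[n]} \left(\mathbf{I}_n - \mathbf{F}_n\right)^{[n-1]} \cdots \left(\mathbf{I}_n - \mathbf{F}_n\right)^{[1]}. \nonumber
\end{align}

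First I would observe that each atomic factor $\left(\mathbf{I}_n - \mathbf{F}_n\right)^{[i]}$ is itself upper-triangular with unit diagonal, hence invertible. Taking inverses of both sides of the decomposition and using that the inverse of a product reverses the order of the factors, I obtain
\begin{align}
\left(\mathbf{I}_n - \mathbf{F}_n\right)^{-1} = \left(\left(\mathbf{I}_n - \mathbf{F}_n\right)^{[1]}\right)^{-1} \left(\left(\mathbf{I}_n - \mathbf{F}_n\right)^{[2]}\right)^{-1} \cdots \left(\left(\mathbf{I}_n - \mathbf{F}_n\right)^{[n]}\right)^{-1}. \nonumber
\end{align}
Then I would substitute Property~\eqref{eqn:atomicinverse}, which identifies $\left(\left(\mathbf{I}_n - \mathbf{F}_n\right)^{[i]}\right)^{-1} = \left(\mathbf{I}_n + \mathbf{F}_n\right)^{[i]} = \mathbf{E}_n^{[i]}$ for every $i$. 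The right-hand side then collapses to $\prod_{i=1}^{n}\mathbf{E}_n^{[i]}$, read left to right in increasing index order, which is exactly the claimed decomposition. Since column $i$ of $\mathbf{E}_n^{[i]}$ stores the local coding vector of the $i$-th edge and the indices follow the topological ordering, this increasing-index product is precisely the forward topological order asserted in the lemma.

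There is essentially no genuine obstacle here: the statement is a bookkeeping consequence of the two preliminary results. The only point that needs care is tracking the order of the factors, since Lemma~\ref{lma:atomicdecomp} lists the atomic matrices in decreasing index order and inversion flips them into the increasing (forward topological) order claimed; one also wants to note explicitly that the atomic factors are invertible, which is immediate from their unit diagonal. A self-contained alternative would be an induction on $n$ that peels off the edge coding matrix $\mathbf{E}_n^{[n]}$ corresponding to the topologically last edge, but invoking Lemma~\ref{lma:atomicdecomp} together with Property~\eqref{eqn:atomicinverse} is the most economical route.
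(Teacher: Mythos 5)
Your proof is correct and takes the route the paper clearly intends: the paper states the atomic decomposition (Lemma~\ref{lma:atomicdecomp}) and the atomic-inverse identity (Property~\eqref{eqn:atomicinverse}) immediately before this lemma precisely so that one can invert the decreasing-index atomic product and substitute, and the paper itself just says ``the proof is simple and omitted.'' Your careful tracking of the order reversal under inversion is exactly the only nontrivial bookkeeping point, and you handle it correctly.
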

The proof is simple and omitted. The above lemma is interesting because it decomposes the network transfer matrix into
the local coding based into a prodoct of $n$ matrices, where each matrix in the product captures the influence of the
local coding vector corresponding to a single edge. The edge reduction lemma of \cite{Zeng_Cadambe_Medard} can be shown simply using the above decomposition. We now proceed to a proof of the max flow min-cut theorem.

    %$\mathcal{C}$ is an $A-B$ cut set if there exists no directed path from any vertex in $A$ to any vertex in $\Head(B) =
%\left\{ \Head(e), e \in B \right\}$ on the graph $(\mathcal{V}, \mathcal{E}\backslash \mathcal{C})$.  We define function
%$c^{\mathcal{G}}(A, B) \define \min_{\mathcal{C}} \{|\mathcal{C}| : \mathcal{C} \text{ is a } A-B \text{ cut set}\} .$
%When the graph in consideration can be gleaned unambiguously from the context, we drop the superscript $\mathcal{G}$.
%For convenience, we mildly abuse notation when $A,$ and/or $B$ is a singleton; we write $c(\{v\},\{w\})$ as simply
%$c(v;w)$.

\subsection{The proof}
Now we are ready to prove the max flow min cut theorem. It is sufficient to show the linear achievability of a flow that
equals to the min cut. That is equivalent to the following proposition.
%That is, it suffices to prove that there exists a linear scalar coding solution for the unicast problem
%$\Omega_n$ on an arbitrary $n$-edge directed acyclic graph $\mathcal{G}_n$, such that the rank of the source to
%destination transfer matrix $\mathbf{H}_n$ equals to the min-cut between source $s$ and destination set $T_n$. 
\begin{proposition}[Achievability of Min-Cut]
\label{prop:mfmc}
For all $k \in \mathbb{Z}^+$, given an arbitrary directed acyclic graph $\mathcal{G}_k=(\mathcal{V},\mathcal{E})$ with $k$ edges, and a
    unicast problem $\Omega_k = \left( \mathcal{G}_k, s, T \right)$, there exists a $k \times k$ local coding matrix
    $\mathbf{F}_k$, such that the rank of the transfer matrix from $s$ to $T$ equals to the min cut between $s$ and
    $T$, i.e.
\begin{align}
    \rank\left( \mathbf{H} \right) = \rank \left(  \mathbf{A} \left(
    \mathbf{I}_k - \mathbf{F}_k
    \right)^{-1} \mathbf{B}^T \right) =
    c_{\mathcal{G}_k}(s,T)~.
    \label{eqn:rankclaim}
\end{align}
\end{proposition}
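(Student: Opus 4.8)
The plan is to prove Proposition \ref{prop:mfmc} by strong induction on the number of edges $k$. The base case is the graph consisting only of the source edges (i.e. $\mathbf{A}\cdot\mathbf{B} = \mathbf{0}$ but there are no intermediate edges, or more precisely the destination edges coincide with a subset of source edges); here the trivial identity-like coding matrix achieves $\rank(\mathbf{H}) = |T| = c_{\mathcal{G}}(s,T)$, since a min-cut cannot exceed the number of source edges and cannot exceed $|T|$, and both are attained. For the inductive step, assume the proposition holds for all directed acyclic graphs with at most $k-1$ edges and all unicast problems on them. Given $\mathcal{G}_k$ with destination set $T$, let $e$ be the edge of highest topological order; since $T$ contains the highest-ordered edges (WLOG, as in Section \ref{sec:model}), we have $e \in T$. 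Let $v = \Tail(e)$. Form the reduced problem on $\mathcal{G}_{k-1} = (\mathcal{V}, \mathcal{E}\setminus\{e\})$ with destination set $T' = (T\setminus\{e\}) \cup \In(v)$, which has $k-1$ edges. We also consider the auxiliary destination set $T'' = T \setminus \{e\}$, which is a subset of $\mathcal{E}\setminus\{e\}$.

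The key step is to combine, via classical multicast-style arguments, an optimal linear solution for $T'$ and an optimal linear solution for $T''$ into a single local coding matrix $\mathbf{F}_{k-1}$ on $\mathcal{G}_{k-1}$ that simultaneously achieves $\rank = c_{\mathcal{G}_{k-1}}(s,T')$ and $\rank = c_{\mathcal{G}_{k-1}}(s,T'')$ for the two respective transfer matrices. This is possible because both min-cuts are at most $k-1$, and a generic (random, over a large field) choice of coding coefficients simultaneously realizes the rank of each of finitely many fixed transfer-matrix "shapes" — a standard application of the Schwartz–Zippel lemma to the product of the relevant minor-determinant polynomials (the same device used in random linear network coding for multiple sinks). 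Then I set $\mathbf{F}_k$ to agree with $\mathbf{F}_{k-1}$ on all edges of $\mathcal{E}\setminus\{e\}$ and choose the local coding vector $(\alpha_1,\dots,\alpha_r)$ on edge $e$ (mixing the symbols on $\In(v)$, whose transfer vectors from $s$ are $\mathbf{p}_1,\dots,\mathbf{p}_r$) according to the single-unicast heuristic of Section \ref{sec:singleunicastalgo}: randomly over a large field. By the discussion around \eqref{eq:condition_mflowmincut}, if $\rank\big([\mathbf{H}'' \ \mathbf{p}_1 \ \cdots \ \mathbf{p}_r]\big) > \rank(\mathbf{H}'')$, where $\mathbf{H}''$ is the transfer matrix to $T''$, then a random choice raises the rank of $[\mathbf{H}'' \ \sum_i \alpha_i \mathbf{p}_i]$ — which is exactly the transfer matrix $\mathbf{H}$ to $T$ — to $\rank(\mathbf{H}'')+1$.

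It then remains to do the cut-counting bookkeeping: show (a) $\rank(\mathbf{H}) = c_{\mathcal{G}_k}(s,T)$ in both cases. If $e$ lies on some min-cut for $(s,T)$ in $\mathcal{G}_k$, I argue $c_{\mathcal{G}_{k-1}}(s,T'') = c_{\mathcal{G}_k}(s,T) - 1$ and that $[\mathbf{H}'' \ \mathbf{p}_1 \ \cdots \ \mathbf{p}_r]$ has rank one more than $\mathbf{H}''$: the latter because the columns of this matrix are (up to the fixed coding on $\mathcal{G}_{k-1}$) the transfer vectors to the destination set $T'$, and by the inductive hypothesis $\rank(\mathbf{H}') = c_{\mathcal{G}_{k-1}}(s,T') \geq c_{\mathcal{G}_k}(s,T)$ since any $(s,T')$-cut in $\mathcal{G}_{k-1}$ is easily converted into an $(s,T)$-cut in $\mathcal{G}_k$ of the same or smaller size; together with $\rank(\mathbf{H}'') = c_{\mathcal{G}_k}(s,T)-1$ this forces \eqref{eq:condition_mflowmincut}, so the random choice yields $\rank(\mathbf{H}) = \rank(\mathbf{H}'')+1 = c_{\mathcal{G}_k}(s,T)$. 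If, on the other hand, $e$ lies on \emph{no} min-cut for $(s,T)$, then $c_{\mathcal{G}_{k-1}}(s,T'') = c_{\mathcal{G}_k}(s,T)$ already — removing $e$ does not lower the min-cut — so the inductive hypothesis applied through the combined solution gives $\rank(\mathbf{H}) \geq \rank(\mathbf{H}'') = c_{\mathcal{G}_k}(s,T)$; the reverse inequality $\rank(\mathbf{H}) \leq c_{\mathcal{G}_k}(s,T)$ is the easy (converse) direction of max-flow min-cut, which holds for any linear code. Finally one checks the field size can be taken uniformly large enough across the finitely many randomizations so that all "generic" events hold simultaneously.

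The main obstacle I anticipate is the cut-counting: cleanly relating $c_{\mathcal{G}_{k-1}}(s,T')$, $c_{\mathcal{G}_{k-1}}(s,T'')$ and $c_{\mathcal{G}_k}(s,T)$, and in particular establishing the dichotomy "$e$ is on a min-cut $\Rightarrow c(s,T'') = c(s,T)-1$; $e$ is on no min-cut $\Rightarrow c(s,T'')=c(s,T)$" together with the companion statement $c_{\mathcal{G}_{k-1}}(s,T') \geq c_{\mathcal{G}_k}(s,T)$. These are elementary but must be argued directly (without invoking Menger's theorem, to keep the proof self-contained), using only the definition of a cut-set as an edge set separating $s$ from the destination edges. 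The linear-algebra and probabilistic components (atomic decomposition of Lemma \ref{lma:networkdecomp}, and the Schwartz–Zippel genericity argument) are routine once the combinatorial skeleton is in place.
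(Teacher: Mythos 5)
Your proposal is correct and follows essentially the same route as the paper: induction on the number of edges, peeling off the highest‑ordered destination edge $e$, comparing the two reduced destination sets $T\setminus\{e\}$ and $(T\setminus\{e\})\cup\In(\Tail(e))$, invoking the same cut‑counting dichotomy, combining the two inductively‑obtained solutions by a Schwartz–Zippel/product‑of‑minors argument (the paper's Lemma~\ref{lma:solutioncombining}), and then choosing the local coding vector on $e$ randomly. The only cosmetic difference is in the ``$e$ on no min‑cut'' case, where the paper simply sets the coding vector on $e$ to zero and uses the subproblem for $T\setminus\{e\}$ alone, whereas you still route through the combined solution and appeal to the converse inequality --- both are valid.
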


Note that this is equivalent of showing that there exists an edge coding matrices
$\mathbf{E}_k^{[i]}$ for the $i$-th edge, such that 
\begin{align}
    \rank\left( \mathbf{H} \right) = \rank\left( \mathbf{A} \left( \mathbf{I}_k - \mathbf{F}_k^{-1}
    \right)\mathbf{B}^T \right) = \rank\left(\mathbf{A} \cdot \prod_{i=1}^n \mathbf{E}^{[i]}_k \cdot
    \mathbf{B}^T\right) = c_{\mathcal{G}_k}(s,T)~.
\end{align}

We prove Proposition~\ref{prop:mfmc} using mathematical induction on $k \in \mathbb{Z}^+$. When $k=1$, the claim is
trivially true.  For the inductive step, we start with the following inductive assumption and claim
\begin{assumption}[Inductive assumption]
    Proposition~\ref{prop:mfmc} holds for all $k$ such that $ 1 \leq k \leq n-1$.
\end{assumption}

Using the above assumption, we prove Proposition \ref{prop:mfmc} for the case where $k=n,$ for an arbitrary, unicast problem $\Omega_n$. It will be convenient to denote $\mathcal{E}_{n} = \mathcal{E}, T_n = T, \mathbf{H}_{n}=\mathbf{H}, \mathbf{A}_{n}=\mathbf{A}, \mathbf{B}_{n}=\mathbf{B}.$ Let $e_n$ be an edge with the highest topological order in graph $\mathcal{G}_n$. Let $\mathcal{G}_{n-1}=(\mathcal{V},\mathcal{E}-\{e_n\})$. In the problem $\Omega_n$, if $e_n$ is not a destination
edge, i.e.  $e_n \not \in T_n$,  then $e_n$ does not communicates with any destination edges, i.e. $e_n \not \leadsto e,
~\forall e \in T$, since $e_n$ has the highest topological order in $\mathcal{G}_n$. Consequently, erasing $e_n$ do not affect the capacity and the min cut from $s$ to $T_n$, i.e.
$c_{\mathcal{G}_n}(s, T_n) = c_{\mathcal{G}_{n-1}}(s, T_n)$. %Therefore, it is sufficient to show the claim 
%for the problem $(\mathcal{G}_{n-1}, s, T_n)$, which is guaranteed by the inductive assumption, since
%$\mathcal{G}_{n-1}$ contains only $n-1$ edges.

Next we focus on the case when $e_n \in T_n$. Note that $e_n$ is not emitted directly from the source node $s$ and is a
destination edge. As a result, we can decompose the incident and exit matrices as follows, 
\begin{align}
    \mathbf{A}_n &= \begin{bmatrix}
        \mathbf{A}_{n-1} & \mathbf{0} 
    \end{bmatrix},   & 
    \mathbf{B}_n &= 
\begin{bmatrix}
    \mathbf{B}^*_{n-1} & \mathbf{0} \\
    \mathbf{0} & 1
\end{bmatrix} ~,
\end{align} 
where $\mathbf{A}_{n-1}$ is the incident matrix from the source to subgraph $\mathcal{G}_{n-1}$, while
$\mathbf{B}^*_{n-1}$ is the exit matrix from $\mathcal{G}_{n-1}$ to the first $D-1$ destination edges. The zero column
following $\mathbf{A}_{n-1}$ indicates that $e_n$ is not emitted from the source node, whereas the unit vector in the
last column of $\mathbf{B}_n$ indicates that $e_n$ is a destination edge.  Subsequently, consider the decomposition of
the $s-T_n$ transfer matrix on $\mathcal{G}_n$. In the following sequence of equations, we use the notation $\mathbf{e}_{n}$ to denote the first $n-1$ entries of the $n$th column of matrix $\mathbf{F}_{n}$; note that the last entry of the $n$th column is $0$. We can write the transfer matrix as,
\begin{align}
    \mathbf{H}_n
&=\mathbf{A}_n \left( \mathbf{I}_n - \mathbf{F}_n \right)^{-1} \mathbf{B}_n^T 
= \mathbf{A}_n \cdot \prod_{i=1}^{n} \mathbf{E}_n^{[i]} \cdot
\mathbf{B}_n^T \nonumber \\
&= 
\begin{bmatrix}
    \mathbf{A}_{n-1} &  \mathbf{0}
\end{bmatrix}\cdot
\prod_{i=1}^{n} \mathbf{E}^{[i]}_i
\cdot
\begin{bmatrix}
    \mathbf{B}^*_{n-1} & \mathbf{0} \\
    \mathbf{0} & 1
\end{bmatrix}^T \\
&=
\begin{bmatrix}
    \mathbf{A}_{n-1} & \mathbf{0} 
\end{bmatrix}\cdot
\left(
\prod_{i=1}^{n-1} \mathbf{E}_n^{[i]} 
\right)
\cdot
\mathbf{E}_n^{[n]}
\cdot
\begin{bmatrix}
    \mathbf{B}^*_{n-1} & \mathbf{0} \\
    \mathbf{0} & 1
\end{bmatrix}^T  \nonumber \\
&\overset{(a)}{=} 
\begin{bmatrix}
    \mathbf{A}_{n-1} & \mathbf{0}
\end{bmatrix}\cdot
\left[
\begin{array}{c|c}
    \left( \mathbf{I}_{n-1} - \mathbf{F}_{n-1} \right)^{-1} & \mathbf{0} \\
    \hline
    \mathbf{0} & 1
\end{array}
  \right]
\cdot
\mathbf{E}_n^{[n]}
\cdot
\begin{bmatrix}
    \mathbf{B}^*_{n-1} & \mathbf{0} \\
    \mathbf{0} & 1
\end{bmatrix}^T  \nonumber \\
&\overset{(b)}{=} 
\begin{bmatrix}
    \mathbf{A}_{n-1} & \mathbf{0}
\end{bmatrix}\cdot
\left[
\begin{array}{c|c}
    \left( \mathbf{I}_{n-1} - \mathbf{F}_{n-1} \right)^{-1} & \mathbf{0} \\
    \hline
    \mathbf{0} & 1
\end{array}
  \right]
\cdot
\left[
\begin{array}{c|c}
    \mathbf{I}_{n-1} & \mathbf{e}_n \\
    \hline
    \mathbf{0} & 1
\end{array}
\right]
\cdot
\begin{bmatrix}
    \mathbf{B}^*_{n-1} & \mathbf{0} \\
    \mathbf{0} & 1
\end{bmatrix}^T  \nonumber \\
&=
\begin{bmatrix}
    \mathbf{A}_{n-1} \left( \mathbf{I}_{n-1} - \mathbf{F}_{n-1} \right)^{-1} & \mathbf{0}
\end{bmatrix}
\begin{bmatrix}
    \mathbf{B}^{*^T}_{n-1} & \mathbf{e}_n \\ 
    \mathbf{0} & 1
\end{bmatrix} \nonumber \\
&=\mathbf{A}_{n-1} \left( \mathbf{I}_{n-1} - \mathbf{F}_{n-1} \right)^{-1}
    \begin{bmatrix}
        \mathbf{B}^{*^T}_{n-1}  & \mathbf{e}_n
    \end{bmatrix} ~.
\end{align}
In this process, for step $(a)$, we apply Lemma~\ref{lma:networkdecomp} on the product of first $n-1$ to obtain the
network transfer matrix of the subgraph $\mathcal{G}_{n-1}$, while in $(b)$ we simply write out the matrix
$\mathbf{E}_n$ explicitly using the local coding vector $\mathbf{e}_n$ on the edge $e_n$ and idenity matrix of
$(n-1)\times(n-1)$. To the end, we decompose the source-destination transfer matrix into two parts that can be
intuitively understood. The first part, $\mathbf{H}_{n-1} = \mathbf{A}_n\left( \mathbf{I}_{n-1}- \mathbf{F}_{n-1}
\right)^{-1} \mathbf{B}^{*{^T}}_{n-1}$ is simply the source to destination transfer matrix of the subgraph
$\mathcal{G}_{n-1}$.  The second part is the contribution of the last topologically ordered edge $e_n$. Therefore,
\begin{align}
    \mathbf{H}_n = 
    \begin{bmatrix}
        \mathbf{H}_{n-1} &\mathbf{A}_{n-1} \left( \mathbf{I}_{n-1} - \mathbf{F}_{n-1} \right)^{-1} \mathbf{e}_n
    \end{bmatrix}.
    \label{eqn:twopartdecomp}
\end{align}

Now consider the last edge $e_n$. If it is not a part of any $s-T_n$ min cut on graph $\mathcal{G}_n$, then removal of
$e_n$ does not affect the min cut of the graph, i.e.  $c_{\mathcal{G}_{n-1}}(s,t) = c_{\mathcal{G}_{n}}(s,t)$. In this
case, leveraging the inductive assumption on the subgraph $\mathcal{G}_{n-1}$, we claim that there exists local coding
matrix $\mathbf{F}^*_{n-1}$ for the graph $\mathcal{G}_{n-1}$ such that, 
\begin{align}
    \rank\left(\mathbf{H}_{n-1} \right)
        = 
    \rank\left( \mathbf{A}_{n-1} \left( \mathbf{I}_{n-1} - \mathbf{F}_{n-1}
    \right)^{-1} \mathbf{B}^{*^T}_{n-1} \right)
    = c_{\mathcal{G}_{n-1}}(s,t) = c_{\mathcal{G}_n}(s,t)~.
    \label{eqn:nonmincutachieve}
\end{align}
To put it simply, when $e_n$ is not a part of any min cut set, we can ignore the
this edge and achieve a flow equal to the min cut, using only the remaining
subgraph $\mathcal{G}_{n-1}$. To do this, we can simply choose the existing
local coding matrix $\mathbf{F}^*_{n-1}$ which satisfies
\eqref{eqn:nonmincutachieve} and set the local coding vector at $\mathbf{e}_n$
to be zero, i.e. $\mathbf{e}_n = \mathbf{0}$. Doing this eliminates the last column 
in \eqref{eqn:twopartdecomp} and guarantees that 
\begin{align}
    \rank\left(\mathbf{H}_n\right) =
    \rank\left( \mathbf{H}_{n-1}\right) = c_{\mathcal{G}_{n-1}}(s,T_n) = c_{\mathcal{G}_{n}}(s, T_n)
\end{align}

It remains to prove the claim in the case when $e_n$ belongs to some min-cut set for the graph $\mathcal{G}_n$. Let $v =
\Tail(e_n)$ be the tail node of theg edge $e_n$. We examine two unicast problems on $\mathcal{G}_{n-1}$, $\Omega_{n-1} =
\left( \mathcal{G}_{n-1}, s, T_{n-1} \right)$ and $\Omega^*_{n-1} = \left( \mathcal{G}_{n-1}, s, T^*_{n-1} \right)$,
where $T_{n-1}$ and $T^*_{n-1}$ are given by, 
\begin{align}
 T_{n-1} &= T_n \backslash \{e_n\} \cup \In(v) \quad, &   T^*_{n-1} &= T_n \backslash \{e_n\} 
    \label{eqn:Tchanges}
\end{align}

Note that both $\Omega_{n-1}$ and $\Omega^*_{n-1}$ are unicast problems whose underlying graph has exactly $n-1$ edges.
For the source-destination min cuts, we have the following lemma.
\begin{lemma}
    When $e_n$ is a edge in some min cut between $s$ and $T_n$ on $\mathcal{G}_n$, the min cut values on $\mathcal{G}_n$
    and $\mathcal{G}_{n-1}$ satisfy the following 
    \begin{align}
        c_{\mathcal{G}_{n-1}}(s, T_{n-1}) &\geq c_{\mathcal{G}_{n}}(s, T_n)  \\
        c_{\mathcal{G}_{n-1}}(s, T^*_{n-1}) &= c_{\mathcal{G}_{n}}(s, T_n) - 1 
        \label{lma:mincuts}
    \end{align}
\end{lemma}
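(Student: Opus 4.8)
The plan is to argue both statements purely combinatorially, working with edge cut-sets and using nothing beyond the fact that $e_n$ has the highest topological order. The key structural consequence I will invoke is that $\Head(e_n) \not\leadsto \Tail(e')$ for \emph{every} edge $e' \in \mathcal{E}$: a directed path of positive length strictly increases the topological order of a vertex, while $\Ord(\Head(e_n)) > \Ord(\Tail(e_n)) \geq \Ord(\Tail(e'))$ since $e_n$ is highest ordered. Throughout I use the characterization that $C \subseteq \mathcal{E}$ is an $s$-$T$ cut on a graph $\mathcal{H}$ exactly when no directed path of $\mathcal{H}\setminus C$ starting at $s$ ends with an edge of $T$, together with the trivial facts that $T$ itself is always such a cut and that a cut may contain edges of $T$. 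Recall also that $\Tail(e_n) = v \neq s$, since $s$ is not the tail of any destination edge.

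For $c_{\mathcal{G}_{n-1}}(s,T_{n-1}) \geq c_{\mathcal{G}_n}(s,T_n)$, I take a minimum $s$-$T_{n-1}$ cut $C$ on $\mathcal{G}_{n-1}$ (so automatically $e_n \notin C$) and show the \emph{same} set $C$ is an $s$-$T_n$ cut on $\mathcal{G}_n$, which immediately gives the inequality. Suppose not: pick a path $P$ in $\mathcal{G}_n \setminus C$ from $s$ ending with some $e' \in T_n$. If $e' \neq e_n$, then $P$ cannot use $e_n$ at all --- a subsequent traversal of $e_n$ would require $\Head(e_n) \leadsto \Tail(e')$, which is impossible, and $e_n$ cannot be the final edge of $P$ since that edge is $e' \neq e_n$ --- so $P$ lies in $\mathcal{G}_{n-1}\setminus C$ and ends with an edge of $T_n\setminus\{e_n\} \subseteq T_{n-1}$, contradicting the choice of $C$. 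If $e' = e_n$, then the initial segment of $P$ up to $v = \Tail(e_n)$ has positive length (as $v \neq s$), does not traverse $e_n$, lies in $\mathcal{G}_{n-1}\setminus C$, and ends with an edge of $\In(v) \subseteq T_{n-1}$ --- again a contradiction. Hence $c_{\mathcal{G}_n}(s,T_n) \leq |C| = c_{\mathcal{G}_{n-1}}(s,T_{n-1})$.

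For the equality $c_{\mathcal{G}_{n-1}}(s,T^*_{n-1}) = c_{\mathcal{G}_n}(s,T_n) - 1$, I prove the two bounds separately, and the hypothesis that $e_n$ lies in some minimum $s$-$T_n$ cut $M$ of $\mathcal{G}_n$ enters only in the ``$\leq$'' direction. Since $\mathcal{G}_{n-1}\setminus (M\setminus\{e_n\}) = \mathcal{G}_n\setminus M$ and $M$ blocks every $s$-path ending in $T_n \supseteq T^*_{n-1}$, the set $M\setminus\{e_n\}$ is an $s$-$T^*_{n-1}$ cut on $\mathcal{G}_{n-1}$, so $c_{\mathcal{G}_{n-1}}(s,T^*_{n-1}) \leq |M| - 1 = c_{\mathcal{G}_n}(s,T_n) - 1$. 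Conversely, for any minimum $s$-$T^*_{n-1}$ cut $C$ of $\mathcal{G}_{n-1}$, the set $C \cup \{e_n\}$ is an $s$-$T_n$ cut of $\mathcal{G}_n$: in $\mathcal{G}_n\setminus(C\cup\{e_n\}) = \mathcal{G}_{n-1}\setminus C$ there is no $s$-path ending in $T^*_{n-1}$ by the choice of $C$, and none ending with $e_n$ since $e_n$ has been deleted; hence $c_{\mathcal{G}_n}(s,T_n) \leq |C| + 1 = c_{\mathcal{G}_{n-1}}(s,T^*_{n-1}) + 1$, and combining the two bounds gives the claim.

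I expect the only delicate point to be the path analysis in the first inequality, namely excluding the possibility that a path reaching a destination edge of $\mathcal{G}_n$ slips through $e_n$; this is precisely where the maximal topological order of $e_n$ (equivalently, that $e_n$ has no descendant edges) is indispensable, and it is also the reason the first statement is only an inequality rather than an equality. The remaining manipulations are routine set arithmetic on cuts, and notably the argument uses neither the inductive assumption nor any linear algebra.
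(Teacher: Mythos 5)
Your proof is correct, and there is nothing in the paper to compare it against: the paper explicitly omits the proof, stating only that it ``follows from elementary graph theoretic arguments.'' Your argument is precisely such an argument, and it fills the gap carefully. The path-splitting analysis for the first inequality (ruling out a sneaky traversal of $e_n$ using that $\Head(e_n)$ has no outgoing edges, and handling the case $e' = e_n$ by truncating $P$ at $v \neq s$ and landing on an edge of $\In(v) \subseteq T_{n-1}$) is exactly the delicate part and you treat it correctly; the second statement correctly isolates where the hypothesis that $e_n$ belongs to a minimum cut is indispensable (only in the ``$\leq$'' direction) and where it is not (the ``$\geq$'' direction via $C \cup \{e_n\}$ holds unconditionally). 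The only small remark I would add for polish: when you exclude the case that $P$ traverses $e_n$ as a non-final edge, it is cleaner to say directly that the edge immediately following $e_n$ would have tail $\Head(e_n)$, contradicting that $\Head(e_n)$ has no outgoing edges, rather than phrasing it as $\Head(e_n) \leadsto \Tail(e')$ --- this covers uniformly the degenerate subcase $\Head(e_n) = \Tail(e')$.
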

    The proof follows from elementary graph theoretic arguments and is omitted. It is straightforward to see that the $s-T_{n-1}$ transfer matrix for $\Omega^*_{n-1}$ is exactly given by
$\mathbf{H}_{n-1}$, which is the first part of the matrix $\mathbf{H}_n$. For $\Omega_{n-1}$, the destination edge set
is formed by replacing $e_n$ with its parens edges, i.e. incoming edges to $v$. Let these edge be $e_{i_1}, \dots,
e_{i_h}$, where $i_1, \dots, i_h$ are the edge indices,  the exit matrix of $\Omega_{n-1}$ is given by
\begin{align}
    \mathbf{B}^T_{n-1} =
    \left[
        \begin{array}[h]{c|c|c|c}
            \mathbf{B}^{*^T}_{n-1} & \mathbf{u}_{i_1} & \dots &
            \mathbf{u}_{i_h}
        \end{array}
    \right]~,
\end{align}
where $\mathbf{u}_{i_j}, ~ 1 \leq j \leq h$ is a length $n$ unit vector with $1$ at position $i_j$ and zero elsewhere.
Hence, the transfer matrix for $\Omega_{n-1}$ is,
\begin{align}
    \mathbf{H}_{n-1} &= \mathbf{A}_{n-1} \left( \mathbf{I}_{n-1} - \mathbf{F}_{n-1} \right)^{-1} \mathbf{B}^T_{n-1}
    \nonumber \\
    &= 
    \begin{bmatrix}
        \mathbf{H}_{n-1}^T & \mathbf{A}_{n-1} \left( \mathbf{I}_{n-1} - \mathbf{F}_{n-1} \right)^{-1}
        \begin{bmatrix}
            \mathbf{u}_{i_1} & \dots & \mathbf{u}_{i_h}
        \end{bmatrix}
    \end{bmatrix}. \nonumber 
\end{align}

Next, we invoke the inductive assumptions on the problems $\Omega_{n-1}$ and $\Omega^*_{n-1}$. We can conclude that,
there exists a local coding matrix $\mathbf{F}_{n-1}$ for $\Omega_{n-1}$ and a local coding matrix $\mathbf{F}^*_{n-1}$
for $\Omega^*_{n-1}$, such that,
\begin{align}
    \rank\left( \mathbf{H}_{n-1} \right) 
    &=
    \rank\left( \mathbf{A}_{n-1} \left( \mathbf{I}_{n-1} -
    \mathbf{F}_{n-1} \right)^{-1} \mathbf{B}^T_{n-1} \right) =
    c_{\mathcal{G}_{n-1}} (s,T_{n-1}) \geq c_{\mathcal{G}_n} (s,T_n) 
    \label{eqn:Gnksoln} \\
    \rank\left( \mathbf{H}^*_{n-1} \right) 
    &=
    \rank\left( \mathbf{A}_{n-1} \left( \mathbf{I}_{n-1} -
    \mathbf{F}^*_{n-1} \right)^{-1} \mathbf{B}^{*^T}_{n-1} \right) 
    = c_{\mathcal{G}_{n-1}} (s,T^*_{n-1})
    = c_{\mathcal{G}_n} (s,T_n) - 1
    \label{eqn:Gnkssoln} 
\end{align}

Finally, the following lemma helps us to find the local coding matrix for the original graph $\mathcal{G}_n$
\begin{lemma}
\label{lma:solutioncombining}
    Given a local coding matrix $\mathbf{F}_{n-1}$ that
    satisfying~\eqref{eqn:Gnksoln} and a local coding matrix
    $\mathbf{F}^*_{n-1}$ satisfying~\eqref{eqn:Gnkssoln}, for a sufficiently
    large field $\mathbb{F}$, there exists $p, ~ q \in \mathbb{F}$ and a local
    coding vector $\mathbf{e}_{n}$ such that 
\begin{align}
    \rank\left(
\mathbf{A}_{n-1} \left( \mathbf{I}_{n-1} - \left(p\mathbf{F}_{n-1} + q
\mathbf{F}_{n-1}^* \right) \right)^{-1}
    \begin{bmatrix}
        \mathbf{B}^{*^T}_{n-1}   & \mathbf{e}_n
    \end{bmatrix}
\right) = c_{\mathcal{G}_n}(s,T_n) 
\end{align}
\end{lemma}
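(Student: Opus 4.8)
The plan is to first collapse the \emph{pair} of solutions $(\mathbf{F}_{n-1},\mathbf{F}^*_{n-1})$ into a \emph{single} local coding matrix of the form $\mathbf{F}' = p\,\mathbf{F}_{n-1}+q\,\mathbf{F}^*_{n-1}$ that is simultaneously rate-optimal for both unicast problems $\Omega_{n-1}$ (destination $T_{n-1}$) and $\Omega^*_{n-1}$ (destination $T^*_{n-1}$), and then to invoke the single-unicast argument built around \eqref{eq:condition_mflowmincut} to choose the local coding vector $\mathbf{e}_n$ on the last edge $e_n$. Throughout, write $c=c_{\mathcal{G}_n}(s,T_n)$ and $c_1=c_{\mathcal{G}_{n-1}}(s,T_{n-1})$; recall from the lemma preceding this statement that $c_1\geq c$ and $c_{\mathcal{G}_{n-1}}(s,T^*_{n-1})=c-1$.

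For the first step I would treat $x,y$ as indeterminates and set $\mathbf{N}(x,y)=x\mathbf{F}_{n-1}+y\mathbf{F}^*_{n-1}$. Since both summands are strictly upper triangular, $\mathbf{N}(x,y)$ is nilpotent and $(\mathbf{I}_{n-1}-\mathbf{N}(x,y))^{-1}=\sum_{k\geq0}\mathbf{N}(x,y)^k$ is a matrix whose entries are polynomials in $x,y$; hence the entries of $\mathbf{A}_{n-1}(\mathbf{I}_{n-1}-\mathbf{N}(x,y))^{-1}\mathbf{B}^T_{n-1}$ and of $\mathbf{A}_{n-1}(\mathbf{I}_{n-1}-\mathbf{N}(x,y))^{-1}\mathbf{B}^{*^T}_{n-1}$ are polynomials in $x,y$. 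By \eqref{eqn:Gnksoln} the first matrix has rank $c_1$ at $(x,y)=(1,0)$, so some $c_1\times c_1$ minor $P_1(x,y)$ of it is a nonzero polynomial; by \eqref{eqn:Gnkssoln} the second matrix has rank $c-1$ at $(x,y)=(0,1)$, so some $(c-1)\times(c-1)$ minor $P_2(x,y)$ of it is a nonzero polynomial. Then $P_1 P_2\not\equiv 0$, so over a sufficiently large field there exist $p,q\in\mathbb{F}$ with $P_1(p,q)P_2(p,q)\neq0$. The matrix $\mathbf{F}'=p\mathbf{F}_{n-1}+q\mathbf{F}^*_{n-1}$ is again a valid (strictly upper triangular, structure-respecting) local coding matrix for $\mathcal{G}_{n-1}$, and it satisfies $\rank\!\big(\mathbf{A}_{n-1}(\mathbf{I}_{n-1}-\mathbf{F}')^{-1}\mathbf{B}^T_{n-1}\big)\geq c_1\geq c$ while $\rank\!\big(\mathbf{A}_{n-1}(\mathbf{I}_{n-1}-\mathbf{F}')^{-1}\mathbf{B}^{*^T}_{n-1}\big)=c-1$ (the last equality because the rank of a transfer matrix never exceeds the corresponding min-cut).

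For the second step I would fix this $\mathbf{F}'$, abbreviate $\mathbf{H}^*=\mathbf{A}_{n-1}(\mathbf{I}_{n-1}-\mathbf{F}')^{-1}\mathbf{B}^{*^T}_{n-1}$ (rank $c-1$), and write $\mathbf{p}_j=\mathbf{A}_{n-1}(\mathbf{I}_{n-1}-\mathbf{F}')^{-1}\mathbf{u}_{i_j}$ for $j=1,\dots,h$, where $\mathbf{u}_{i_1},\dots,\mathbf{u}_{i_h}$ are the unit vectors indexing the incoming edges of $v=\Tail(e_n)$ (as recorded just before the lemma). Since the columns of $\mathbf{B}^T_{n-1}$ are exactly those of $\mathbf{B}^{*^T}_{n-1}$ followed by $\mathbf{u}_{i_1},\dots,\mathbf{u}_{i_h}$, the matrix $[\mathbf{H}^* ~~ \mathbf{p}_1 ~~ \cdots ~~ \mathbf{p}_h]$ equals $\mathbf{A}_{n-1}(\mathbf{I}_{n-1}-\mathbf{F}')^{-1}\mathbf{B}^T_{n-1}$ and hence has rank $\geq c>c-1=\rank(\mathbf{H}^*)$; therefore $\mathbf{p}_1,\dots,\mathbf{p}_h$ do not all lie in $\Span(\mathbf{H}^*)$, which is precisely the hypothesis \eqref{eq:condition_mflowmincut} of the single-unicast construction. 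As in Section~\ref{sec:singleunicastalgo}, choosing scalars $\alpha_1,\dots,\alpha_h$ uniformly at random over a large field gives $\rank\!\big([\mathbf{H}^* ~~ \textstyle\sum_j\alpha_j\mathbf{p}_j]\big)=\rank(\mathbf{H}^*)+1=c$ with probability tending to $1$. I would then set $\mathbf{e}_n=\sum_j\alpha_j\mathbf{u}_{i_j}$, which is a legitimate local coding vector for $e_n$ since it is supported on $\In(v)$, and note that $\mathbf{A}_{n-1}(\mathbf{I}_{n-1}-\mathbf{F}')^{-1}[\mathbf{B}^{*^T}_{n-1} ~~ \mathbf{e}_n]=[\mathbf{H}^* ~~ \sum_j\alpha_j\mathbf{p}_j]$ has rank $c=c_{\mathcal{G}_n}(s,T_n)$, which is the claimed identity; taking the field large enough to accommodate the finitely many polynomials arising in both steps closes the argument.

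The hard part is the first step: exhibiting one coding matrix, drawn from the two-parameter family $\{x\mathbf{F}_{n-1}+y\mathbf{F}^*_{n-1}\}$, that is simultaneously rate-optimal for \emph{both} destination sets. This is the classical linear-multicast phenomenon in disguise, and the minor/polynomial-identity argument above mirrors the standard random linear network coding proof; the only delicate point is checking that each of the two rank requirements is certified by a nonzero bivariate polynomial, which works because the two given solutions $\mathbf{F}_{n-1}$ and $\mathbf{F}^*_{n-1}$ already certify the relevant minors at the points $(1,0)$ and $(0,1)$ respectively. The second step is then a routine reuse of the single-unicast construction already established around \eqref{eq:condition_mflowmincut}.
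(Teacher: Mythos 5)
Your proposal is correct and follows essentially the same two-step strategy as the paper's Appendix B proof: first certify that the bivariate polynomials given by the appropriate nonvanishing minors of the two transfer matrices (evaluated in the one-parameter family $p\mathbf{F}_{n-1}+q\mathbf{F}^*_{n-1}$) are nonzero at $(1,0)$ and $(0,1)$ respectively, take their product and invoke a sparse-zeros lemma to pick $(p,q)$ making both ranks hold simultaneously; then observe that the parent-edge columns cannot all lie in the column span of the $T^*_{n-1}$ transfer matrix and pick $\mathbf{e}_n$ by random coding to gain the last dimension. Your write-up is in fact a bit cleaner on notation (the paper's appendix swaps the roles of $\mathbf{B}^T_{n-1}$ and $\mathbf{B}^{*^T}_{n-1}$ relative to the main text) and adds the useful observation that nilpotency of any linear combination of the strictly upper-triangular local coding matrices guarantees polynomial entries of the inverse, but the substance is the same.
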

The proof follows from the spirit of the algebraic framework of network coding introduced in \cite{Koetter_Medard} and is presented in Appendix~\ref{solutioncombining}.
Therefore, from the two local coding matrices $\mathbf{F}_{n}$ and $\mathbf{F}^*_{n-1}$, we can choose the local coding
vectors for the last $k$ edges randomly, and construct the local coding matrix $\mathbf{F}_n$ for $\mathcal{G}_n$ as
follows, 
\begin{align}
    \mathbf{F}_n = 
    \begin{bmatrix}
        p\mathbf{F}_{n-1} + q \mathbf{F}_{n-1}^* & \mathbf{e}_n \\
        \mathbf{0} & \mathbf{0} 
    \end{bmatrix}
    \label{eqn:recursiveF}
\end{align}
The new local coding matrix $\mathbf{F}_n$ will satisfy the original claim in \eqref{eqn:rankclaim} which provides an
achievable flow that equals the min cut on any arbitray graph of $n$ edges. 

\begin{remark}
Note that in the process of searching for the local coding matrix $\mathbf{F}_n$ which satisfies 
\eqref{eqn:rankclaim} and achieves the maximum flow, we essentially rely on recursive construction the local coding
matrices throught the sequence $\{\mathbf{F}_{n-1}, \mathbf{F}_{n-2}, \dots, \mathbf{F}_1\}$
using~\eqref{eqn:recursiveF}. These local coding matrices in turn correspond to maximum flow achieving coding matrices
for smaller problems we constructed as subproblems, i.e. $\left\{ \Omega_{n-1}, \Omega_{n-2},\dots \Omega_1 \right\}$.
This sequence of problems would, in fact, result in running the destination reduction algorithm of Section \ref{sec:algorithm} on the original problem $\Omega$ (with the second destination set equal to the null set). 
\end{remark}

\section{Achievability of Rate Pair $\left( 1, 1 \right)$}
\label{sec:one-one}
%We prove for the special case that the rate pair $(1,1)$ is always achievable through our algorithm unless there is a
%single edge GNS cut. An implication of our results is when there is a one-edge GNS cut in the two-unicast-Z network, our
%algorithm is optimal.
We prove for the special case that the rate pair $(1,1)$ is always achievable through our algorithm unless there is a
single edge GNS cut. We begin by stating our result formally.

\begin{theorem}
\label{thm:oneone}
Consider a two-unicast-Z problem $\Omega = (\mathcal{G}, \mathcal{S}, \mathcal{T})$ where, for $i \in \{1,2\}$, source $s_i$ is connected to at least one edge in destination $T_i$, and the GNS-cut set bound is at least $2$. Then, 
the coding matrix $\mathbf{F}$ returned by $\textsc{RecursiveCoding}\left( \textsc{Reduction}\left(
\Omega \right), \mathbb{F} \right)$ achieves the rate pair $(1,1)$ with a probability that tends to $1$ as the field size $\mathbb{F}$ increases.
\end{theorem}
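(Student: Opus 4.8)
By the rate--region characterization \eqref{eq1:individual}--\eqref{eq1:sumrate}, it suffices to show that the transfer matrices $\mathbf{H}_1,\mathbf{H}_2,\mathbf{G}_2$ induced by the matrix $\mathbf{F}$ returned by $\textsc{RecursiveCoding}(\textsc{Reduction}(\Omega),\mathbb{F})$ satisfy $\rank(\mathbf{H}_1)\ge 1$, $\rank(\mathbf{G}_2)\ge 1$ and $\Grank(\mathbf{H}_1,\mathbf{H}_2,\mathbf{G}_2)\ge 2$, each with probability tending to $1$ as $|\mathbb{F}|\to\infty$. The plan is to prove all three at once by tracking, along the destination--reduction sequence $\Omega^{(N)},\Omega^{(N-1)},\dots,\Omega^{(0)}$, the quantities $r_1^{(i)}=\rank(\mathbf{H}_1^{T_1^{(i)}})$, $r_2^{(i)}=\rank(\mathbf{G}_2^{T_2^{(i)}})$ and $g^{(i)}=\Grank(\mathbf{H}_1^{T_1^{(i)}},\mathbf{H}_2^{T_1^{(i)}},\mathbf{G}_2^{T_2^{(i)}})$, and showing by induction on decreasing $i$ (the order in which the recursive coding algorithm constructs the solution) that $r_1^{(i)}\ge 1$, $r_2^{(i)}\ge 1$ and $g^{(i)}\ge 2$; the desired statement is the case $i=0$.

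\emph{Base case $i=N$.} Here the coding is the identity and, by Property~(iv), $T_1^{(N)}\cup T_2^{(N)}=S_1\cup S_2$, so the nonzero columns of $\begin{bmatrix}\mathbf{H}_1^{T_1^{(N)}}\\ \mathbf{H}_2^{T_1^{(N)}}\end{bmatrix}$ and of $\begin{bmatrix}\mathbf{H}_2^{T_1^{(N)}} & \mathbf{G}_2^{T_2^{(N)}}\end{bmatrix}$ are distinct standard basis vectors, and a one-line computation yields $g^{(N)}=|T_1^{(N)}\cap\Out(s_1)|+|(T_1^{(N)}\cup T_2^{(N)})\cap\Out(s_2)|$. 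Since the algorithm halts before it ever processes a source edge, Property~(ii) implies that every out-edge of $s_1$ reaching $T_1$ lies in $T_1^{(N)}$, every out-edge of $s_2$ reaching $T_1$ lies in $T_1^{(N)}$, and every out-edge of $s_2$ reaching $T_2$ lies in $T_1^{(N)}\cup T_2^{(N)}$. Hence $\mathcal{Q}:=\bigl(T_1^{(N)}\cap\Out(s_1)\bigr)\cup\bigl((T_1^{(N)}\cup T_2^{(N)})\cap\Out(s_2)\bigr)$ is simultaneously an $s_1$--$T_1$, an $s_2$--$T_1$ and an $s_2$--$T_2$ cut, i.e.\ a GNS-cut set, so $g^{(N)}=|\mathcal{Q}|\ge 2$ by hypothesis; and $r_1^{(N)},r_2^{(N)}\ge 1$ because $\Out(s_1)\cap T_1^{(N)}$ and $\Out(s_2)\cap T_2^{(N)}$ are nonempty.

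\emph{Inductive step.} Passing from $\Omega^{(i+1)}$ to $\Omega^{(i)}$, the algorithm only fixes local coding at $v=\Tail(E)$ through $\mathbf{F}^{I_1^{(i)},O_1^{(i)}}$ and $\mathbf{F}^{I_2^{(i)},B_2^{(i)}}$, which forces $\Span(\mathbf{H}_1^{T_1^{(i)}})\subseteq\Span(\mathbf{H}_1^{T_1^{(i+1)}})$ and the analogous inclusions for the other blocks; consequently $r_1,r_2,g$ can only decrease, and the task is to rule out a drop below $1$, respectively $2$. For $r_1^{(i)}$: by Property~(v), $T_1^{(i)}$ separates $s_1$ from $T_1$ in $\mathcal{G}$, so there is a surviving $s_1$-to-$T_1^{(i)}$ path through $v$; along it a randomization step acts as random linear coding and keeps a rank-one column with high probability, while an alignment step is invoked only when \eqref{eqn:con0}--\eqref{eqn:con2} hold, and \eqref{eqn:con0} together with Lemma~\ref{lem:alignment} makes $\Grank$ strictly increase when $e_{i,j}$ is appended, which (via a short rank manipulation, plus the cut property of $T_1^{(i)}$) rules out the simultaneous vanishing of all $\mathbf{H}_1$-columns; the argument for $r_2^{(i)}\ge1$ is analogous, using that $B_2^{(i)}$ is coded randomly in Phase~1 and $A_2^{(i)}\subseteq O_1^{(i)}$ inherits its coding. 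The heart of the proof is $g^{(i)}\ge 2$, which I would establish by contraposition: assuming $g^{(i+1)}\ge 2$ but $g^{(i)}\le 1$, produce a GNS-cut set of size at most $1$ in $\mathcal{G}$, contradicting the hypothesis. The mechanism is that the shortfall is localized at $v$: whenever a step of Phase~2 fails to increase the running $\Grank$, conditions \eqref{eqn:con0}--\eqref{eqn:con2} are negated, and Lemma~\ref{lma:grankcol} together with these negations pins down the relationships among the column spans of $\mathbf{H}_1^{T_1^{(i)}},\mathbf{H}_2^{T_1^{(i)}},\mathbf{G}_2^{T_2^{(i)}}$ at $v$; combining those rank relations with Property~(v) (each $T_j^{(i)}$ is a source-to-$T_j$ cut) and a max-flow/min-cut argument in the spirit of Section~\ref{sec:mfmc} yields an edge set $\mathcal{Q}$ with $|\mathcal{Q}|\le1$ that cuts $s_1$--$T_1$, $s_2$--$T_1$ and $s_2$--$T_2$ simultaneously.

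I expect the main obstacle to be exactly this last extraction: showing that a ``the greedy step made no progress'' event always corresponds to an honest single-edge GNS cut in $\mathcal{G}$, rather than to a limitation of the algorithm's per-edge alignment-versus-randomization decisions. This needs (i) a complete case analysis of the ways \eqref{eqn:con0}--\eqref{eqn:con2} can fail, with each case translated into a statement about $\Span(\mathbf{H}_2^{I_1^{(i)}})$ relative to the spans of the already-coded destination submatrices, and (ii) converting those statements into graph cuts using Property~(v) and the edge-by-edge transfer-matrix decomposition of Section~\ref{sec:mfmc}. A secondary but necessary point is to union-bound the $\mathcal{O}(|\mathcal{V}|)$ high-probability events (one per stage, and one per edge within a stage) so that the overall probability of success still tends to $1$ as $|\mathbb{F}|\to\infty$.
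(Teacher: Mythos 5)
Your proposal follows the same overall strategy as the paper: induct on the stage index $i$ from $N$ down to $0$, tracking $\rank(\mathbf{H}_1^{T_1^{(i)}})$, $\rank(\mathbf{G}_2^{T_2^{(i)}})$ and $\Granki{i}$, and showing each stays $\geq 1,1,2$ respectively (the paper's Claims~\ref{claim1},~\ref{claim2},~\ref{claim3}). Your base case is handled slightly differently — you exhibit an explicit GNS-cut whose size equals $g^{(N)}$, whereas the paper argues by contradiction that $g^{(N)}=1$ forces $T_1^{(N)}=T_2^{(N)}$ to be a single edge — but both are sound and essentially equivalent.

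The gap, which you yourself flag, is the inductive step for the Grank: you do not carry out the extraction of a single-edge GNS cut from the event ``$g^{(i)}\leq 1$ while $g^{(i+1)}\geq 2$,'' and the route you gesture at — pushing rank relations through the edge-by-edge transfer-matrix decomposition of Section~\ref{sec:mfmc} — is not the mechanism the paper uses and I do not see how it would close. The paper's bridge from linear algebra to graph structure goes through two auxiliary lemmas absent from your proposal: Lemma~\ref{lma:emptycols} (the coded matrix $\begin{bmatrix}\mathbf{H}_1^{T_1^{(i)}}\\\mathbf{H}_2^{T_1^{(i)}}\end{bmatrix}$ never acquires an all-zero column) and Lemma~\ref{lma:emptyG2} (once any alignment step occurs upstream, $\mathbf{G}_2^{T_2^{(i)}\backslash T_1^{(i)}}\neq\mathbf{0}$ at every downstream stage). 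These are what let the paper convert $\mathbf{G}_2^{U_2^{(i)}}=\mathbf{0}$ into the graph statement ``$s_2$ does not communicate with $U_2^{(i)}$'' (because Lemma~\ref{lma:emptyG2} rules out upstream alignment, so all upstream coding was purely random, and then the classical Koetter--M\'edard result applies), and let it conclude $U_1^{(i)}=\varnothing$ from $\begin{bmatrix}\mathbf{H}_1^{U_1^{(i)}}\\\mathbf{H}_2^{U_1^{(i)}}\end{bmatrix}=\mathbf{0}$ via Lemma~\ref{lma:emptycols}. Moreover, the paper's Claim~\ref{claim3} requires a two-case split on whether $\rank(\mathbf{G}_2^{T_2^{(i+1)}})\geq 2$ or $\Granki{i+1}>\rank(\mathbf{G}_2^{T_2^{(i+1)}})=1$; both cases deliver the same structural conclusion ($\mathbf{G}_2^{U_2^{(i)}}=\mathbf{0}$, $U_1^{(i)}=B_2^{(i)}=\varnothing$, $O_1^{(i)}=A_2^{(i)}$, $|O_1^{(i)}|=1$), but the rank manipulations differ, and neither reduces to a ``max-flow/min-cut argument in the spirit of Section~\ref{sec:mfmc}.'' Without these ingredients, the outline will not compile into a proof. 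Your secondary observation about union-bounding the high-probability events is correct and is something the paper explicitly waves off in a Remark rather than carrying through.

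Your brief treatment of Claims~\ref{claim1}--\ref{claim2} also elides one needed observation: the reason $\mathbf{H}_1^{I_1^{(i)}}\neq\mathbf{0}$ when $\mathbf{H}_1^{U_1^{(i)}}=\mathbf{0}$ is that $T_1^{(i+1)}=U_1^{(i)}\cup I_1^{(i)}$, so the inductive hypothesis $\mathbf{H}_1^{T_1^{(i+1)}}\neq\mathbf{0}$ forces it; and in the alignment branch the argument is not that ``Grank strictly increases'' but rather that Lemma~\ref{lem:alignment} simultaneously puts the new column's $\mathbf{H}_2$-part inside $\Span[\mathbf{H}_2^{U_1^{(i)}}~\mathbf{H}_2^{O_{1,j-1}^{(i)}}]$ while keeping the full column outside $\Span\begin{bmatrix}\mathbf{H}_1^{U_1^{(i)}} & \mathbf{H}_1^{O_{1,j-1}^{(i)}}\\ \mathbf{H}_2^{U_1^{(i)}} & \mathbf{H}_2^{O_{1,j-1}^{(i)}}\end{bmatrix}$, and this is only possible if the $\mathbf{H}_1$-part of the new column is nonzero.
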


We first state the a few lemmas on the transfer matrices produced by the recursive coding algorithm. They are useful for
the proof of Theorem~\ref{thm:oneone}. 

\begin{lemma}
    \label{lma:grank1properties}
    Let $\mH{1}, \mH{2}$ and $\mG{2}$ be matrices of dimensions $P_1 \times Q_1$, $P_2 \times Q_1$ and $P_2 \times Q_1$
    respecitvely. If $\mH{1} \neq \mathbf{0}$ and $\mG{2} \neq \mathbf{0}$, then $\Grank \left( \mH{1}, \mH{2}, \mG{2}
    \right) = 1$ if and only if the following holds,
    \begin{align*}
        \rank \left( \mH{2} \right) = 
        \rank\left( 
        \begin{bmatrix}
            \mH{1} \\ \mH{2}
        \end{bmatrix}
        \right) 
        = \rank \left( \mG{2} \right)
        = \rank \left( 
        \begin{bmatrix}
            \mH{2}  & \mG{2}
        \end{bmatrix}
        \right) = 1.
    \end{align*}
    %In particular, $\Span\left( \mH{2} \right) = \Span \left( \mG{2} \right)$.
\end{lemma}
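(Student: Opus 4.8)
The plan is to prove both implications directly from elementary monotonicity of the rank function, rather than invoking the submodularity machinery used earlier. The ``if'' direction is immediate: if $\rank(\mH{2}) = \rank\left(\begin{bmatrix}\mH{1}\\\mH{2}\end{bmatrix}\right) = \rank\left(\begin{bmatrix}\mH{2} & \mG{2}\end{bmatrix}\right) = 1$, then by the definition of $\Grank$ we get $\Grank(\mH{1},\mH{2},\mG{2}) = 1 + 1 - 1 = 1$, and the hypotheses $\mH{1}\neq\mathbf{0}$, $\mG{2}\neq\mathbf{0}$ are not even needed here.

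For the ``only if'' direction, I would first record two facts valid for arbitrary matrices: appending columns does not decrease rank, so $\rank\left(\begin{bmatrix}\mH{2} & \mG{2}\end{bmatrix}\right)\ge\rank(\mH{2})$ and $\rank\left(\begin{bmatrix}\mH{2} & \mG{2}\end{bmatrix}\right)\ge\rank(\mG{2})$; and appending rows does not decrease rank, so $\rank\left(\begin{bmatrix}\mH{1}\\\mH{2}\end{bmatrix}\right)\ge\rank(\mH{1})$. Combining the first with the definition of $\Grank$ gives $\Grank = \rank\left(\begin{bmatrix}\mH{1}\\\mH{2}\end{bmatrix}\right) + \left(\rank\left(\begin{bmatrix}\mH{2}&\mG{2}\end{bmatrix}\right) - \rank(\mH{2})\right) \ge \rank\left(\begin{bmatrix}\mH{1}\\\mH{2}\end{bmatrix}\right) \ge \rank(\mH{1}) \ge 1$, the last step using $\mH{1}\neq\mathbf{0}$. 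Symmetrically, using $\rank\left(\begin{bmatrix}\mH{1}\\\mH{2}\end{bmatrix}\right)\ge\rank(\mH{2})$, one obtains $\Grank \ge \rank\left(\begin{bmatrix}\mH{2}&\mG{2}\end{bmatrix}\right)\ge\rank(\mG{2})\ge 1$, using $\mG{2}\neq\mathbf{0}$. Assuming $\Grank = 1$, the first chain squeezes $\rank\left(\begin{bmatrix}\mH{1}\\\mH{2}\end{bmatrix}\right) = 1$ and the second squeezes $\rank\left(\begin{bmatrix}\mH{2}&\mG{2}\end{bmatrix}\right) = 1$; substituting these two values back into the definition of $\Grank$ forces $1 = 1 + 1 - \rank(\mH{2})$, i.e. $\rank(\mH{2}) = 1$. (In particular $\rank(\mH{1})$ and $\rank(\mG{2})$ are also pinned to $1$ along the way, though only $\rank(\mH{2})$, $\rank\left(\begin{bmatrix}\mH{1}\\\mH{2}\end{bmatrix}\right)$ and $\rank\left(\begin{bmatrix}\mH{2}&\mG{2}\end{bmatrix}\right)$ appear in the statement.)

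I do not expect a genuine obstacle: the argument is entirely elementary. The only points needing care are (a) that the hypotheses $\mH{1}\neq\mathbf{0}$ and $\mG{2}\neq\mathbf{0}$ are used precisely to guarantee that the two ``positive'' rank terms are at least $1$ — without them $\Grank$ could be $0$, and the claimed equalities would fail — and (b) that once those terms are forced to equal $1$, the value of $\rank(\mH{2})$ is determined purely by arithmetic in the definition of $\Grank$, so it cannot be $0$ (which would give $\Grank = 2$) and must equal $1$.
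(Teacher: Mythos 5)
Your proof is correct, and it is essentially the verification that the paper deems unnecessary to spell out: the paper simply states ``It is straightforward to verify Lemma~\ref{lma:grank1properties} from the definition of Grank'' and gives no argument. Your two squeeze chains — rewriting $\Grank$ once with the nonnegative term $\rank[\mH{2}~\mG{2}]-\rank(\mH{2})$ and once with the nonnegative term $\rank\begin{bmatrix}\mH{1}\\\mH{2}\end{bmatrix}-\rank(\mH{2})$ — are exactly the elementary monotonicity argument the authors had in mind, and your observations (that the nonzero hypotheses enter only in the ``only if'' direction to guarantee $\Grank\ge 1$, and that $\rank(\mH{2})=1$ then follows by arithmetic) are accurate.
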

\begin{lemma}
    \label{lma:emptycols}
    {
    In the recursive coding algorithm,  for any $i \in \{0,1,2,\ldots,N\},$ the matrix $\HTall$ does not contain an all zeroes column with a probability that tends to $1$ as the size of field $|\mathbb{F}|$ increases.
}
\end{lemma}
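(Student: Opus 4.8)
The plan is to prove the statement by \emph{downward induction} on the stage index $i$, from $i=N$ to $i=0$, mirroring the recursive coding construction. I will freely use two consequences of the destination reduction properties. First, by Property (ii) every edge of $T_1^{(i)}$ communicates with some edge of $T_1$, hence $T_1^{(i)}\cap B_2^{(i)}=\varnothing$; so the columns of $\HTall$ indexed by $T_1^{(i)}$ are produced \emph{only} by coefficients inherited from the solution to $\Omega^{(i+1)}$ together with the Phase~2 coefficients $\mathbf{F}^{I_1^{(i)},\{e\}}$, $e\in O_1^{(i)}$, never by Phase~1. Second, by Property (iii) every edge of $\left(T_1^{(i)}\cup T_2^{(i)}\right)\setminus\left(T_1^{(i+1)}\cup T_2^{(i+1)}\right)$ has strictly higher topological order than every edge of $\mathcal{E}^{(i+1)}$, so passing from the stage-$(i{+}1)$ code to the stage-$i$ code leaves the global coding vector — hence the column of $\mathbf{H}_1$ and of $\mathbf{H}_2$ — of every edge in $T_1^{(i+1)}$ unchanged. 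As usual I would also assume WLOG that every edge of $\mathcal{G}$ lies on a directed path from $s_1$ or $s_2$ to some edge of $T_1\cup T_2$, which in particular forces $I_1^{(i)}\neq\varnothing$ whenever $O_1^{(i)}\neq\varnothing$. For the base case $i=N$: Property (iv) gives $T_1^{(N)}\subseteq S_1\cup S_2$, so each column of $\HT{1}{N}$ or $\HT{2}{N}$ indexed by an edge of $T_1^{(N)}$ is the restriction to the source edges of a standard basis vector, hence nonzero; the claim holds at $i=N$ deterministically.

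For the inductive step I would write $T_1^{(i)}=U_1^{(i)}\sqcup O_1^{(i)}$ with $U_1^{(i)}\subseteq T_1^{(i+1)}$. For $e\in U_1^{(i)}$ the column of $\HTall$ indexed by $e$ is inherited unchanged from stage $i{+}1$, hence nonzero with probability tending to $1$ by the inductive hypothesis. For $e\in O_1^{(i)}$, \eqref{eq:stageicoding} shows this column equals $\begin{bmatrix}\mathbf{H}_1^{I_1^{(i)}}\\\mathbf{H}_2^{I_1^{(i)}}\end{bmatrix}\mathbf{F}^{I_1^{(i)},\{e\}}$, i.e.\ a linear combination of the columns of $\begin{bmatrix}\HT{1}{i+1}\\\HT{2}{i+1}\end{bmatrix}$ indexed by $I_1^{(i)}\subseteq T_1^{(i+1)}$; by the inductive hypothesis none of those columns vanishes, so $\begin{bmatrix}\mathbf{H}_1^{I_1^{(i)}}\\\mathbf{H}_2^{I_1^{(i)}}\end{bmatrix}$ is a nonzero matrix. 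I would then split on how the algorithm coded $e$. If $e$ was coded in a randomization step, $\mathbf{F}^{I_1^{(i)},\{e\}}$ is uniform over $\mathbb{F}^{|I_1^{(i)}|}$, so the column lies in the (proper) right kernel of a nonzero matrix with probability at most $1/|\mathbb{F}|$ and is nonzero otherwise. If $e=e_{i,j}$ was coded in an alignment step, then \eqref{eqn:con0} holds, and the computation in the Discussion following Algorithm~\ref{alg:recurs} shows that, with probability tending to $1$, appending the column of $e_{i,j}$ raises the $\Grank$ of the relevant triple of matrices by exactly $1$; by Statement (iii) of Lemma~\ref{lma:grankcol}, if the appended columns $\mathbf{H}_1^{\{e_{i,j}\}},\mathbf{H}_2^{\{e_{i,j}\}}$ (and, when $e_{i,j}\in A_2^{(i)}$, also $\mathbf{G}_2^{\{e_{i,j}\}}$) all lay in the previous column spans — in particular if $\begin{bmatrix}\mathbf{H}_1^{\{e_{i,j}\}}\\\mathbf{H}_2^{\{e_{i,j}\}}\end{bmatrix}=\mathbf{0}$ — the $\Grank$ would be unchanged. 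Since $A_2^{(i)}\subseteq T_1^{(i)}\cap T_2^{(i)}$ one has $\mathbf{G}_2^{\{e_{i,j}\}}=\mathbf{H}_2^{\{e_{i,j}\}}$, so a zero stacked column would also force $\mathbf{G}_2^{\{e_{i,j}\}}=\mathbf{0}$; hence the stacked column is nonzero with probability tending to $1$. A union bound over the at most $|\mathcal{E}|$ edges coded and the $N\le|\mathcal{V}|$ stages then yields the claim for all $i$ simultaneously.

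The hard part will be the alignment step: there the local coding vector is pinned to lie in $\textrm{ker}\!\left(\mathbf{H}_2^{T_1^{(i+1)}}\right)$ rather than being generic, so one cannot argue ``a random combination of a nonzero matrix is nonzero'' directly; instead one must exploit the fact that, by design, this step strictly increases the $\Grank$, and then push this incompatibility down to a single column of $\begin{bmatrix}\mathbf{H}_1\\\mathbf{H}_2\end{bmatrix}$ via Lemma~\ref{lma:grankcol}(iii) together with the bookkeeping identity $\mathbf{G}_2^{\{e_{i,j}\}}=\mathbf{H}_2^{\{e_{i,j}\}}$ on $A_2^{(i)}$ (needed so that the $\mathbf{G}_2$ column cannot ``carry'' the rank increase while the stacked $(\mathbf{H}_1,\mathbf{H}_2)$ column is zero). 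The remaining points are routine: checking $B_2^{(i)}\cap T_1^{(i)}=\varnothing$ so that Phase~1 never affects a column indexed by $T_1^{(i)}$, checking the inheritance of $U_1^{(i)}$- and $I_1^{(i)}$-columns from the topological-order property (Property (iii)), and collecting the finitely many ``high-probability'' events into one union bound.
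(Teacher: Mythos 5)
Your proof is correct and the overall inductive structure matches the paper's proof exactly: establish the base case at $i=N$ from the unit-vector initialization, inherit the $U_1^{(i)}$-columns unchanged from stage $i+1$, handle randomization steps by noting a random combination of nonzero columns is nonzero with high probability, and treat the alignment step separately. The one place you depart from the paper is the alignment step, which you correctly flag as the delicate point but then handle more laboriously than necessary. The paper's argument is direct: in the alignment case Lemma~\ref{lem:alignment} selects $\mathbf{f}$ so that condition \eqref{eqn:fcondition1} holds, i.e.\ the stacked column $\begin{bmatrix}\mathbf{A}\mathbf{f}\\\mathbf{B}\mathbf{f}\end{bmatrix}$ lies \emph{outside} the column span of the already-constructed $\begin{bmatrix}\mathbf{H}_1\\\mathbf{H}_2\end{bmatrix}$, and since the zero vector always lies inside any column span, the new column is automatically nonzero. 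Your route --- deduce from \eqref{eqn:con0} that the $\Grank$ strictly increases, argue by the contrapositive of Lemma~\ref{lma:grankcol}(iii) that a zero stacked column would leave the $\Grank$ unchanged, and patch the $A_2^{(i)}$ case with the identity $\mathbf{G}_2^{\{e_{i,j}\}}=\mathbf{H}_2^{\{e_{i,j}\}}$ --- reaches the same conclusion but rederives through $\Grank$ bookkeeping a fact that Lemma~\ref{lem:alignment} already hands you for free. Both arguments are valid; the paper's is shorter and worth internalizing, since ``outside a column span'' already implies ``nonzero'' without any further appeal to the $\Grank$.
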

\begin{lemma}
\label{lma:emptyG2}
 Let $Q^{(i)} = T_2^{(i)} \backslash T_1^{(i)}$. Given that the recursive algorithm performs the alignment step
 from some stage $k+1$ to stage $k$, then, with a probability that tends to $1$ as the field size $|\mathbb{F}|$ increases,
 $\mathbf{G}_2^{Q^{(i)}} \neq \mathbf{0}$ for every stage $i$ in $\{0,1,2\ldots, k\}$.
\end{lemma}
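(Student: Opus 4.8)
The plan is a downward induction on $i$, from $i=k$ to $i=0$, each step incurring at most an additive $O(1/|\mathbb{F}|)$ failure probability, so that a union bound over the at most $N$ stages yields the claim. I would first record a few purely combinatorial facts about the destination‑reduction sets that make the bookkeeping go through: $A_2^{(i)}\subseteq O_1^{(i)}\cap O_2^{(i)}\subseteq T_1^{(i)}\cap T_2^{(i)}$ (so $\mathbf{G}_2^{A_2^{(i)}}=\mathbf{H}_2^{A_2^{(i)}}$, with columns already among those of $\mathbf{H}_2^{O_1^{(i)}}$); $B_2^{(i)}\subseteq Q^{(i)}$ (its edges lie in $T_2^{(i)}$, are disjoint from $O_1^{(i)}$ by definition, and have topological order $\Ord(v)>\Ord(e)$ for every $e\in U_1^{(i)}$, so they miss $T_1^{(i)}=U_1^{(i)}\cup O_1^{(i)}$); the membership of any edge in $T_1^{(\cdot)}$ is an interval, and Property (ii) of the destination‑reduction algorithm then gives $U_2^{(i)}\cap T_1^{(i)}\subseteq U_1^{(i)}$; and finally ``$I_1^{(i)},I_2^{(i)}$ both nonempty $\Rightarrow I_1^{(i)}=I_2^{(i)}=\In(v)$''.

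\emph{Base case $i=k$.} When the alignment step fires at stage $k$ for some edge of $O_1^{(k)}$, conditions \eqref{eqn:con1}--\eqref{eqn:con2} hold for some prefix $\overline{O}=O_{1,j-1}^{(k)}$, $\overline{A}=A_{2,j-1}^{(k)}$: the space $\Span(\mathbf{H}_2^{I_1^{(k)}})$ lies in the span of $[\,\mathbf{H}_2^{U_1^{(k)}}\ \mathbf{H}_2^{\overline{O}}\ \mathbf{G}_2^{U_2^{(k)}}\ \mathbf{G}_2^{\overline{A}}\ \mathbf{G}_2^{B_2^{(k)}}\,]$ but not of $[\,\mathbf{H}_2^{U_1^{(k)}}\ \mathbf{H}_2^{\overline{O}}\,]$. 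The columns $\mathbf{G}_2^{\overline{A}}=\mathbf{H}_2^{\overline{A}}$ (with $\overline{A}\subseteq\overline{O}$) and $\mathbf{G}_2^{U_2^{(k)}\cap T_1^{(k)}}=\mathbf{H}_2^{U_2^{(k)}\cap T_1^{(k)}}$ (with $U_2^{(k)}\cap T_1^{(k)}\subseteq U_1^{(k)}$) are redundant for that span, so the two conditions force $[\,\mathbf{G}_2^{U_2^{(k)}\setminus T_1^{(k)}}\ \mathbf{G}_2^{B_2^{(k)}}\,]\neq\mathbf{0}$; since both index sets are contained in $Q^{(k)}$, this gives $\mathbf{G}_2^{Q^{(k)}}\neq\mathbf{0}$ deterministically.

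\emph{Inductive step.} Assume $\mathbf{G}_2^{Q^{(i+1)}}\neq\mathbf{0}$, witnessed by a nonzero column $\mathbf{G}_2^{\{e\}}$ with $e\in Q^{(i+1)}=T_2^{(i+1)}\setminus T_1^{(i+1)}$. Since $T_2^{(i+1)}=U_2^{(i)}\sqcup I_2^{(i)}$, either $e\in U_2^{(i)}$ --- in which case the stage‑$i$ coefficients (all placed on edges of order $\Ord(v)>\Ord(e)$) do not alter the transfer to $e$, so $\mathbf{G}_2^{\{e\}}$ is unchanged, while Property (ii) forces $e\notin T_1^{(i)}$, hence $e\in Q^{(i)}$ --- or $e\in I_2^{(i)}$, in which case $\mathbf{G}_2^{I_2^{(i)}}\neq\mathbf{0}$. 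In the latter case $E_2^{(i)}\neq\varnothing$; moreover $B_2^{(i)}=\varnothing$ would force $E_2^{(i)}\subseteq E_1^{(i)}$, hence $E_1^{(i)}\neq\varnothing$ and $\In(v)\subseteq T_1^{(i+1)}$, contradicting $e\in I_2^{(i)}\subseteq\In(v)$ and $e\notin T_1^{(i+1)}$; thus $B_2^{(i)}\neq\varnothing$. Since Phase 1 of stage $i$ chooses $\mathbf{F}^{I_2^{(i)},B_2^{(i)}}$ uniformly at random and $\mathbf{G}_2^{I_2^{(i)}}\neq\mathbf{0}$, the product $\mathbf{G}_2^{B_2^{(i)}}=\mathbf{G}_2^{I_2^{(i)}}\mathbf{F}^{I_2^{(i)},B_2^{(i)}}$ is nonzero with probability at least $1-1/|\mathbb{F}|$, and $B_2^{(i)}\subseteq Q^{(i)}$, giving $\mathbf{G}_2^{Q^{(i)}}\neq\mathbf{0}$ with that probability.

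\emph{Main obstacle.} The only probabilistic input is the $e\in I_2^{(i)}$ sub‑case, and the delicate point is the combinatorial assertion that a $\mathbf{G}_2$‑witness can be ``pushed back'' onto $\In(v)$ only when a fresh destination‑$2$‑only outgoing edge of $v$ exists (i.e. $B_2^{(i)}\neq\varnothing$), through which random coding regenerates a witness in $Q^{(i)}$. Verifying this, and the redundant‑column step in the base case, rests entirely on the destination‑reduction structure --- Property (ii), the interval property of $T_j^{(\cdot)}$‑membership, and the small identities recorded in the first paragraph --- so I would nail those down carefully before assembling the induction.
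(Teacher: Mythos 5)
Your proof is correct and follows essentially the same strategy as the paper's: establish $\mathbf{G}_2^{Q^{(k)}}\neq\mathbf{0}$ directly from the two alignment conditions (after discarding the $\mathbf{H}_2$-duplicated columns $\mathbf{G}_2^{\overline{A}}$ and $\mathbf{G}_2^{U_2^{(k)}\cap T_1^{(k)}}$), and then propagate non-vanishing by downward induction on the stage, with Phase-1 random coding supplying the only probabilistic ingredient. The one place you diverge is the inductive step: the paper argues globally that every edge in $Q^{(i)}$ fails to communicate with $T_1$, so each column indexed by $Q^{(i)}$ is either a source column or a Phase-1 random product, and non-vanishing then transfers along parent/child relations; you instead track a single witnessing nonzero column $e\in Q^{(i+1)}$, split on $e\in U_2^{(i)}$ (column literally unchanged and, by the order comparison $\Ord(e)<\Ord(v)$, still lying in $Q^{(i)}$) versus $e\in I_2^{(i)}$, and in the latter case derive $B_2^{(i)}\neq\varnothing$ from the $E_1^{(i)},E_2^{(i)}$ bookkeeping before invoking Phase-1 randomization. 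Your version is somewhat more local and spells out explicitly why the edge that inherits the nonzero column must be coded in Phase 1 rather than Phase 2, sidestepping the paper's unproved ``$Q^{(i)}$ does not communicate with $T_1$'' assertion, but the underlying mechanism --- only Phase-1 coding affects the $Q^{(\cdot)}$ columns --- is the same in both.
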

It is straightforward to verify Lemma~\ref{lma:grank1properties} from the definition of Grank. The proofs of Lemmas
\ref{lma:emptycols} and \ref{lma:emptyG2} are provided in Appendix \ref{app:emtyG2proof} and \ref{app:emtycolsproof}
respectively.

\begin{remark}
	Henceforth, all the statements of the proof hold true in a probabilistic sense. That is, the statements are true with a probability that tend to $1$ as the field size $|\mathbb{F}|$ tends to infinity. To avoid laborious notation and wording, we omit mentioning this explicitly in our proof.
\end{remark}

\subsection{Proof Overview}
We first provide a brief overview of the proof of Theorem~\ref{thm:oneone} and summarize the basic ideas of the proof.
Note that we only prove the necessity part of the theorem in this paper, as the sufficiency follows directly from the
GNS outer bound~\cite{Kamath_Tse_Ananthram}. In particular, it suffices to show that the recursive coding algorithm will
generate source-to-destination transfer matrices which give an achievable rate region that contains the point $(1,1)$.
In the proof, we show that this is true for an arbitrary two-unicast-Z problem $\Omega = (\mathcal{G}, \mathcal{S},
\mathcal{T})$ with underlying graph $\mathcal{G}$. We shall use the achievable region given in Theorem $1$ in
\cite{Zeng_Cadambe_Medard}, which is simplified to the following in the case of two-unicast-Z networks,
\begin{align}
    R_1  &\leq \rank \left(\HT{1}{0} \right)~, \label{eqn:achievableregion1}\\
    R_2  &\leq \rank \left(\GT{2}{0} \right)~, \label{eqn:achievableregion2}\\
    R_1 + R_2 &\leq \Granki{0}. \label{eqn:achievableregion3}
\end{align}
In order to show that the above achievable region contains the point $(1,1)$, we prove three claims on the transfer
matrices, each corresponds to an upper bound in \eqref{eqn:achievableregion1} to \eqref{eqn:achievableregion3}. 
Recall that the algorithm is divided into stages from $N$ to $0$. At stage $i$, the algorithm specifies coding
co-efficients for the edges in the destination sets of $\Omega^{(i)}$, specifically, for edges in $T_1^{(i)} \cup
T_2^{(i)} \backslash \left(T_1^{(i+1)} \cup T_2^{(i+1)}  \right)$, by using the coding co-efficients in the previous
stage $i+1$. Also, recall that in the process of coding for stage $i$, for destination $T_j^{(i)}$, $U_j^{(i)}$ denotes
the unchanged edges; $I_j^{(i)}$ denotes the incoming edges which are coded in stage $i+1$ and are removed from the
destination set; $O_j^{(i)}$ denotes the outgoing edges which are to be coded from $I_j^{(i)}$ edges and enter the
destination set. Furthermore, $O_2^{(i)}$ is the union of two disjoint set $B_2^{(i)}$ and $A_2^{(i)}$. The former set
of edges are coded in phase $1$ of the algorithm while the latter edges are coded in phase $2$ as a subset of
$O_1^{(i)}$. To prove Theorem \ref{thm:oneone}, we show that
\begin{align}
    \rank \left( \HT{1}{i} \right) &\geq 1 ~, &  \rank \left(  \GT{2}{i} \right) &\geq 1~, & \Granki{i} &\geq 2~,  
    \label{eqn:matrixrankbounds}
\end{align}
for every stage $i$, $0 \leq i \leq N$, unless there is a single edge GNS cut in the graph. Note that the above
statement is trivially true for the initial stage, i.e. $i=N$. Therefore, to prove the theorem, we assume that
\eqref{eqn:matrixrankbounds} holds true for some stage $i+1$ and we show that the conditions remain true in stage $i$
unless there is a single edge GNS cut set. We outline our proof steps below.
}
\begin{itemize}
    \item In Claims \ref{claim1} and \ref{claim2}, we show that the transfer matrices $\HT{1}{i}$ and $\GT{2}{i}$ are
        non-zero, assuming that $\HT{1}{i+1},\GT{2}{i+1}$ are non-zero. We prove this claim by considering the the cases
        of randomization and alignment steps separately. Since in the previous stage, the matrices $\HT{1}{i+1}$ and
        $\GT{2}{i+1}$ are non-zero by assumption, and since random linear combination of a set of non-zero vector does
        not result in a zero vector, a randomization step preserves the desired non-zero property
        of the respective matrices at stage $i$. In case that the stage contains an alignment step, we make use of the
        fact that conditions \eqref{eqn:con0}-\eqref{eqn:con2} have to be fulfilled and the generated column has to
        satsify Lemma~\ref{lem:alignment}. We verify that these conditions imply that neither $\GT{2}{i}$ nor
        $\HT{1}{i}$ are set to 0.
    \item We show in Claim \ref{claim3} that in stage $i$, the Grank of the transfer matrices is always lower bounded by
        $2$ only if there does not exist a single edge whose removal disconnects $(s_1, T_1), (s_2, T_2)$ and $(s_2,
        T_1)$. Specifically, we assume that the claim holds for stage $i+1$ and we show that it remains true for the
        next stage $i$. To do that, we first show that if any alignment step takes place between the two stages, the
        rank of the resulting matrix $\HTall$ is strictly greater than the rank of $\HT{2}{i}$. This implies that the
        Grank is strictly larger than the rank of $\left[\mathbf{H}_{2}^{(i)}~~\mathbf{G}_{2}^{(i)}\right],$ which is at
        least 1. As a consequence, we show that the Grank is at least $2$. 
        
        Subsequently, if the algorithm results in Grank less than $2$, it can only carry out randomization steps at that
        stage. In this case, we use Lemmas~\ref{lma:grank1properties}, \ref{lma:emptycols} and \ref{lma:emptyG2} to show
        that if the Grank at stage $i$ reduces to $1$, then $(a)$ the edges in $U_2^{(i)}$ cannot be communicated from
        source $s_2$ $(b)$  the sets $U_1^{(i)}$ and $B_2^{(i)}$ are empty, and $(c)$ there can be only one newly coded
        edge, common to both destination set, i.e. $\left|O_1^{(i)}\right| = \left|A_2^{(i)}\right|=1$. $(a), (b),$ and $(c)$ imply that
        $T_2^{(i)}$ contains only one edge that $s_2$ communicates with, while $T_1^{(i)}$ contains this particular edge
        as the only edge in the set. 
        {
        Consequently, applying Property (v) in Section \ref{sec:algorithm}, we  
        know that $O_1^{(i)}$ is a cut set between source $s_1, s_2$ and $T_1$. Furthermore, because of $(a)$ and the
        fact that $T_2^{(i)}$ is a cut set between source $s_1, s_2$ and $T_2$, we establish that $A_2^{(i)} = T_2^{(i)}
        \backslash U_2^{(i)}$ is a cut set between $s_2$ and $T_2$. 
        }
        As a result, we infer that the removal of the edge
        in $O_1^{(i)}$ simultaneously disconnects the source-destination pairs $(s_1, T_1), (s_2, T_2)$ and $(s_2,
        T_1)$. In other words, we infer that the edge in $O_1^{(i)}$ forms a GNS-cut set to complete the proof.
\end{itemize}
We now formally present proofs of Claims \ref{claim1},\ref{claim2} and \ref{claim3} which combine to serve as a proof of
Theorem~\ref{thm:oneone}.

\subsection{Proof of Theorem~\ref{thm:oneone}}
\begin{claim}
    For all $0 \leq i \leq N$, $\mathbf{H}_1^{T^{(i)}_1} \neq \mathbf{0}$.
    \label{claim1}
\end{claim}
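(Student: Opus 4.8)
The plan is to proceed by downward induction on the stage index $i$, from $i = N$ down to $i = 0$, following the recursive structure of the coding algorithm. The base case $i = N$ is immediate: at stage $N$ the destination edges $T_1^{(N)}$ coincide with source edges (Property (iv)), so $\mathbf{H}_1^{T_1^{(N)}}$ contains a column that is a standard basis vector (coming from an outgoing edge of $s_1$), which is nonzero; here I would invoke the hypothesis that $s_1$ is connected to at least one edge in $T_1$, together with Property (ii), to guarantee that at least one outgoing edge of $s_1$ survives into $T_1^{(N)}$. For the inductive step, I assume $\mathbf{H}_1^{T_1^{(i+1)}} \neq \mathbf{0}$ and show $\mathbf{H}_1^{T_1^{(i)}} \neq \mathbf{0}$.

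The key observation is that $T_1^{(i)} = U_1^{(i)} \cup O_1^{(i)}$, where $U_1^{(i)} = T_1^{(i)} \cap T_1^{(i+1)}$ are the unchanged edges. I would split into two cases. \textbf{Case A: $U_1^{(i)} \neq \varnothing$.} Then $\mathbf{H}_1^{T_1^{(i)}}$ contains the columns $\mathbf{H}_1^{U_1^{(i)}}$, which are a subset of the columns of $\mathbf{H}_1^{T_1^{(i+1)}}$; but I cannot immediately conclude these are nonzero just because the whole matrix $\mathbf{H}_1^{T_1^{(i+1)}}$ is nonzero. Here is where Lemma~\ref{lma:emptycols} enters: it guarantees (with high probability) that $\left[\begin{smallmatrix}\mathbf{H}_1^{T_1^{(i+1)}}\\ \mathbf{H}_2^{T_1^{(i+1)}}\end{smallmatrix}\right]$ has no all-zero column, hence in particular every column of $\mathbf{H}_1^{T_1^{(i+1)}}$ whose corresponding $\mathbf{H}_2$-column is also zero must be nonzero --- but that is not quite what I need either. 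Let me instead argue more carefully: I actually need only that \emph{some} column of $\mathbf{H}_1^{T_1^{(i)}}$ is nonzero, so it suffices to track one nonzero column forward. \textbf{Case B: $U_1^{(i)} = \varnothing$,} i.e. all of $T_1^{(i+1)}$ is replaced — then $I_1^{(i)} = T_1^{(i+1)}$ and the new edges $O_1^{(i)}$ are coded from $I_1^{(i)} = \In(v)$ via $\mathbf{H}_1^{O_1^{(i)}} = \mathbf{H}_1^{I_1^{(i)}} \mathbf{F}^{I_1^{(i)}, O_1^{(i)}}$. Since $\mathbf{H}_1^{I_1^{(i)}} = \mathbf{H}_1^{T_1^{(i+1)}} \neq \mathbf{0}$ by the inductive hypothesis, and since $\mathbf{F}^{I_1^{(i)}, O_1^{(i)}}$ has a nonzero column (in a randomization step its entries are uniformly random hence nonzero w.h.p.; in an alignment step the column is the last $|I_1^{(i)}|$ rows of a random vector $\mathbf{v} \in \mathrm{ker}(\mathbf{H}_2^{T_1^{(i+1)}})$, which is a nonzero vector w.h.p., and moreover condition \eqref{eqn:con1} forces this column to act nontrivially on $\mathbf{H}_1$), the product $\mathbf{H}_1^{O_1^{(i)}}$ is nonzero.

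The main obstacle I anticipate is the alignment-step analysis in the case where $U_1^{(i)} = \varnothing$: one must rule out the pathological possibility that the aligned vector $\mathbf{v}$, while lying in $\mathrm{ker}(\mathbf{H}_2^{T_1^{(i+1)}})$, happens to also kill $\mathbf{H}_1^{T_1^{(i+1)}}$, producing a zero column. I would handle this by observing that \eqref{eqn:con1} asserts $\Span(\mathbf{H}_2^{I_1^{(i)}}) \not\subset \Span([\mathbf{H}_2^{U_1^{(i)}}\ \mathbf{H}_2^{O_{1,j-1}^{(i)}}])$, which (combined with the fact that the kernel of $\mathbf{H}_2^{T_1^{(i+1)}}$ is a proper subspace that does not annihilate the relevant $\mathbf{H}_1$-structure, since otherwise the Grank-increase guaranteed by Lemma~\ref{lem:alignment} and condition \eqref{eqn:con0} would be impossible) ensures the generated column contributes to $\mathbf{H}_1$. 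Concretely, if $\mathbf{H}_1^{T_1^{(i+1)}}\mathbf{v}$ were zero for the chosen $\mathbf{v}$, then adding this column could not increase $\mathrm{rank}\left[\begin{smallmatrix}\mathbf{H}_1\\\mathbf{H}_2\end{smallmatrix}\right]$ beyond what $\mathbf{H}_2$ alone allows, contradicting the Grank increase by $1$ established in the Discussion following Algorithm~\ref{alg:recurs}. Wrapping these cases together, and using Property (ii) to ensure that the "nonzero witness column" can always be chosen among edges that genuinely communicate with $T_1$, completes the induction.
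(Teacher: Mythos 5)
Your plan has the right shape -- downward induction with a case split on whether the retained columns already carry the nonzero witness -- but you split on the wrong condition, and that leaves a genuine gap. You split on whether $U_1^{(i)}$ is empty, whereas the paper splits on whether $\mathbf{H}_1^{U_1^{(i)}}$ is the zero matrix. These are not the same: $U_1^{(i)}$ can be nonempty while $\mathbf{H}_1^{U_1^{(i)}} = \mathbf{0}$. You notice this yourself in your Case A, try Lemma~\ref{lma:emptycols}, acknowledge it does not give what you need, and then simply move on to Case B without closing Case A. That is an unfinished branch of the proof. With the paper's split the argument becomes clean: if $\mathbf{H}_1^{U_1^{(i)}} \neq \mathbf{0}$ you are immediately done since $U_1^{(i)} \subseteq T_1^{(i)}$; and if $\mathbf{H}_1^{U_1^{(i)}} = \mathbf{0}$, then because $T_1^{(i+1)} = U_1^{(i)} \cup I_1^{(i)}$ the inductive hypothesis forces $\mathbf{H}_1^{I_1^{(i)}} \neq \mathbf{0}$, and you proceed with the step-by-step analysis -- regardless of whether $U_1^{(i)}$ happens to be empty.

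Your treatment of the alignment step is also not quite right as written. You argue that if $\mathbf{H}_1^{T_1^{(i+1)}}\mathbf{v}$ were zero then the rank of $\left[\begin{smallmatrix}\mathbf{H}_1 \\ \mathbf{H}_2\end{smallmatrix}\right]$ could not increase ``beyond what $\mathbf{H}_2$ alone allows,'' contradicting the Grank increase. In general this inference fails: a new column with zero $\mathbf{H}_1$-part and $\mathbf{H}_2$-part inside $\Span(\mathbf{H}_2)$ can still lie outside the column span of the stacked matrix and thus raise its rank (take $\mathbf{H}_1=[1]$, $\mathbf{H}_2=[1]$, new column $(0,1)$). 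What actually closes the argument is the pair of conclusions of Lemma~\ref{lem:alignment}: the aligned column satisfies $\mathbf{H}_2^{I_1^{(i)}}\mathbf{f} \in \Span\bigl[\mathbf{H}_2^{U_1^{(i)}}\ \mathbf{H}_2^{O_{1,j-1}^{(i)}}\bigr]$ while the stacked column lies outside $\Span\left[\begin{smallmatrix}\mathbf{H}_1^{U_1^{(i)}} & \mathbf{H}_1^{O_{1,j-1}^{(i)}} \\ \mathbf{H}_2^{U_1^{(i)}} & \mathbf{H}_2^{O_{1,j-1}^{(i)}}\end{smallmatrix}\right]$. In the regime where $\mathbf{H}_1^{U_1^{(i)}} = \mathbf{0}$ and no nonzero $\mathbf{H}_1$-column has yet been produced in $O_{1,j-1}^{(i)}$, these two conditions are incompatible with $\mathbf{H}_1^{I_1^{(i)}}\mathbf{f}=\mathbf{0}$, because then writing $\mathbf{H}_2^{I_1^{(i)}}\mathbf{f} = \mathbf{H}_2^{\cdot}\mathbf{a}$ yields the stacked column equal to $\left[\begin{smallmatrix}\mathbf{H}_1^{\cdot}\mathbf{a}\\ \mathbf{H}_2^{\cdot}\mathbf{a}\end{smallmatrix}\right] = \left[\begin{smallmatrix}\mathbf{0}\\ \mathbf{H}_2^{\cdot}\mathbf{a}\end{smallmatrix}\right]$, which is in the span -- contradiction. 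That is precisely the argument the paper makes via Lemma~\ref{lem:alignment}; the appeal to a Grank inequality is neither needed nor, as stated, sound.
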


\begin{proof}
Consider first the case of $i = N$. Since $s_1$ communicates with $t_1$, there exists some source edge $(s_1, v_1) \in
T^{(N)}_1$.  Recall that in the trivial solution for $\Omega^{(N)}$, we simply set the column corresponding the each
edge in $T^{(N)}_1$ to a unit vector indicating the source edge it represents. Since $(s_1, v_1) \in T^{(N)}_1 $ is a
source edge at $s_1$, it gives rise to a non-zero column in $\mathbf{H}^{T_1^{(N)}}_1$. Thus, the claim holds for stage
$N$.

Suppose the claim of $\mathbf{H}^{T_1^{(i+1)}}_1 \neq \mathbf{0}$ is true at some stage $i+1$, consider the stage $i$. Note
that only phase $2$ of the algorithm affects $\mathbf{H}^{T^{(i)}_1}_1$. If $\mathbf{H}_1^{U_1^{(i)}} \neq \mathbf{0}$, since $U_1^{(i)}
\subseteq T_1^{(i)}$, the claim holds for $\HT{1}{i}$. Otherwise, we must have $\mathbf{H}_1^{I_1^{(i)}} \neq
\mathbf{0}$. In this case, consider some step $j$ in phase $2$. 
\begin{itemize}
    \item If it is a randomization step, then $\HT{1}{i} \neq \mathbf{0}$. 
    \item If it is an alignment step, by Lemma~\ref{lem:alignment}, the newly generated column must satisfy,  
        %$\mathbf{H}_1^{I_1^{(i)}} \mathbf{F}^{I_1^{(i)}, \left\{ e_{i, j} \right\}}$ 
        \begin{align}
            \mathbf{H}_2^{I_1^{(i)}} \mathbf{F}^{I_1^{(i)}, \left\{ e_{i, j} \right\}}
            &\in
            \Span\left(  
            \begin{bmatrix}
                \mathbf{H}_2^{U^{(i)}_1} & \mathbf{H}_2^{O_{1,j-1}^{(i)}}
            \end{bmatrix}
            \right),
            \nonumber\\
            \begin{bmatrix}
            \mathbf{H}_1^{I^{(i)}_1} \\
            \mathbf{H}_2^{I^{(i)}_1}   
            \end{bmatrix}
            \mathbf{F}^{I_1^{(i)}, \left\{ e_{i, j} \right\}}
                &\not \in
            \Span \left(
            \begin{bmatrix}
                \mathbf{H}_1^{U^{(i)}_1} & \mathbf{H}_1^{O_{1,j-1}^{(i)}} \\ 
                \mathbf{H}_2^{U^{(i)}_1} & \mathbf{H}_2^{O_{1,j-1}^{(i)}} 
            \end{bmatrix}
            \right)\nonumber~.
        \end{align}
        This is only true when $\mathbf{H}_1^{I_1^{(i)}} \mathbf{F}^{I_1^{(i)}, \left\{ e_{i, j} \right\}} \neq
        \mathbf{0}$.  By the construction of the algorithm, $\mathbf{H}_1^{I_1^{(i)}} \mathbf{F}^{I_1^{(i)}, \left\{
            e_{i, j} \right\}}$ is a column in $\mathbf{H}_1^{T_1^{(i)}}$. 
\end{itemize}
Therefore, $\mathbf{H}_1^{T^{(i)}_1} \neq 0$. This completes the proof.
\end{proof}

\begin{claim}
    For all $0 \leq i \leq N$, $\GT{2}{i} \neq \mathbf{0}$. 
    \label{claim2}
\end{claim}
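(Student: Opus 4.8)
Proof plan for Claim \ref{claim2} ($\GT{2}{i} \neq \mathbf{0}$ for all $0 \le i \le N$).

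The plan is to mirror the structure of the proof of Claim \ref{claim1}, arguing by induction on the stage index $i$, starting from $i = N$ and working down. For the base case $i = N$, since $s_2$ is connected to at least one edge in $T_2$, some source edge $(s_2, v_2)$ lies in $T_2^{(N)}$; in the trivial solution for $\Omega^{(N)}$ its column of $\GT{2}{N}$ is a unit vector indexed by that source edge, hence nonzero. For the inductive step, I would assume $\GT{2}{i+1} \neq \mathbf{0}$ and show $\GT{2}{i} \neq \mathbf{0}$. Write $T_2^{(i)} = U_2^{(i)} \cup A_2^{(i)} \cup B_2^{(i)}$ (using the notation of \eqref{eq:notation1},\eqref{eq:notation2}). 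If the columns of $\GT{2}{i}$ indexed by the unchanged edges $U_2^{(i)}$ are already nonzero, i.e. $\mathbf{G}_2^{U_2^{(i)}} \neq \mathbf{0}$, we are done since $U_2^{(i)} \subseteq T_2^{(i)}$. Otherwise $\mathbf{G}_2^{U_2^{(i)}} = \mathbf{0}$, and since $\GT{2}{i+1} \neq \mathbf{0}$ we must have $\mathbf{G}_2^{I_2^{(i)}} \neq \mathbf{0}$; the task reduces to showing that at least one newly coded column (in $A_2^{(i)}$ or $B_2^{(i)}$) is nonzero.

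The new columns are generated by $\mathbf{G}_2^{A_2^{(i)}} = \mathbf{G}_2^{I_2^{(i)}} \mathbf{F}^{I_2^{(i)},A_2^{(i)}}$ and $\mathbf{G}_2^{B_2^{(i)}} = \mathbf{G}_2^{I_2^{(i)}} \mathbf{F}^{I_2^{(i)},B_2^{(i)}}$, where the $B_2^{(i)}$-coefficients (Phase 1) are fully random and the $A_2^{(i)}$-coefficients are a submatrix of $\mathbf{F}^{I_1^{(i)},O_1^{(i)}}$ from Phase 2. I would split into cases according to which of these edge sets are nonempty. If $B_2^{(i)} \neq \varnothing$: the Phase 1 coefficients are chosen uniformly at random, and since $\mathbf{G}_2^{I_2^{(i)}} \neq \mathbf{0}$, a random linear combination of its columns is nonzero with probability tending to $1$ as $|\mathbb F| \to \infty$; hence $\mathbf{G}_2^{B_2^{(i)}} \neq \mathbf{0}$. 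If $B_2^{(i)} = \varnothing$ but $A_2^{(i)} \neq \varnothing$: then by Property (ii)/(iii) of the destination reduction algorithm $I_1^{(i)} = I_2^{(i)}$, so $\mathbf{G}_2^{I_2^{(i)}} = \mathbf{H}_2^{I_1^{(i)}}$ (modulo the identification of notation) is nonzero, and each Phase 2 step — whether randomization or alignment — produces a column $\mathbf{G}_2^{I_2^{(i)}} \mathbf{F}^{I_2^{(i)},\{e_{i,j}\}}$. For a randomization step this is a random combination of nonzero columns, hence nonzero with high probability. For an alignment step, I would invoke Lemma \ref{lem:alignment}: condition \eqref{eqn:con1} forces the generated vector $\mathbf{H}_2^{I_1^{(i)}} \mathbf{F}^{I_1^{(i)},\{e_{i,j}\}}$ to be nonzero (it must lie outside the span of $[\mathbf{H}_2^{U_1^{(i)}}\ \mathbf{H}_2^{O_{1,j-1}^{(i)}}]$ only after increasing the relevant rank, so in particular it is not the zero vector), and this column is precisely a column of $\mathbf{G}_2^{A_2^{(i)}}$. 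The remaining case is $B_2^{(i)} = A_2^{(i)} = \varnothing$, i.e. $O_2^{(i)} = \varnothing$; but then $T_2^{(i)} = U_2^{(i)} \subseteq T_2^{(i+1)}$, so $\GT{2}{i}$ is a column-submatrix of $\GT{2}{i+1}$ that contains all of $T_2^{(i+1)} \cap T_2^{(i)}$ — here I would use that since $\mathbf{G}_2^{U_2^{(i)}} = \mathbf{0}$ was assumed and $\GT{2}{i+1}\neq \mathbf 0$, the nonzero column of $\GT{2}{i+1}$ must lie in $I_2^{(i)} = T_2^{(i+1)}\setminus U_2^{(i)}$, contradicting $O_2^{(i)} = \varnothing$ unless that set is empty, in which case $\GT{2}{i} = \GT{2}{i+1} \neq \mathbf 0$ directly.

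The main obstacle I anticipate is carefully handling the bookkeeping in the case $\mathbf{G}_2^{U_2^{(i)}} = \mathbf{0}$: one must correctly track which of $A_2^{(i)}, B_2^{(i)}$ is nonempty and, crucially, use the structural facts (Property (ii): $A_2^{(i)} \neq \varnothing \Rightarrow I_1^{(i)} = I_2^{(i)}$; and the shared-tail-node structure from Property (i)) to make sure the column of $\GT{2}{i}$ we exhibit as nonzero really is a column of that matrix and not of some other transfer matrix. A secondary subtlety is the alignment-step argument: Lemma \ref{lem:alignment} guarantees $\mathbf{B}\mathbf{f} \in \mathrm{colspan}(\mathbf{H}_2)$, which a priori sounds like it could allow $\mathbf{B}\mathbf f = \mathbf 0$; I would rule this out using condition \eqref{eqn:con1} together with the rank-increase guarantee \eqref{eq:resultingcondition}, since a zero column cannot increase the Grank. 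All assertions are probabilistic in the field size, consistent with the Remark preceding the claim, so I would state them once and suppress the repeated ``with high probability'' qualifier.
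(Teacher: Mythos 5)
Your overall structure — induction from stage $N$ down to $0$, identical base case, and the split on whether $\mathbf{G}_2^{U_2^{(i)}} \neq \mathbf{0}$ — matches the paper, and the randomization sub-cases (Phase $1$ random coding for $B_2^{(i)}$, or a randomization step in Phase $2$) are handled correctly. The gap is in your treatment of the alignment sub-case, and it is a genuine one.

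You argue that condition \eqref{eqn:con1} together with the Grank-increase guarantee forces the newly generated column $\mathbf{H}_2^{I_1^{(i)}}\mathbf{F}^{I_1^{(i)},\{e_{i,j}\}}$ (which, when $A_2^{(i)} \neq \varnothing$ and $I_1^{(i)} = I_2^{(i)}$, is a column of $\mathbf{G}_2^{A_2^{(i)}}$) to be nonzero. This reasoning does not go through. First, condition \eqref{eqn:con1} is a hypothesis about the column span of the \emph{input} matrix $\mathbf{H}_2^{I_1^{(i)}}$; it says nothing directly about the vector that alignment produces. By design, Lemma \ref{lem:alignment} places $\mathbf{B}\mathbf{f}$ \emph{inside} $\Span(\mathbf{H}_2)$, i.e. here it places $\mathbf{H}_2^{I_1^{(i)}}\mathbf{F}^{I_1^{(i)},\{e_{i,j}\}}$ inside $\Span\bigl(\bigl[\mathbf{H}_2^{U_1^{(i)}} ~~ \mathbf{H}_2^{O_{1,j-1}^{(i)}}\bigr]\bigr)$, not outside it, so the containment-versus-non-containment contrast you appeal to is the opposite of what the lemma provides. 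Second, the claim that ``a zero column cannot increase the Grank'' is false: appending a column $\begin{bmatrix}\mathbf{A}\mathbf{f} \\ \mathbf{0}\end{bmatrix}$ to $\begin{bmatrix}\mathbf{H}_1 \\ \mathbf{H}_2\end{bmatrix}$ leaves $\rank\bigl([\mathbf{H}_2 ~ \mathbf{G}_2]\bigr)$ and $\rank(\mathbf{H}_2)$ unchanged but can raise $\rank\bigl(\begin{bmatrix}\mathbf{H}_1 & \mathbf{A}\mathbf{f}\\ \mathbf{H}_2 & \mathbf{0}\end{bmatrix}\bigr)$, hence raise the Grank, while contributing a zero column to $\mathbf{G}_2$. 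So you cannot rule out the newly generated $\mathbf{G}_2$-column being zero this way, and the inductive step is left open.

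The paper's resolution is to avoid the new column entirely: if an alignment step is taken at step $j$, then conditions \eqref{eqn:con1} and \eqref{eqn:con2} hold, and since $\mathbf{G}_2^{A_{2,j-1}^{(i)}}$ is a submatrix of $\mathbf{H}_2^{O_{1,j-1}^{(i)}}$ (because $A_{2,j-1}^{(i)} \subseteq O_{1,j-1}^{(i)}$ and $I_1^{(i)} = I_2^{(i)}$), these two span conditions can hold simultaneously only if $\bigl[\mathbf{G}_2^{U_2^{(i)}} ~~ \mathbf{G}_2^{B_2^{(i)}}\bigr] \neq \mathbf{0}$. Those columns already belong to $\mathbf{G}_2^{T_2^{(i)}}$ (they are determined before the alignment step is even taken), which finishes the inductive step without any assertion about the column the alignment step produces. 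Replacing your alignment argument with this observation closes the gap and leaves the rest of your plan intact.
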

\begin{proof}
For $i=N$, since $s_2$ communicates with $t_2$, there exists a source edge $(s_2, v_2) \in T_2^{(N)}$.  Similar to the
previous case, in the trivial solution for $\Omega^{(N)}$, we simply set the column corresponding the each edge in
$T^{(N)}_2$ to a unit vector indicating the source edge it represents. Since $(s_2, v_2) \in T^{(N)}_2 $ is a source
edge at $s_2$, it gives rise to a non-zero column in $\GT{2}{N}$. Thus, the claim holds for stage $N$.

Assume the claim is true for stage $i+1$ and consider stage $i$. If $\mathbf{G}_2^{U_2^{(i)}} \neq \mathbf{0}$, then the
claim holds for $\GT{2}{i}$, as $U_2^{(i)} \subseteq T_2^{(i)}$. Otherwise, we must have $\mathbf{G}_2^{I_2^{(i)}}\neq
\mathbf{0}$. In this case, if random linear coding is performed, either in phase 1 (for $B_2^{(i)}$ edges) or for an edge in 
$A_2^{(i)}$ in phase 2, then the resulting column is non-zero and hence
$\mathbf{G}_2^{T_2^{(i)}} \neq \mathbf{0}$.

Therefore, we only need to consider the case where alignment occurs in some step $j$ in phase $2$. In this case, because
of the conditions under which the algorithm performs alignment, we have
\begin{align*}
\Span\left(\mathbf{H}_2^{I^{(i)}_1}\right) &\not\subset
\Span\left(
\begin{bmatrix}
    \mathbf{H}_2^{U^{(i)}_1} & \mathbf{H}_2^{O_{1,j-1}^{(i)}}
\end{bmatrix}
\right) \\
\Span\left(\mathbf{H}_2^{I^{(i)}_1}\right) &\subseteq
\Span\left(
\begin{bmatrix}
    \mathbf{H}_2^{U^{(i)}_1} & \mathbf{H}_2^{O_{1,j-1}^{(i)}} & \mathbf{G}_2^{U^{(i)}_2} &
    \mathbf{G}_2^{A^{(i)}_{2,j-1}} & \mathbf{G}_2^{B^{(i)}_2}
\end{bmatrix}
\right)~.
\end{align*}
Since $A_{2,j-1}^{(i)} \subseteq O_{1,j-1}^{(i)}$ by the construction of the algorithm, $\GAI{2}{j-1}$ is a submatrix of
$\HOI{2}{j-1}$. Subsequently, for the above alignment conditions to hold we must have
\begin{align}
   \begin{bmatrix}
     \mathbf{G}_2^{U^{(i)}_2} & \mathbf{G}_2^{B^{(i)}_2}
   \end{bmatrix} &\neq \mathbf{0}.
\end{align}
Since $\begin{bmatrix} \mathbf{G}_2^{U^{(i)}_2} & \mathbf{G}_2^{B^{(i)}_2} \end{bmatrix}$ is a submatrix of
$\mathbf{G}_2^{T_2^{(i)}}$, the alignment step does not set $\mathbf{G}_2^{T_2^{(i)}}$ to $\mathbf{0}$ and the claim is
true for all $i, 0\leq i \leq N$.
\end{proof}

From the two claims, we have established that for all $i$, 
\begin{align}
    \rank\left(\mathbf{H}_1^{T_1^{(i)}}\right) &\geq 1~, & 
    \rank\left(\mathbf{G}_2^{T_2^{(i)}}\right) &\geq 1~. 
    \label{eqn:h1g2lowerbound}
\end{align}
\begin{claim}
    For all $0 \leq i \leq N$, $\Granki{i} \geq 2 $ if and only if the graph $\mathcal{G}$ does not contain a single
    edge GNS cut set.
    \label{claim3}
\end{claim}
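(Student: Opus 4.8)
The plan is to prove both directions by induction on the stage index $i$, running downward from $i=N$ to $i=0$, with the strengthened inductive hypothesis that, unless $\mathcal{G}$ contains a single-edge GNS cut set, $\Granki{i} \geq 2$ and (from Claims \ref{claim1}, \ref{claim2}) $\rank(\HT{1}{i}) \geq 1$, $\rank(\GT{2}{i}) \geq 1$. The base case $i=N$ is immediate: in the trivial solution the columns are distinct unit vectors, so $\HT{1}{N}$ has rank equal to the number of source-$1$ edges reaching $T_1$ and similarly for $\GT{2}{N}$; since $\mathcal{S}$ connects to both destinations, the Grank at stage $N$ equals the number of source edges reaching the destinations, which is $\geq 2$ precisely when there is no single-edge GNS cut. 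For the ``if'' direction of the inductive step, assume $\Granki{i+1} \geq 2$ and that there is no single-edge GNS cut; I must show $\Granki{i} \geq 2$.

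First I would split on whether stage $i$ performs an alignment step. \emph{If an alignment step occurs} at some substep $j$, then by Lemma~\ref{lem:alignment} the generated column $\mathbf{H}_2^{I_1^{(i)}}\mathbf{F}^{I_1^{(i)},\{e_{i,j}\}}$ lies in $\Span(\HT{2}{i})$ (it is aligned into $\mathbf{H}_2$), while the corresponding column of $\HTall$ does not lie in $\Span\!\left(\HTall\text{-prefix}\right)$; hence $\rank(\HTall) > \rank(\HT{2}{i})$. Combined with the displayed computation preceding this claim (which shows the alignment step raises the relevant Grank by $1$ and that $\Granki{i} = \Granki{i+1}$ via Statement (iii) of Lemma~\ref{lma:grankcol}), and with $\rank([\HT{2}{i}~\GT{2}{i}]) \geq \rank(\GT{2}{i}) \geq 1$ from Claim~\ref{claim2}, we get
\begin{align*}
\Granki{i} = \rank(\HTall) + \rank([\HT{2}{i}~\GT{2}{i}]) - \rank(\HT{2}{i}) \geq \bigl(\rank(\HT{2}{i})+1\bigr) + 1 - \rank(\HT{2}{i}) = 2,
\end{align*}
so the hypothesis is preserved. \emph{If no alignment step occurs at stage $i$}, then every new edge is coded by randomization; I would argue that random coding cannot decrease the Grank below its stage-$(i+1)$ value of $2$, because: Property (iii) of Lemma~\ref{lma:grankcol} lets us pad with the stage-$(i+1)$ columns without changing the Grank, random linear combinations generically preserve rank of column spans, and by the ``$p > q \Rightarrow$ monotonicity'' remark stated just before the Complexity subsection, $\Granki{i} \geq \Granki{i+1} \geq 2$.

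For the ``only if'' (contrapositive) direction, suppose $\Granki{i} < 2$, i.e.\ $\Granki{i} = 1$ (it is $\geq 1$ since $\HT{1}{i},\GT{2}{i} \neq \mathbf{0}$). By the argument above, stage $i$ can only have performed randomization steps. Now I invoke Lemma~\ref{lma:grank1properties}: $\Granki{i}=1$ forces $\rank(\HT{2}{i}) = \rank(\HTall) = \rank(\GT{2}{i}) = \rank([\HT{2}{i}~\GT{2}{i}]) = 1$. Using Lemmas~\ref{lma:emptycols} and~\ref{lma:emptyG2} together with these rank-$1$ equalities, I would deduce, exactly as outlined in the Proof Overview: (a) the edges in $U_2^{(i)}$ are not reachable from $s_2$; (b) $U_1^{(i)} = B_2^{(i)} = \varnothing$; (c) $|O_1^{(i)}| = |A_2^{(i)}| = 1$. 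Point (c) together with (b) says $T_1^{(i)}$ is the single edge $O_1^{(i)}$ and $T_2^{(i)} = U_2^{(i)} \cup A_2^{(i)}$ with $A_2^{(i)} = O_1^{(i)}$. Then Property (v) of Section~\ref{sec:algorithm} gives that $T_1^{(i)}$, hence the single edge in $O_1^{(i)}$, is an $(s_1,T_1)$ cut and an $(s_2,T_1)$ cut; and because $T_2^{(i)}$ is an $(s_2,T_2)$ cut while by (a) the only edges of $T_2^{(i)}$ that $s_2$ reaches lie in $A_2^{(i)} = O_1^{(i)}$, the edge in $O_1^{(i)}$ is also an $(s_2,T_2)$ cut. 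Thus this single edge is a GNS cut set, proving the contrapositive.

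The main obstacle I anticipate is the careful bookkeeping in the ``only if'' direction — deducing (a), (b), (c) from the four rank-$1$ equalities of Lemma~\ref{lma:grank1properties}. The subtlety is that the algorithm codes $B_2^{(i)}$ in Phase 1 and $O_1^{(i)}$ edge-by-edge in Phase 2, so one must track how each newly randomized column can or cannot add rank to $\HTall$, to $\GT{2}{i}$, and to $[\HT{2}{i}~\GT{2}{i}]$ simultaneously; ruling out, e.g., a nonempty $B_2^{(i)}$ requires showing that a generic random column out of $\mathbf{G}_2^{I_2^{(i)}} \neq \mathbf{0}$ would push $\rank([\HT{2}{i}~\GT{2}{i}])$ to $2$ unless $\mathbf{G}_2^{I_2^{(i)}}$ itself is confined to $\Span(\HT{2}{i})$, and then relating that back to reachability from $s_2$ via Lemma~\ref{lma:emptyG2}. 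Everything else is a routine application of submodularity of rank (Lemma~\ref{lma:grankcol}) and the genericity of random coding.
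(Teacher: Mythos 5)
Your overall plan mirrors the paper's: induction on the stage, an alignment case and a pure-randomization case, and for the hard direction a contrapositive argument from $\Granki{i}=1$ to the structural conditions (a), (b), (c), then to a single-edge GNS cut via Property (v). Your treatment of the alignment case is correct and essentially identical to the paper's.

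There are, however, two genuine problems in the randomization half. First, your ``if'' direction argument is wrong: you invoke the monotonicity remark to conclude $\Granki{i} \geq \Granki{i+1}$, but the remark says the opposite; for $p=i+1>q=i$ it gives $\Granki{i+1} \geq \Granki{i}$. This inequality is forced by Lemma~\ref{lma:grankcol}(i) and (iii), because the stage-$i$ columns lie in the column span of the stage-$(i+1)$ columns. So randomization \emph{can} drop the Grank, and the claim that ``random coding cannot decrease the Grank below its stage-$(i+1)$ value of $2$'' is false in general; the entire point of the paper's Case (I)/Case (II) analysis is to characterize exactly when the drop to $1$ occurs. Second, and more importantly, in the contrapositive direction you state that from the rank-$1$ equalities you ``would deduce (a), (b), (c), exactly as outlined in the Proof Overview,'' but this deduction is the bulk of the paper's proof and you have not carried it out. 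The paper must split on whether $\rank\left(\GT{2}{i+1}\right) \geq 2$ or $\Granki{i+1} > \rank\left(\GT{2}{i+1}\right) = 1$, and each case requires a distinct chain of arguments: for instance in Case (II) one must first show $O_1^{(i)} \neq \varnothing$ (else $\Granki{i}=\Granki{i+1}\geq 2$), then establish $\Span\left(\HT{2}{i+1}\right) = \Span\left(\GT{2}{i}\right)$ and $\rank\left(\HTall\right)\geq 2$, then rule out $U_1^{(i)}\neq\varnothing$ and $|O_1^{(i)}|\geq 2$, and finally show $\left[\GU{2}~\GB{2}\right]\neq\mathbf{0}$ would trigger an alignment step, a contradiction. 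The subtlety you flag at the end is real, but flagging it does not close the gap: as it stands the proposal asserts the conclusion of the hard part rather than proving it.
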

\begin{proof}
	First note that Claim \ref{claim1} implies that $\Granki{i} \geq \rank \left( \GT{2}{i} \right) \geq 1$. 
Observe that \eqref{eqn:h1g2lowerbound}, $\Granki{i} = 1$ if and only if the following hold:
\begin{align}
\rank\left(\GT{2}{i}\right) 
= \rank \left(\HGall\right)
= \rank \left(\HT{2}{i}\right)
= \rank \left(\HTall\right) = 1.
    \label{eqn:grank1iff}
\end{align}

Next, we prove Claim~\ref{claim3}. First we show that the claim is true for stage $N$. 

Suppose that $\Granki{N} = 1$. By \eqref{eqn:grank1iff}, 
\begin{align*}
\rank \left( \GT{2}{N} \right)  = &1 = \rank \left( \HT{2}{N} \right)~, &
\Span \left( \HT{2}{N} \right) &= \Span \left( \GT{2}{N} \right)~.
\end{align*}
By the construction of the algorithm, at stage $N$, we assign a different unit vector for each of the different
source edge. Therefore, to satisfy the above equations, we must have $T_1^{(N)} = T_2^{(N)}$ and $\left|T_1^{(N)}\right| = 1$. In
this case, removal the single edge in $T_1^{(N)}$ will disconnect $(s_1, T_1), (s_2, T_2)$ and $(s_2, T_1)$
simultaneously.  Therefore, the claim holds for stage $N$. 

Suppose that the claim holds for stage $i+1$, i.e. $\Granki{i+1} \geq 2$. We show the claim for stage $i$. 
%We show that $\Granki{i} \geq 2$ if and only if there does exists an edge whose removal disconnects $(s_1, t_1), (s_2,
%t_2)$ and $(s_2, t_1)$. We divide the inductive step into two parts, considering whether the coding algorithm performs
%alignment step between stage $i+1$ and $i$.

\subsubsection*{Alignment steps do not violate Claim 3.} 

We first show that if the recursive coding algorithm performs some alignment step in Phase $2$ in stage $i$, then
$\Granki{i} \geq 2$ and therefore, the claim holds. Without loss of generality, assume that the alignment step is
performed for some edge $e_{i,j} \in O_{1}^{(i)}$. Then, by Lemma~\ref{lem:alignment}, the generated column $\HIall
\FIj{1}{j}$ will satisfy,
\begin{align*}
    \HI{2} \FIj{1}{j} &\in 
    \Span \left( 
    \begin{bmatrix}
        \HU{2} & \HOI{2}{j-1}
    \end{bmatrix}
    \right) ~, \\
    \HIall \FIj{1}{j} &\not \in 
    \Span \left( 
    \begin{bmatrix}
        \HU{1} & \HOI{1}{j-1} \\
        \HU{2} & \HOI{2}{j-1}
    \end{bmatrix}
    \right)~.
\end{align*}
As a result,
\begin{align}
    \rank \left( 
     \begin{bmatrix}
        \HU{1} & \HOI{1}{j} \\
        \HU{2} & \HOI{2}{j}
    \end{bmatrix}
    \right)
    > \rank \left( 
    \begin{bmatrix}
        \HU{2} & \HOI{2}{j} 
    \end{bmatrix}
    \right)~.
\end{align}
Therefore, at stage $i$ we have,
\begin{align*}
    & \quad \rank \left( \HTall \right) - \rank \left( \HT{2}{i} \right)  \\
    &=  
    \rank \left( 
     \begin{bmatrix}
         \HU{1} & \HOI{1}{j} & \mathbf{H}_1^{O_1^{(i)} \backslash O^{(i)}_{1,j}} \\
         \HU{2} & \HOI{2}{j} & \mathbf{H}_2^{O_1^{(i)} \backslash O^{(i)}_{1,j}}
    \end{bmatrix}
    \right)
-
    \rank \left( 
     \begin{bmatrix}
         \HU{2} & \HOI{2}{j} & \mathbf{H}_2^{O_1^{(i)} \backslash O^{(i)}_{1,j}}
    \end{bmatrix}
    \right) \\
    &\geq 
    \rank \left( 
     \begin{bmatrix}
        \HU{1} & \HOI{1}{j} \\
        \HU{2} & \HOI{2}{j}
    \end{bmatrix}
    \right)
    - \rank \left( 
    \begin{bmatrix}
        \HU{2} & \HOI{2}{j} 
    \end{bmatrix}
    \right) > 0 ~. 
\end{align*}
Hence $\rank \left( \HTall \right) > \rank \left( \HT{2}{i} \right)$. Since \eqref{eqn:h1g2lowerbound} holds for all
stages, by Lemma~\ref{lma:grank1properties}, we have $$\Granki{i} \geq 2.$$ That implies, if any alignment
step takes place at stage $i$, then the claim holds for stage $i$. 

\subsubsection*{Randomization steps do not violate Claim 3.} 
It remains to show the claim holds if all the coding steps performed at stage $i$ are randomization steps. Consider the
stage $i+1$ where the claim is assumed to be true. That is, we have $$\Granki{i+1} \geq 2.$$ We consider two possible
cases: 
%ivide the inductive assumptions two possible cases based on the rank of $\GT{2}{i}$, i.e.
\begin{enumerate}[(I)]
    \item $\rank \left( \GT{2}{i+1} \right) \geq 2$, and   
    \item $\Granki{i+1} > \rank \left( \GT{2}{i+1} \right) = 1$.
\end{enumerate}
In each case, we show that if 
\begin{equation}
	\Grank\left(\mathbf{H}_{1}^{T_1^{(i)}},\mathbf{H}_{2}^{T_1^{(i)}},\mathbf{G}_{2}^{T_2^{(i)}}\right)= 1
	\label{eq:grankisone}
\end{equation}
then, the following hold: 
\begin{align}
    \GU{2} &= \mathbf{0}~, & U_1^{(i)} &= B_2^{(i)} = \varnothing~, & O_1^{(i)} &= A_2^{(i)}~, & \left|O_1^{(i)}\right| &= 1~.
    \label{eqn:oneedgecutcondition}
\end{align}
Once we show this, the claim can be proved as follows.
Given that $U_1^{(i)}$ is empty at stage $i$, clearly $T_1^{(i)} = O_1^{(i)}$. 
{
Therefore, based on Property (v) of the
algorithm, we infer that $s_1$ and $s_2$ communicate
with $T_1$ only through the single edge in $O_1^{(i)}$. 
}
In other words, removing the edge in $O_1^{(i)}$ disconnects
$T_1$ from the two sources. We argue here that (\ref{eqn:oneedgecutcondition}) implies that removing the edge
disconnects $s_2$ from $T_2$ as well. Since $B_2^{(i)}$ is empty, we have $T_2^{(i)} \backslash T_1^{(i)} = U_2^{(i)}$.
Because $\GU{2} = \mathbf{0}$, we infer from Lemma~\ref{lma:emptyG2} that there is no alignment step in any
stage $k \geq i$. Therefore, the matrix $\GU{2}$ is generated through performing random linear coding at every edge that
lies between the sources and the edges in $T_1^{(i)}\cup T_2^{(i)}$.  The fact that $\GU{2}$ is a zero matrix in
conjunction with the classical result of \cite{Koetter_Medard} which indicates that random linear network coding
achieves a rank that is equal to the min-cut to any set of destination edges together imply that $s_2$ does not communicate with
$U_2^{(i)}$. 
{
Based on Property (v) in Section \ref{sec:algorithm}, $T_2^{(i)}$ is a cut set between $s_1, s_2$ and $T_2$. But since $s_2$ does not
communicate with $U_2^{(i)}$,
we conclude that $s_2$ communicates with $T_2$ only through $T_2^{(i)} \backslash U_2^{(i)}
= A_2^{(i)} = O_1^{(i)}$, 
}
i.e., $s_2$ communicates with $T_2$ through the single edge in $O_1^{(i)}$. Consequently, the
removal of the single edge in $O_1^{(i)}$ disconnects $s_2$ from $T_2$ as well. Thus, $O_1^{(i)}$ is a single-edge GNS
cut set.

To complete the proof we show that if (\ref{eq:grankisone}) holds, then \eqref{eqn:oneedgecutcondition} also holds for each of the two cases.

\noindent\textbf{Case (I):} $\rank \left( \GT{2}{i+1} \right) = \rank \left( \left[\GU{2} ~~ \GI{2}\right] \right) \geq 2$.
{
Suppose that $\Granki{i} = 1$. Clearly, in this case, $A_2^{(i)}$ and $B_2^{(i)}$ cannot be both empty; otherwise
$\GT{2}{i} = \GT{2}{i+1}$ and $\rank \left( \GT{2}{i} \right) = 2$, which contradicts \eqref{eqn:grank1iff}.
Now, consider the possible ranks of the matrix $\GU{2}$. From (\ref{eqn:grank1iff}) we have, $\rank \left( \GU{2} \right)
\leq \rank \left( \GT{2}{i} \right) = 1$. 
}
We first show that the rank of $\GU{2}$ is 0. If $\rank\left(\GU{2}\right) = 1$, then $\Span \left( \GI{2} \right) \not \subset \Span \left( \GU{2}
        \right)$, as $\rank \left( \GT{2}{i} \right) \geq 2$. If $A_2^{(i)}\cup B_2^{(i)}$ is
        non-empty, then $\GA{2}$ and/or $\GB{2}$ will be created from random coding of columns of $\GI{2}$.  As a
        result, we have, 
        \begin{align*}
            \rank \left(\GT{2}{i}\right) = 
            \rank \left( 
            \begin{bmatrix}
                \GU{2} & \GA{2} & \GB{2}
            \end{bmatrix}
            \right) > \rank \left( \GU{2} \right) = 1, 
        \end{align*}
        which contradicts \eqref{eqn:grank1iff}. But $A_2^{(i)}$ or $B_2^{(i)}$ cannot both be empty. Hence, we
        cannot have $\rank \left( \GU{2} \right) = 1$.
        %However, if both $A_2^{(i)}$ or $B_2^{(i)}$ are empty, then
        %$\GT{2}{i} = \GT{2}{i+1}$ and $\rank \left( \GT{2}{i} \right) = 2$, which again contradicts Equation 
        %\eqref{eqn:grank1iff}. Hence, if $\rank \left( \GU{2} \right) = 1$, then \eqref{eqn:grank1iff} cannot hold. 
    Therefore, $\rank\left(\GU{2}\right) = 0$. Therefore, we have $\rank \left( \GI{2} \right) \geq 2$. Since
    $I_{2}^{(i)}$ is non-empty, note that columns of $ \begin{bmatrix} \HO{2} & \GA{2} & \GB{2} \end{bmatrix}$ are
    generated by random coding from columns of $\GI{2}$. If the set $O_1^{(i)} \cup A_2^{(i)} \cup B_2^{(i)} = O_1^{(i)}
    \cup B_2^{(i)}$ (since $A_2^{(i)} \subseteq O_1^{(i)}$) contains at least two elements, then
        \begin{align*}
            \rank \left( \HGall \right) \geq \rank \left( 
            \begin{bmatrix}
                \HO{2} & \GA{2} & \GB{2}
            \end{bmatrix}
            \right)  \geq 2~,
        \end{align*}
        which contradicts \eqref{eqn:grank1iff}. We have already shown that $A_2^{(i)} \cup B_2^{(i)}$
	is non-empty. Thus, we conclude ${O}_{1}^{(i)} \cup {B}_{2}^{(i)}=1.$ Since ${O}_{1}^{(i)}$ and ${B}_{2}^{(i)}$ are
    disjoint, there are only two possiblities: $\left|{O}_{1}^{(i)}\right|=1, {A}_{2}^{(i)}=O_1^{(i)}, {B}_{2}^{(i)}=0$ or  ${O}_{1}^{(i)}=0, {B}_{2}^{(i)}=1.$ We show next that the latter is impossible, and show that the former condition implies that $U_1^{(i)}$ is empty.
        %implies a single edge removal that disconnects $(s_1, T_1), (s_2, T_2)$ and $(s_2, T_1)$ simultaneously.
        \begin{enumerate}
            \item $\left|B_2^{(i)}\right| = 1$, $\left|O_1^{(i)}\right| = \left|A_2^{(i)}\right| = 0$. From \eqref{eqn:grank1iff}, we have $\rank \left(
                \HT{2}{i} \right) = 1$. Since $\rank \left( \GI{2} \right) = 2$, $\Span \left( \GI{2} \right) \not
                \subset \Span \left( \HT{2}{i} \right)$. As $\GB{2}$ is generated from $\GI{2}$ by random coding, we
                have $\Span \left( \GB{2} \right) \not \subset \Span \left( \HT{2}{i} \right)$. Consequently, 
                \begin{align*}
                    \rank \left( 
                    \HGall
                    \right) \geq 
                    \rank \left( 
                    \begin{bmatrix}
                        \HT{2}{i} & \GB{2}
                    \end{bmatrix}
                    \right) \geq 2~, 
                \end{align*}
                which contradicts \eqref{eqn:grank1iff}.
            \item $\left|B_2^{(i)}\right| = 0$, $\left|O_1^{(i)}\right| = \left|A_2^{(i)}\right| = 1$. Since we have already shown that $\rank\left(\GU{2}\right) = 0,$
                to show \eqref{eqn:oneedgecutcondition}, it remains to prove that $U_1^{(i)}$ is empty.
                We do this by showing the column spaces of both $\HU{2}$ and $\HU{1}$ are the null
		space. By the construction of the algorithm, $I_1^{(i)} = I_2^{(i)};$ since $B_2^{(i)} = \phi$, we have  $\HI{2} = \GI{2}$. From \eqref{eqn:grank1iff}, we have,
                \begin{align*}
                \rank \left( \HU{2} \right) \leq \rank \left( \HT{2}{i} \right) = 1.
                \end{align*}
               Since $\HO{2}$ is generated by random coding from $\GI{2}$ whose rank is $2$, it is clear that $\Span
	       \left( \HO{2} \right) \not \subset \Span \left(\HU{2}\right).$ As a result, if
               $\HU{2} \neq \mathbf{0}$ then
                    \begin{align*}
                        \rank \left( \HT{2}{i} \right) = \rank \left( 
                        \begin{bmatrix}
                            \HU{2} & \HO{2}
                        \end{bmatrix}
                        \right)  \geq 2~,
                    \end{align*}
                    which contradicts \eqref{eqn:grank1iff}. Hence $\HU{2} = \mathbf{0}$ and $\rank \left( \HO{2}
                    \right) = 1$. Now, if $\HU{1} \neq \mathbf{0}$, we have,
                    \begin{align*}
                        \rank \left(
                        \HUOall
                        \right) 
                        \geq \rank \left( 
                        \HU{1}
                        \right) + \rank \left( 
                        \HO{2} 
                        \right) \geq 2~, 
                    \end{align*}
                    which also contradicts \eqref{eqn:grank1iff}. Thus, $\HUall = \mathbf{0}$. But by
                    Lemma~\ref{lma:emptycols}, $\HTall$ does not contain the all zeroes column. That implies that $U_1^{(i)}
                    = \varnothing$ and subsequently, $T_1^{(i)} = O_1^{(i)}$. Therefore, we have shown  
                    \eqref{eqn:oneedgecutcondition} as required.

                    %Next, consider the matrix $\GT{2}{i}$ and the set $T_2^{(i)}$. Recall that we have established so
                    %far that $\rank \left( \GU{2} \right) = 0$ and $|B_2^{(i)}| = 0$. Also, since $A_2^{(i)}$ is
                    %a subset of $O_1^{(i)}$ and $|A_2^{(i)}| = |O_1^{(i)}| =1$, we have  $A_2^{(i)} = O_1^{(i)}$ and
                    %$\GA{2} = \HO{2}$. If $U_2^{(i)}$ is non-empty, 
                    %%we show that no alignment step has taken place prior to stage
                    %%$i$ by observing 
                    %then the matrix $\GT{2}{i}$ is given by 
                    %\begin{align*}
                        %\GT{2}{i} = 
                        %\begin{bmatrix}
                            %\mathbf{0} & \GA{2}
                        %\end{bmatrix}
                        %= \begin{bmatrix}
                            %\mathbf{0} & \HO{2}
                        %\end{bmatrix}~.
                    %\end{align*}
                    %Clearly, all the columns in $\GT{2}{i}$ corresponding to the edges in $T_2^{(i)}\backslash T_1^{(i)}$
                    %that are zero. This would contradicts Lemma~\ref{lma:emptyG2} if any alignment happens before
                    %stage $i$. Therefore all the coding operations the algorithm performed until stage $i$ are random
                    %codings. In particular, since $\GU{2} = \mathbf{0}$ with all random coding, this means that $s_2$
                    %does not communicate with any edges in $U_2^{(i)}$. As a results, regardless of whether $U_2^{(i)}$
                    %is empty, $s_2$ communicates with $T_2$ only through edges in $T_2^{(i)} \backslash U_2^{(i)} =
                    %A_2^{(i)} = O_1^{(i)}$, while at the same time $s_1$ and $s_2$ communicate with $T_1$ only through
                    %edges $T_1^{(i)} = O_1^{(i)}$. But $|O_1^{(i)}| = 1$. Consequently, the removal of the single edge
                    %in $O_1^{(i)}$ disconnects $(s_1, T_1), (s_2, T_2)$ and $(s_2, T_1)$ simultaneouly.  
        \end{enumerate}
In summary, if $\rank \left( \GT{2}{i+1} \right) \geq 2$, then $\Granki{i} = 1$ holds if and only if there exists
a single edge GNS cut set in graph $\mathcal{G}$.

\vspace{0.5cm}
\noindent\textbf{Case (II):} $\Granki{i+1} > \rank \left( \GT{2}{i+1} \right) = 1$.
{
Again we start by assuming $\Granki{i} = \rank \left( \GT{2}{i}  \right) = 1$ in stage $i$. We show that
\eqref{eqn:oneedgecutcondition} is true in three steps.
}
%$U_1^{(i)}$ and $B_2^{(i)}$ are empty sets while $s_2$ does not communicate with
%$U_2^{(i)}$ (if it is not empty). Moreover, $O_1^{(i)} = A_2^{(i)}$ and $|O_1^{(i)}| =1$, which gives a single edge
%whose removal that disconnects $(s_1, T_1)$, $(s_2, T_2)$ and $(s_2, T_1)$. We do this in several steps.  
{
\begin{enumerate}[(i)]
    \item We first show that $O_1^{(i)}$ is non-empty, $ \rank \left( \begin{bmatrix} \HT{1}{i+1} \\ \HT{2}{i+1} \end{bmatrix} \right)   
    = \rank \left( 
   \HUIall
    \right) \geq 2$ and that $\Span \left( \HT{2}{i+1} \right) = \Span \left( \GT{2}{i} \right).$
    \item Next we use the result of (i) to show that $U_1^{(i)}$ is empty and $\left|O_1^{(i)}\right| = 1$.
    \item Finally we use the results of (i) and (ii) to show that $\GU{2} = \mathbf{0},$ $B_2^{(i)}=\phi$, and $A_1^{(i)} = O_1^{(i)}$. 
\end{enumerate}
}
We describe the three steps in detail next.
\begin{enumerate}[(i)]
    \item 
We show that $O_1^{(i)}$ is non-empty, $ \rank \left( \begin{bmatrix} \HT{1}{i+1} \\ \HT{2}{i+1} \end{bmatrix} \right)   
    = \rank \left( 
   \HUIall
    \right) \geq 2$ and that $\Span \left( \HT{2}{i+1} \right) = \Span \left( \GT{2}{i} \right).$
        %We first show that $O_1^{(i)}$ is non-empty and the following holds, 
        %\begin{align}
    %\rank \left( 
    %\begin{bmatrix}
        %\HT{1}{i+1} \\ \HT{2}{i+1}
    %\end{bmatrix}
       %\right)   
    %&= \rank \left( 
   %\HUIall
    %\right) \geq 2~.
%\end{align}

Because $\rank \left( \GT{2}{i} \right) = \rank \left( \GT{2}{i+1} \right) = 1$, and because $\GT{2}{i}$ is a linear
combination of columns of $\GT{2}{i+1}$, the column space of $\GT{2}{i+1}$ is the same as that of $\GT{2}{i}$.
Therefore, if $O_1^{(i)}$ is empty, we have 
$$\rank \left( \mathbf{H}_{2}^{T_1^{(i)}}~~\mathbf{G}_{2}^{T_2^{(i)}} \right) = \rank \left( \mathbf{H}_{2}^{T_1^{(i+1)}}~~\mathbf{G}_{2}^{T_2^{(i+1)}} \right),$$ 
which implies the following:
%Suppose that $O_1^{(i)}$ is empty, then $T_1^{(i+1)} = T_1^{(i)}$ while the column space of $\GT{2}{i+1}$ is the same as
%that of $\GT{2}{i}$ following the assumption of $\rank \left( \GT{2}{i} \right) = \rank \left( \GT{2}{i+1} \right) =
%1$. As a result, the Grank is preserved from $i+1$ to $i$, i.e. 
\begin{align*}
\Granki{i} = \Granki{i+1} \geq 2~.
\end{align*}

This contradicts the assumption that the Grank at stage $i$ is $1$. Thus $O_1^{(i)}$ is non-empty. 

We now show that $\Span \left( \HT{2}{i+1} \right) = \Span \left( \GT{2}{i} \right).$
From \eqref{eqn:grank1iff}, we have, 
\begin{align*}
    \rank \left( \HGall \right) =
     \rank \left( 
     \begin{bmatrix}
         \HU{2} & \HO{2} & \GT{2}{i}
     \end{bmatrix}
     \right) = 1~.
\end{align*}
Thus $\Span \left( \begin{bmatrix} \HU{2} & \HO{2} \end{bmatrix} \right) \subseteq \Span \left( \GT{2}{i} \right)$.
However, $\HO{2}$ is generated by random coding from $\HI{2}$. Hence, we must have, 
\begin{align}
\Span \left( 
\begin{bmatrix}
    \HU{2} &  \HI{2}   
\end{bmatrix} \right) \subseteq \Span \left(\GT{2}{i}\right) = \Span \left( \GT{2}{i+1} \right)~.
\label{eqn:HI2span}
\end{align}
As a result, 
%$\rank \left( \begin{bmatrix} \HT{2}{i+1} & \GT{2}{i+1} \end{bmatrix} \right) =1$ and 
$\rank \left( \HT{2}{i+1} \right) = 
\rank \left( 
\begin{bmatrix}
    \HU{2} &  \HI{2}   
\end{bmatrix} 
\right) \leq 1$. Furthermore, the rank of $\HT{2}{i+1}$ is lower bounded by the rank of $\HT{2}{i}$, which is $1$ by
\eqref{eqn:grank1iff}. Therefore, 
\begin{align}
\rank \left(
\HT{2}{i+1}
\right) &= 1 ~,  &
\Span \left( \HT{2}{i+1} \right) = \Span\left( \GT{2}{i+1} \right) = \Span \left( \GT{2}{i} \right).
\label{eqn:lowerspans}
\end{align}

We now show that $ \rank \left( \begin{bmatrix} \HT{1}{i+1} \\ \HT{2}{i+1} \end{bmatrix} \right)   
    = \rank \left( 
   \HUIall
    \right) \geq 2.$ 

    Equation (\ref{eqn:lowerspans}) implies that 
    \begin{equation} rank \left( \HT{2}{i+1} \right) = 1,~~\rank \left( \begin{bmatrix} \HT{2}{i+1} & \GT{2}{i+1} \end{bmatrix} \right) =1.\label{eqn:implicationoflowerspans} \end{equation}
By assumpion, we have,
\begin{align}
    \Granki{i+1} &= \rank \left(  \begin{bmatrix} \HT{1}{i+1} \\ \HT{2}{i+1} \end{bmatrix}\right) 
    + \rank \left( \begin{bmatrix} \HT{2}{i+1} & \GT{2}{i+1} \end{bmatrix}\right) - \rank \left( \HT{2}{i+1} \right) \nonumber \\
        &\geq 2~.
	\label{eqn:Grankagain}
\end{align}

Combining (\ref{eqn:implicationoflowerspans}) and (\ref{eqn:Grankagain}) we get 
\begin{align}
    \rank \left( 
    \begin{bmatrix}
        \HT{1}{i+1} \\ \HT{2}{i+1}
    \end{bmatrix}
       \right)   
    &\geq 2 
   \label{eqn:hpartrank}
\end{align}

We have showed (i) and we proceed to (ii).

\item We show that $U_1^{(i)}$ is empty, and $\left|O_1^{(i)}\right| = 1$.  

Since $\Granki{i} = 1$  we have 
\begin{align*}
\rank \left( 
    \begin{bmatrix}
        \HU{1} \\ 
        \HU{2} 
    \end{bmatrix}
    \right) \leq  \Granki{i} =1 ~.
\end{align*}
Since we have shown that $$  \rank \left( 
    \begin{bmatrix}
        \HT{1}{i+1} \\ \HT{2}{i+1}
    \end{bmatrix}
       \right) =   
\rank \left( \HUIall \right) \geq 2,$$ we must have
    \begin{align}
\Span\left( 
    \begin{bmatrix}
        \HI{1} \\ 
        \HI{2} 
    \end{bmatrix}
\right) \not \subset
\Span \left(
    \begin{bmatrix}
        \HU{1} \\ 
        \HU{2} 
    \end{bmatrix}
    \right)~.
\end{align}
Since $
\begin{bmatrix}
    \HO{1} \\ \HO{2}
\end{bmatrix}
$ is generated from $
\begin{bmatrix}
    \HI{1} \\ \HI{2}
\end{bmatrix}
$ by random coding, we have $\Span \left(\HOall \right) \not \subset \Span \left( \HUall \right)$. Consequently,
\begin{align*}
    \Granki{i} \geq \rank \left( 
    \begin{bmatrix}
        \HU{1} & \HO{1} \\
        \HU{2} & \HO{2}
    \end{bmatrix}
    \right) \geq 2~,
\end{align*}
which contradicts the assumption. Therefore, we conclude,
\begin{align}
\rank \left( 
    \begin{bmatrix}
        \HU{1} \\ 
        \HU{2} 
    \end{bmatrix}
    \right) &= 0 ~, &
\rank \left( 
    \begin{bmatrix}
        \HI{1} \\ 
        \HI{2} 
    \end{bmatrix}
    \right) &\geq 2 ~.
\end{align}
But by Lemma~\ref{lma:emptycols}, there is no all zero submatrix of $\HTall$. Hence, we conclude that $U_1^{(i)} = \varnothing$.

The fact that $\rank \left(\HIall \right) \geq 2$ combined with \eqref{eqn:grank1iff} can be used to show that
$\left|O_1^{(i)}\right| < 2$. In particular if $\left|O_1^{(i)}\right| \geq 2$, then we have $\rank
\left(
\begin{bmatrix}
    \HO{1} \\ \HO{2}
\end{bmatrix}
\right) \geq 2$ and thus $\rank \left( \HTall \right) \geq 2$, which violates \eqref{eqn:grank1iff}. Since we have
already shown that $O_1^{(i)}$ is non-empty, we conclude that $\left|O_1^{(i)}\right| = 1$.

\item $\GU{2} = \mathbf{0}$, $B_2^{(i)}$ is empty and $A_1^{(i)} = O_1^{(i)}$. 

    Suppose that $ \begin{bmatrix} \GU{2} & \GB{2} \end{bmatrix} \neq \mathbf{0}$. We show that this implies that there
    is an alignment step at stage $i$ contradicting our assumption that the stage contains only randomization steps.
    This will imply that $ \begin{bmatrix} \GU{2} & \GB{2} \end{bmatrix} = \mathbf{0}.$

		Since, by Claim \ref{claim2}, 
		$$\rank \left( \GT{2}{i} \right) =  \rank\left(  \begin{bmatrix}  \GU{2} & \GA{2}& \GB{2} \end{bmatrix} \right) =
1,$$ we have 
\begin{align*}
 \Span \left(  \begin{bmatrix} \GU{2} & \GB{2} \end{bmatrix} \right) = \Span \left( \GT{2}{i}.
 \right) 
\end{align*}
However, from (i) and the fact that $U_1^{(i)}$ is empty, we have 
\begin{align*}
 \Span \left(  \begin{bmatrix} \GU{2} & \GB{2} \end{bmatrix} \right) = \Span \left( \HI{2} \right).
\end{align*}
As a result, we have the following at the beginning of recursive algorithm Phase $2$ between stage $i+1$ and $i$, 
\begin{align*}
    \Granki{i+1} &>  \Grank \left( \HI{1}, \HI{2}, \GU{2} \right)\\
    \Span \left( \HI{2} \right) &\not \subset \varnothing \\
    \Span \left( \HI{2} \right) &\subseteq \Span \left( \begin{bmatrix} \GU{2} & \GB{2} \end{bmatrix} \right)~.
\end{align*}
In other words, the conditions for alignment step are satsfied. Thus, the algorithm will carry out
an alignment step, which is a contradiction. Therefore, we conclude that $\begin{bmatrix} \GU{2} & \GB{2} \end{bmatrix} = \mathbf{0}$. Given that $\GB{2}$ is obtained by random coding from $\GI{2}$, we infer that $B_2^{(i)} = \varnothing$; otherwise we
would have $\GB{2} \neq 0$. Furthermore, with $\rank \left(\GT{2}{i}\right) = 1$, we must have $\rank \left( \GA{2} \right) = 1$ and thus $A_2^{(i)}$ is not empty. But
$A_2^{(i)}$ is a subset of $O_1^{(i)}$ and $\left|O_1^{(i)}\right| = 1$, so we get $A_2^{(i)} = O_1^{(i)}$. 

Therefore, we have $\GU{2} = \mathbf{0}$, $B_2^{(i)} = \varnothing$ and $A_2^{(i)} = O_1^{(i)}$. 
\end{enumerate}
This completes the proof.
\end{proof}

\section{Conclusion}
\label{sec:conclusion}
The techniques of routing and random network coding have served as pillars of our encoding function design in networks.
These techniques are loosely analogous to wireless network achievability techniques of orthogonalization and random
coding combined with treating interference as noise respectively. The paradigms of orthogonalization and random coding
were challenged by interference alignment in \cite{Cadambe_Jafar_int}. An important milestone in the development of
interference alignment for wireless networks was the development of numerical alignment algorithms
\cite{Gomadam_Cadambe_Jafar, Peters_Heath}. In this paper, we have undertaken an analogous effort for alignment in
wireline network coding. 

%Our algorithm is based on a simple observation: the goal of maximizing the rank of the end-to-end transform leads naturally to random linear network coding. In two-unicast-Z networks, the rank is not the appropriate metric; instead we aim to maximize the metric that represents sum-rate of the corresponding implied $Z$-interference channel. In other words, our paper treats the sum-capacity function of the $Z$-interference channel as \emph{the commodity} to be maximized, and studies this commodity.

Our paper leads to several open problems. We initiate the study of two-unicast-Z networks. For two-unicast networks, we
know that linear coding is insufficient for capacity in general, and that edge-cut outerbounds \cite{Kramer_Savari,
Kamath_Tse_Ananthram} are loose. In contrast, it is an open question whether even scalar linear network is sufficient
for two-unicast-Z networks; similarly, it is not known whether the GNS-cut set bound is loose for two-unicast-Z
networks. Second, our approach to maximizing the sum-rate is rather myopic, since we greedily optimize the network
coding co-efficients one edge at a time. In comparison, linear programming  based algorithms have been formulated for
routing, and for network coding restricted to binary field. Development of similar formulations for optimizing the
sum-rate of the two-unicast-Z network is a promising research direction. Finally, an interesting question is how our
approach compares to other approaches when translated to the index coding problem through the construction of
\cite{Effros_Equivalence}.
%optimization problem has been well formulated linear programming solutions exist that maximize 
%its development, the field of interference alignment has matured and is relatively well understood \cite{}, even though several open problems remain. In contrast, 

\bibliographystyle{ieeetr}
\bibliography{main}
%\begin{appendix}
\appendix
	\section{Proof of Properties (ii), (iii) and (v)}
	\label{app:destreduction}
    We first state and prove Lemma \ref{lem:propertyofdestred} which describes a useful property of the destination
    reduction algorithm. Then we prove Properties $(ii)$ and $(iii)$.
    \begin{lemma} 
        For any two positive integers $p,q$ with $p < q$, the set of all edges in $T_1^{(q)}\cup T_2^{(q)}$
        has a strictly lower topological order with respect to the edges in $\left(T_1^{(p)}\cup T_2^{(p)}\right)
        \backslash \left(T_1^{(p+1)}\cup T_2^{(p+1)}\right).$
	\label{lem:propertyofdestred}
	\end{lemma}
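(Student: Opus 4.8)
The plan is to isolate a single monotonicity property of the destination reduction process and then read the statement off it. Recall from Algorithm~\ref{alg:dest} that at iteration $i$ one forms the set $E^{(i)} \subseteq T_1^{(i)}\cup T_2^{(i)}$ of edges of strictly maximal topological order; since the topological order on vertices is a total order, all edges of $E^{(i)}$ share a common tail vertex $v_i$, so every $e\in E^{(i)}$ has $\Ord(e)=\Ord(v_i)$. The update replaces, in each destination set that meets $E^{(i)}$, those edges by $\In(v_i)$, and leaves a destination set untouched otherwise. The first thing I would pin down is the identity
$$
\left(T_1^{(i)}\cup T_2^{(i)}\right)\backslash\left(T_1^{(i+1)}\cup T_2^{(i+1)}\right)=E^{(i)}.
$$
The inclusion ``$\subseteq$'' is immediate from the update rule, which removes from $T_j^{(i)}$ only the edges of $E_j^{(i)}=E^{(i)}\cap T_j^{(i)}$. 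For ``$\supseteq$'', note that an edge of $\In(v_i)$ has head $v_i$, so its tail strictly precedes $v_i$ in the topological order and its edge order is strictly below $\Ord(v_i)$; hence no edge of $E^{(i)}$ (all of which have order exactly $\Ord(v_i)$) can re-enter via $\In(v_i)$, and each edge of $E^{(i)}$ therefore disappears. In particular, every edge of $\left(T_1^{(p)}\cup T_2^{(p)}\right)\backslash\left(T_1^{(p+1)}\cup T_2^{(p+1)}\right)$ has topological order exactly $\Ord(v_p)$.

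Next I would establish the key monotonicity claim: letting $M_i$ denote the largest topological order among the edges of $T_1^{(i)}\cup T_2^{(i)}$ — so $M_i=\Ord(v_i)$ — one has $M_{i+1}<M_i$ for every iteration $i$. Indeed, every edge of $T_1^{(i+1)}\cup T_2^{(i+1)}$ is either an edge of $T_1^{(i)}\cup T_2^{(i)}$ that survived the step, hence not in $E^{(i)}$ and therefore of order strictly less than $M_i$, or an edge of $\In(v_i)$, which as observed above has order strictly less than $\Ord(v_i)=M_i$. In either case its order is $<M_i$, so $M_{i+1}<M_i$.

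To conclude, for $p<q$ I would simply chain the inequalities, $M_q<M_{q-1}<\cdots<M_{p+1}<M_p=\Ord(v_p)$. Thus every edge of $T_1^{(q)}\cup T_2^{(q)}$ has order at most $M_q$, which is strictly smaller than $\Ord(v_p)$, the common topological order of all edges in $\left(T_1^{(p)}\cup T_2^{(p)}\right)\backslash\left(T_1^{(p+1)}\cup T_2^{(p+1)}\right)$; this is exactly the assertion of the lemma.

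The only delicate point — the ``hard part'', such as it is — is the first step, namely verifying that an edge removed at some stage never reappears at a later stage; this is handled once and for all by the inequality $\Ord(e)<\Ord(v_i)$ for $e\in\In(v_i)$, which holds in any DAG. After that the argument is an elementary bookkeeping of the set updates plus a one-line induction, and it also recovers Property~(i) of Section~\ref{sec:algorithm} as a byproduct.
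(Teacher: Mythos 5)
Your proof is correct and follows essentially the same approach as the paper: both arguments track the maximum topological order $M_i = \Ord(v_i)$ of edges in $T_1^{(i)}\cup T_2^{(i)}$, observe that it strictly decreases at each step because removed edges are replaced by their parents (which have strictly lower order), and then chain the inequalities from stage $p$ to stage $q$. Your version is somewhat more explicit in isolating the set identity $\left(T_1^{(i)}\cup T_2^{(i)}\right)\backslash\left(T_1^{(i+1)}\cup T_2^{(i+1)}\right)=E^{(i)}$ and in verifying that an edge of $E^{(i)}$ cannot re-enter through $\In(v_i)$, a point the paper leaves implicit, but the underlying reasoning is the same.
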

	\begin{proof}
For any integer $m$, let us denote by $e^{(m)},$ an edge of the highest topological order in set $T_1^{(m)} \cup
T_2^{(m)}.$ From the destination reduction algorithm, we note that for any integer $m$, every edge in $\left(T_1^{(m)}
\cup T_2^{(m)}\right)\backslash \left(T_1^{(m+1)} \cup T_2^{(m+1)}\right)$ has the same topological order as $e^{(m)}$.
Furthermore, from the algorithm, we note that $e^{(m)}$ has a strictly higher topological order with respect to every
edge in $T_1^{(m+1)}\cup T_2^{(m+1)}$; specifically, $e^{(m)}$ has a higher topological order with respect to
$e^{(m+1)}$. This is because, in the process of obtaining $\Omega^{(m+1)}$ from $\Omega^{(m)}$, we remove the edges with
the highest topological order from the two destination sets $T_1^{(m)}\cup T_2^{(m)}$ and replace them with the
immediate parent edges, which have a strictly lower topological order. By the transitive nature of partial ordering, we
infer that $e^{(p)}$ has a higher topological order with respect to $e^{(q)}$. Therefore $e^{(p)}$ has a higher
topological order with respect to  every edge in $T_1^{(q)} \cup T_2^{(q)}.$ Furthermore, since every edge in
$\left(T_1^{(p)}\cup T_2^{(p)}\right) \backslash \left(T_1^{(p+1)}\cup T_2^{(p+1)}\right)$ has the same topological
order as $e^{(p)}$, we conclude that every edge in this set has a strictly higher topological order with respect to
every edge in $T_1^{(q)} \cup T_2^{(q)}$.
		%claim that $e^{(p){ we know that $e^{(p+1)}$ has a higher topological order with respect to $e^{(p+2)}$ which has a higher topological order with respect to $e^{(p+3)}$. 		where $\prec$ represents the topological order, that is for $e_1 \prec e_2$ . Therefore $e^{(p)}$ 
	\end{proof}

	\emph{Proof of Property (ii):}
    For $j \in \{1,2\}$, let $K_j$ be the set of edges that communicate with destination $T_j,$ but do not belong to
    $\bigcup_{0 \leq k \leq N} T_j^{(k)}$. We show that $K_1$ and $K_2$ are empty by contradiction. 

    Suppose that $K_1$ is non-empty. Let $e$ be the highest topologically ordered edge in $K_1.$ Since there is a path
    from $e$ to destination $1$, there is at least  on edge $e'$ which is a member of $\Out(\Head(e)),$ such that there
    is a path from $e'$ to destination $1$. Furthermore, since $e'$ does not belong to $K_1$, there exists an integer
    $k$ such that $e'$ lies in $T_{1}^{(k)} \cup T_{2}^{(k)}$ for some value of $k$. Let $k^*$ denote the largest among
    such integers. Therefore, we note that $e'$ does not lie in $T_1^{(k^*+1)}\cup T_2^{(k^*+1)}$. By examining the
    destination reduction algorithm, we infer that $e'$ is among the edges of the highest topological order in
    $T_1^{(k^*)}\cup T_2^{(k^*)}.$ In the process of obtaining $\Omega^{(k^*+1)}$ from $\Omega^{(k^*)}$, all the edges
    in $\In(\Tail(e'))$ are added to $T_{1}^{(k^*+1)}$ or $T_2^{(k^*+1)}$ or both sets, depending on whether $e'$
    belongs to $T_1^{k^*}$, or $T_2^{(k^*)}$ or both sets. In particular, edge $e$ is added to $T_1^{(k^*+1)} \cup
    T_2^{(k^*+1)}$. Therefore edge $e$ does not lie in $K_1$ which contradicts our earlier assumption. Therefore the set
    $K_1$ is empty. We can similarly show that $K_2$ is empty.

    Therefore, we have shown that if an edge communicates with destination $j$ for $j \in \{1,2\}$, then it belongs to
    $T_j^{(k)}$ for some integer $k$. Let $k_1$ be the largest integer such that $e$ belongs to $T_1^{(k_1)}$. Let $k_2$
    be the largest integer such that $e$ belongs to $T_2^{(k_2)}$. To complete the proof, we show that $k_1=k_2$. As a
    contradiction, suppose that $k_1 \neq k_2$. Without loss of generality, assume that $k_1 > k_2$. Then, we note that
    edge $e$ belongs to $T_{2}^{(k_2)}$, but does not belong to $T_2^{(k_2+1)}$. By examination of the destination
    reduction algoritm, we infer that this means that edge $e$ is among the highest topologically ordered edges in
    $T_2^{(k_2)},$. and that edge $e$ has a higher topological order with respect to all edges in $T_1^{(k_1+1)}\cup
    T_2^{(k_2+1)}.$ 

    Since $k_1 > k_2$ and $e \in \left(T_1^{(k_2+1)}\cup T_2^{(k_2+1)}\right) \backslash \left(T_1^{(k)} \cup
    T_2^{(k)}\right),$ Lemma \ref{lem:propertyofdestred} implies that $e$ has a strictly higher topological order with
    respect to all the edges in $T_1^{(k_1)}\cup T_2^{(k_1)}$. Therefore $e$ cannot belong to $T_1^{(k_1)}$
    contradicting our earlier assumption. Therefore $k_1 = k_2$. 

	\emph{Proof of Property (iii):}
    Let ${K}^{(i)}$ represent the set of all edges with a lower topological order with respect to $\left(T_1^{(i)} \cup
    T_2^{(i)}\right)  \backslash \left(T_1^{(i+1)} \cup T_2^{(i+1)}\right).$ From Lemma \ref{lem:propertyofdestred} we
    infer that $$\bigcup_{i+1 \leq k \leq N} T_{1}^{(k)} \cup T_{2}^{(k)} \subseteq K^{(i)}.$$ To complete the proof, we
    need to show the reverse, that is, we need to show that $$K^{(i)} \subseteq \bigcup_{i+1 \leq k \leq N} T_{1}^{(k)}
    \cup T_{2}^{(k)} .$$

    Consider any edge $e$ in $K^{(i)}$. Since $e$ communicates with one of the two destinations, Property $(ii)$ implies
    that $e$ lies in $T_1^{(m)} \cup T_2^{(m)}$ for some integer $m$. Let $\overline{m}$ denote the largest integer such
    that  $e$ lies in  $T_1^{(\overline{m})} \cup T_2^{(\overline{m})}$. To complete the proof, it suffices to show that
    $\overline{m} \geq i+1$. We show this by contradiction. Suppose that $\overline{m} < i$. Because $e$ lies in
    $\left(T_1^{(\overline{m})} \cup T_2^{(\overline{m})}\right)  \backslash \left(T_1^{(\overline{m}+1)} \cup
    T_2^{(\overline{m}+1)}\right),$ we infer from Lemma \ref{lem:propertyofdestred} that $e$ has a higher topological
    order with respect to every edge in $T_1^{(i)}\cup T_{2}^{(i)}.$ This contradicts the assumption that $e$ lies in
    $K^{(i)}$. Therefore, we have $\overline{m} \geq i$. If $\overline{m} = i,$ then $e$ lies in $\left(T_1^{({i})} \cup
    T_2^{({i})}\right)  \backslash \left(T_1^{({i}+1)} \cup T_2^{({i}+1)}\right).$ This also contradicts the assumption
    that $e$ lies in $K^{(i)}$. Therefore we have $m > i$.  This completes the proof.

    {
    \emph{Proof of Property (v):}
    We show that each set $T_1^{(i)}$ is a cut set between $s_1, s_2$ and destination edge set $T_1$. The same argument
    applies to the case of $T_2^{(i)}$. Note that it suffices to show that any path between $s_1$ or $s_2$ and some edge
    in $T_1$ passes through at least one edge in each set $T_1^{(i)}$, $i=0,1, \dots, N$. Specifically, consider an
    arbitrary path $P$ with length $n$, denoted by a sequence of edges, i.e. $P = \left\{ e_1, e_2 ,
    \dots, e_n \right\}$, where $\Head(e_i) = \Tail(e_{i+1})$ for $1 \leq 1 \leq n$.
    We show that if $\Tail(e_1) = s_1$ or $\Tail(e_1) = s_2$ and $e_n \in T_1$, then $T_1^{(i)} \cap P \neq
    \varnothing$ for all $i=1,2,\dots, N$. 
    
    First consider the case $i=0$,
    since $e_n \in T_1$ and $T_1^{(0)} = T_1$, clearly $T_1^{(0)} \cap P \neq \varnothing$. To complete the proof, we assume that for some
    $k$, $0\leq k \leq N$, $T_1{(k)} \cap P \neq \varnothing$ holds and show that $T_1^{(k+1)} \cap P \neq \varnothing$. Let
    $e_j$ be some edge belonging to both $P$ and $T_1^{(k)}$. If $e_j \in T_1^{(k+1)}$, then clearly, $T_1^{(k+1)} \cap
    P$ is non-empty. Suppose otherwise, i.e. $e_j \not \in T_1^{(k+1)}$, then $ e_j \in T_1^{(k)} \backslash
    T_1^{(k+1)}$. By the construction of the destination reduction algorithm, we know that $e_j$ is not a source edge
    and $\In(\Tail(e_j)) \subseteq T_1^{(k+1)}$. As a result, $e_j \neq e_1$ and $e_{j-1} \in \In(\Tail(e_j))$.  We
    have $e_{j-1} \in P$ and $e_{j-1} \in T_1^{(k+1)}$, that is, we have shown that $T_1^{(k+1)} \cap P \neq \varnothing$. This completes the proof. 
    }

\section{Proof of Lemma~\ref{lma:solutioncombining}}
\label{solutioncombining}
   For convenience of notation, let $c_{\mathcal{G}_n}(s,T_n) = M$. Since,
 \begin{align}
     \rank(\mathbf{H}_{n-1}) = \rank\left( \mathbf{A}_{n-1} \left( \mathbf{I}_{n-1} -
            \mathbf{F}_{n-1} \right)^{-1}\mathbf{B}_{n-1}^T \right) 
            &= M - 1 \\
      \rank(\mathbf{H}^*_{n-1}) = \rank\left( \mathbf{A}_{n-1} \left( \mathbf{I}_{n-1} -
            \mathbf{F}^*_{n-1} \right)^{-1} \mathbf{B}^{*^T}_{n-1} \right) &\geq M ~,
\end{align}
there exists an $(M-1) \times (M-1)$ submatrix $S_1$ in $\mathbf{H}_{n-1}$ whose determinant $f_1$ is not zero.
Similarly in matrix $\mathbf{H}^*_{n-1}$, there exists some $M \times M$ submatrix $S_2$, whose determinant $f_2$ is not
zero. 

Next consider the submatrix $S_1$ in $\mathbf{A}_{n-1} \left( \mathbf{I}_{n-1} - p\mathbf{F}_{n-1} - q\mathbf{F}^*_{n-1}
\right)^{-1} \mathbf{B}^{*^T}_{n-1}$ as well as the submatrix $S_2$ in $\mathbf{A}_{n-1} \left( \mathbf{I}_{n-1} -
p\mathbf{F}_{n-1} - q\mathbf{F}^*_{n-1} \right)^{-1} \mathbf{B}^{T}_{n-1}$. Let their determinants be
$f_1(p,q)$ and $f_2(p,q)$. Clearly $f_1(p,q) \neq 0$ and $f_2(p,q) \neq 0$ since $f_1(1,0) \neq 0$ and $f_2(0,1)\neq 0$.
Thus, $f(p,q) = f_1(p, q) f_2(p,q)$ is a non-zero polynomial.  Applying Lemma $4$ in \cite{Ho_etal_Award} to $f(p,q)$,
we conclude that if the underlying field $\mathbb{F}$ is large enough, choosing $p, q$ uniformly at random from
$\mathbb{F}$ will yield $f(p,q) \neq 0$ with a probability that tends to $1$ as the field size increases.  Equivalently, with the random choices of $p$ and $q$,  
 \begin{align}
    \rank\left( \mathbf{A}_{n-1} \left( \mathbf{I}_{n-1} -
    p\mathbf{F}_{n-1} - q\mathbf{F}^*_{n-1} \right)^{-1} \mathbf{B}_{n-1}^T \right) 
    &= M - 1 \\
    \rank\left( \mathbf{A}_{n-1} \left( \mathbf{I}_{n-1} -
    p\mathbf{F}_{n-1} - q\mathbf{F}^*_{n-1} \right)^{-1} \mathbf{B}^{*^T}_{n-1}
    \right) &\geq M ~.
\end{align}

Furthermore, since $ 
\mathbf{B}^{*^T}_{n-1} =
    \left[
        \begin{array}[h]{c|c|c|c}
            \mathbf{B}^T_{n-1} & \mathbf{u}_{i_1} & \dots &
            \mathbf{u}_{i_h}
        \end{array}
    \right] 
$, 
we conclude that the matrix
\[
\mathbf{A}_{n-1} \left( \mathbf{I}_{n-1} -
    p\mathbf{F}_{n-1} - q\mathbf{F}^*_{n-1} \right)^{-1} 
    \left[
        \begin{array}[h]{c|c|c}
             \mathbf{u}_{i_1} & \dots &
            \mathbf{u}_{i_h}
        \end{array}
    \right]~,
\]
which are the symbols carried by the edges $e_{i_1}$ to $e_{i_h}$, contains at least $1$ columns that are linearly
independent from the columns of $\mathbf{A}_{n-1} \left( \mathbf{I}_{n-1} - p\mathbf{F}_{n-1} - q\mathbf{F}^*_{n-1}
\right)^{-1} \mathbf{B}_{n-1}^T $. Because $e_{i_1}, \dots, e_{i_h}$ are the parent edges of $e_{n-1+1}, \dots, e_n$, by
choosing the local coding vectors for the last edge uniformly at random from $\mathbb{F}$, we are setting the columns of  
\[
\mathbf{A}_{n-1} \left( \mathbf{I}_{n-1} -
    p\mathbf{F}_{n-1} - q\mathbf{F}^*_{n-1} \right)^{-1} 
    \begin{bmatrix}
         \mathbf{e}_{n-1+1}  & \dots & \mathbf{e}_n 
    \end{bmatrix}
\]
to be a random linear combinations of the columns of
\[
\mathbf{A}_{n-1} \left( \mathbf{I}_{n-1} -
    p\mathbf{F}_{n-1} - q\mathbf{F}^*_{n-1} \right)^{-1} 
    \left[
        \begin{array}[h]{c|c|c}
             \mathbf{u}_{i_1} & \dots &
            \mathbf{u}_{i_h}
        \end{array}
    \right]~.
\]
Therefore,
\[
\mathbf{A}_{n-1} \left( \mathbf{I}_{n-1} -
    p\mathbf{F}_{n-1} - q\mathbf{F}^*_{n-1} \right)^{-1} 
    \begin{bmatrix}
         \mathbf{e}_n
    \end{bmatrix}
\]
contains exactly one linearly independent column from 
$\mathbf{A}_{n-1} \left( \mathbf{I}_{n-1} -
    p\mathbf{F}_{n-1} - q\mathbf{F}^*_{n-1} \right)^{-1}
    \mathbf{B}_{n-1}^T$.
Hence, we conclude that,
\begin{align}
    \rank\left(
\mathbf{A}_{n-1} \left( \mathbf{I}_{n-1} - p\mathbf{F}_{n-1} - q
\mathbf{F}_{n-1}^* \right)^{-1}
    \left[
    \begin{array}[h]{c|c}
        \mathbf{B}_{n-1}^T   & \mathbf{e}_{n}
    \end{array}
\right]\right) = M 
\end{align}

\section{Proof of Lemma~\ref{lem:alignment}}
\label{neutralizeproof}
Since $\Grank\left( [\mH{1} ~~\mathbf{A}], \left[ \mH{2} ~~ \mathbf{B} \right], \mG{2}\right) > \Grank\left(
\mH{1}, \mH{2}, \mG{2} \right)$, it is clear that, 
$\Span\left(
\begin{bmatrix}
    \mathbf{A} \\ \mathbf{B}
\end{bmatrix}\right) \not \subset
\Span\left(
\begin{bmatrix}
    \mH{1} \\ \mH{2}
\end{bmatrix}
\right)$.

We first consider the case when at least one of conditions $(i)$ and $(ii)$ does not hold.  In this case, we pick
entries of $\mathbf{f}$ uniformly at random from a large enough field.
\begin{enumerate}
\item  If $\Span\left( \mathbf{B} \right) \in \Span\left( \mH{2} \right)$,  then $\mathbf{Bf} \in \Span\left(
    \mH{2} \right)$, but $\mathbf{Af} \not \in \Span\left( \mH{1} \right)$. Consequently,  
    \begin{align*}
        \rank\left( \begin{bmatrix} \mH{1} & \mathbf{Af} \\ \mH{2} & \mathbf{Bf}\end{bmatrix} \right) &= 
            \rank\left( \begin{bmatrix} \mH{1} \\ \mH{2} \end{bmatrix} \right) + 1 ~, \\
            \rank\left( [\mH{2} ~~ \mathbf{Bf} ~~ \mG{2} ] \right) &= \rank\left([\mH{2}~~ \mG{2}] \right)~, \\
            \rank\left( [\mH{2} ~~ \mathbf{Bf} ] \right) &= \rank\left( \mH{2}\right)~.
    \end{align*}
\item If $\Span\left( \mathbf{B} \right) \not \in \Span \left( [\mH{2} ~~ \mG{2}] \right)$, then $\mathbf{Bf}
    \not \in \Span\left([\mH{2} ~~ \mG{2}] \right)$, and  $\mathbf{Af} \not \in \Span\left( \mH{1} \right)$.
    Hence,
    \begin{align*}
        \rank\left( \begin{bmatrix} \mH{1} & \mathbf{Af} \\ \mH{2} & \mathbf{Bf}\end{bmatrix} \right) &= 
            \rank\left( \begin{bmatrix} \mH{1} \\ \mH{2} \end{bmatrix} \right) + 1 ~, \\
            \rank\left( [\mH{2} ~~ \mathbf{Bf} ~~ \mG{2} ] \right) &= \rank\left([\mH{2}~~ \mG{2}] \right)+1~, \\
            \rank\left( [\mH{2} ~~ \mathbf{Bf} ] \right) &= \rank\left( \mH{2}\right)+1~.
    \end{align*}
\end{enumerate}
Therefore, in both cases, we $\Grank\left( [\mH{1} ~~\mathbf{Af}], \left[ \mH{2} ~~ \mathbf{Bf} \right], \mG{2}\right) =
\Grank\left( \mH{1}, \mH{2}, \mG{2} \right) + 1$.

It remains to find the vector $\mathbf{f}$ that satisfies \eqref{eq:resultingcondition} when conditions $(i)$ and $(ii)$
both hold. In particular, it suffices to find a vector $\mathbf{f}$ such that
\begin{align}
    \begin{bmatrix}
        \mathbf{A} \\ \mathbf{B}
    \end{bmatrix}\mathbf{f} 
    &\not\in 
    \Span\left(
    \begin{bmatrix}
        \mH{1} \\ \mH{2}
    \end{bmatrix}\right)~, \label{eqn:fcondition1}\\
    \mathbf{Bf} &\in \Span\left(\mH{2}\right)~. \label{eqn:fcondition2}
\end{align}
{
Suppose the columns of the matrix $\begin{bmatrix}\mH{1} \\ \mH{2}\end{bmatrix}$ are indexed by the index set $[Q_1] =
\{1,2,\dots, Q_1\}$, while
the columns are from $\begin{bmatrix} \mA \\ \mB \end{bmatrix}$ are indexed by the index set $[M] = \{1,2,\dots, M\}$. 
}
Let $\Gamma$ be the subset
of $[Q_1]$ such that, for the matrix $\mH{2}$, its submatrix consists columns indexed by $\Gamma$ forms basis of the
matrix. Denote this submatrix as $\mH{2}^\Gamma$. Thus, $\rank\left( \mH{2} \right) = |\Gamma|.$ By the basis extension theorem, there exists a subset of columns of $\begin{bmatrix} \mA \\ \mB \end{bmatrix}$, indexed by $\alpha_B \subset [M]$, such
that the columns of matrix $[\mH{2}^\Gamma ~~ \mB^{\alpha_B}]$ form a basis of $\left[ \mH{2}~~\mB \right]$.  Since columns of $\mH{2}^\Gamma$ are linearly independent, columns of $\mH{}^\Gamma =  \begin{bmatrix} \mH{1}^\Gamma \\ \mH{2}^\Gamma \end{bmatrix}$ are also linearly independent. Hence, there
exists a subset $\alpha_H \subset [Q_1]$, such that $\alpha_H \cap \Gamma = \varnothing$, such that the columns of $\begin{bmatrix}
    \mH{1}^\Gamma & \mH{1}^{\alpha_H}  \\
    \mH{2}^\Gamma & \mH{2}^{\alpha_H} 
\end{bmatrix}
$ form a basis of $\begin{bmatrix} \mH{1} \\ \mH{2} \end{bmatrix}$. Consider the matrix 
$
\begin{bmatrix}
    \mH{1}^\Gamma & \mH{1}^{\alpha_H} & \mA^{\alpha_B} \\
    \mH{2}^\Gamma & \mH{2}^{\alpha_H} & \mB^{\alpha_B} 
\end{bmatrix}
$. We show that its columns are linearly independent. 
Consider  a length $|\Gamma|$ vector $\mathbf{c}^\Gamma$, a length $|\alpha_H|$ vector
$\mathbf{c}^{\alpha_H}$ and a length $|\alpha_B|$ vector $\mathbf{c}^{\alpha_B}$, such that
\begin{align}
\begin{bmatrix}
    \mH{1}^\Gamma & \mH{1}^{\alpha_H} & \mA^{\alpha_B} \\
    \mH{2}^\Gamma & \mH{2}^{\alpha_H} & \mB^{\alpha_B} 
\end{bmatrix}
\begin{bmatrix}
    \mathbf{c}^\Gamma \\ \mathbf{c}^{\alpha_H} \\ \mathbf{c}^{\alpha_B}
\end{bmatrix} 
= \mathbf{0}~.
\label{eqn:atemporaryone}
\end{align}
We aim to show that $\begin{bmatrix}
    \mathbf{c}^\Gamma \\ \mathbf{c}^{\alpha_H} \\ \mathbf{c}^{\alpha_B}
\end{bmatrix} = \mathbf{0}.$
Since the columns of $\mH{2}^\Gamma$ is a basis for $\mH{2}$, there exists a matrix $\mathbf{D}$ such that
$\mH{2}^{\alpha_H} = \mH{2}^\Gamma$. Thus, 
\begin{align}
\begin{bmatrix}
    \mH{2}^\Gamma & \mH{2}^{\alpha_H} & \mB^{\alpha_B} 
\end{bmatrix}
\begin{bmatrix}
    \mathbf{c}^\Gamma \\ \mathbf{c}^{\alpha_H} \\ \mathbf{c}^{\alpha_B}
\end{bmatrix} 
= 
\mathbf{H}_2^\Gamma (\mathbf{c}^\Gamma + \mathbf{D} \mathbf{c}^{\alpha_H}) +
\mB^{\alpha_B} \mathbf{c}^{\alpha_B} &= \mathbf{0}.
\end{align}
But $\begin{bmatrix} \mathbf{H}_2^{\Gamma} & \mB^{\alpha_B} \end{bmatrix}$ forms a basis. As a result,
$\mathbf{c}^{\alpha_B} = \mathbf{0}$ and $\mathbf{c}^\Gamma + \mathbf{D}\mathbf{c}^{\alpha_H} = \mathbf{0}$.
Substituting this in (\ref{eqn:atemporaryone}), we get
\begin{align}
\begin{bmatrix}
    \mH{1}^\Gamma & \mH{1}^{\alpha_H} \\ 
    \mH{2}^\Gamma & \mH{2}^{\alpha_H} 
\end{bmatrix}
\begin{bmatrix}
    \mathbf{c}^\Gamma \\ \mathbf{c}^{\alpha_H} 
\end{bmatrix} 
= \mathbf{0}~.  
\end{align} 
Since columns of $
\begin{bmatrix}
    \mH{1}^\Gamma & \mH{1}^{\alpha_H} \\ 
    \mH{2}^\Gamma & \mH{2}^{\alpha_H} 
\end{bmatrix}$ also form a basis, we must have $\mathbf{c}^\Gamma = \mathbf{0}$ and $\mathbf{c}^{\alpha_H} = \mathbf{0}$.
Hence, columns of 
$
\begin{bmatrix}
    \mH{1}^\Gamma & \mH{1}^{\alpha_H} & \mA^{\alpha_B} \\
    \mH{2}^\Gamma & \mH{2}^{\alpha_H} & \mB^{\alpha_B} 
\end{bmatrix}
$ are linearly independent.

By the basis extension theorem,  there exits a set $\beta_B  \subset
[M] \backslash \alpha_B$, such that the columns of
$
\begin{bmatrix}
    \mH{1}^\Gamma & \mH{1}^{\alpha_H} & \mA^{\alpha_B} & \mA^{\beta_B} \\
    \mH{2}^\Gamma & \mH{2}^{\alpha_H} & \mB^{\alpha_B} & \mB^{\beta_B} 
\end{bmatrix}
$
 form a basis for the matrix 
$
\begin{bmatrix}
    \mH{1} & \mA \\
    \mH{2} & \mB
\end{bmatrix}
$. We next show that $|\beta_B| > 0.$

We know that $\Grank \left( [\mH{1}~~\mA], [\mH{2}~~ \mB], \mG{2} \right) > \Grank\left( \mH{1}, \mH{2}, \mG{2} \right)$. Therefore, 
\begin{align*}
&\Grank \left( [\mH{1}~~\mA], [\mH{2}~~ \mB], \mG{2} \right) - \Grank\left( \mH{1}, \mH{2}, \mG{2} \right) \\
  &= \rank\left(
  \begin{bmatrix}
      \mH{1} & \mA \\
      \mH{2} & \mB 
  \end{bmatrix}
  \right) 
  - \rank\left( 
  \begin{bmatrix}
      \mH{1} \\ \mH{2}
  \end{bmatrix}
  \right) 
  +
  \rank\left( 
\begin{bmatrix}
    \mH{2} &  \mB & \mG{2} \\
\end{bmatrix} \right)  
-\rank\left( 
\begin{bmatrix}
    \mH{2} & \mG{2} \\
\end{bmatrix} 
\right) \\
  & \qquad -\rank\left(
  \mH{2}
    \right)
+\rank\left(
\begin{bmatrix}
    \mH{2} & \mB 
\end{bmatrix}
\right) \\
&\stackrel{(a)}{=}
\rank\left(
  \begin{bmatrix}
      \mH{1} & \mA \\
      \mH{2} & \mB 
  \end{bmatrix}
  \right) 
  - \rank\left( 
  \begin{bmatrix}
      \mH{1} \\ \mH{2}
  \end{bmatrix}
  \right) 
-
\left(
\rank\left(
  \mH{2}
    \right)
-\rank\left(
\begin{bmatrix}
    \mH{2} & \mB 
\end{bmatrix}
\right) \right) \\
& = (|\Gamma| + |\alpha_H| + |\alpha_B| + |\beta_B| - |\Gamma|-|\alpha_H|) -(|\Gamma|  - |\Gamma|+ |\alpha_B|)  \\
&= |\beta_B| > 0~,
\end{align*}
where $(a)$ follows from condition (ii), i.e. $\Span \left(\mB\right) \subset 
\Span\left(
\begin{bmatrix}
    \mH{2} & \mG{2}
\end{bmatrix}
\right)$.

Since $\beta_B \neq \varnothing$, note that the columns of $\begin{bmatrix} \mathbf{H}_2^\Gamma & \mB^{\alpha_B} &
    \mB^{\beta_B} \end{bmatrix}$ is a linearly dependent set of vectors. This is because, this set of vectors spans $[\mathbf{H}_{2}~~\mathbf{B}]$, whose basis is $ \begin{bmatrix} \mathbf{H}_2^\Gamma & \mB^{\alpha_B}\end{bmatrix}.$ The nullspace of $\begin{bmatrix} \mathbf{H}_2^\Gamma & \mB^{\alpha_B} &
    \mB^{\beta_B} \end{bmatrix}$ therefore contains at least one non-zero
vector. Pick an arbitrary non-zero vector from this nullspace. Denote the vector as 
be $\begin{bmatrix}{\mathbf{c}^{\Gamma}} \\  {\mathbf{c}^{\alpha_B}}
     \\ {\mathbf{c}^{\beta_B}} \end{bmatrix}$, where $\mathbf{c}^\Gamma$ is a length $|\Gamma|$ subvector,
$\mathbf{c}^{\alpha_B}$ is a length $|\alpha_B|$ subvector and $\mathbf{c}^{\beta_B}$ is a length $|\beta_B|$ subvector.
We have,
$\mathbf{c}^{\beta_B} \neq \mathbf{0}$ and $ \mH{2}^{\Gamma}
\mathbf{c}^{\Gamma} + \mB^{\alpha_B} \mathbf{c}^{\alpha_B} +
\mathbf{B}^{\beta_B} \mathbf{c}^{\beta_B} = \mathbf{0}$, hence,
$\mB^{\alpha_B} \mathbf{c}^{\alpha_B} + \mathbf{B}^{\beta_B} \mathbf{c}^{\beta_B} \in 
\Span\left( \mH{2}^\Gamma
\right) = \Span\left( \mH{2} \right)$. On the other hand,
since columns of
$
\begin{bmatrix}
    \mH{1}^\Gamma & \mH{1}^{\alpha_H} & \mA^{\alpha_B} & \mA^{\beta_B} \\
    \mH{2}^\Gamma & \mH{2}^{\alpha_H} & \mB^{\alpha_B} & \mB^{\beta_B} 
\end{bmatrix}
$
is a basis,
%form a basis for 
%$
%\begin{bmatrix}
    %\mH{1} & \mA \\
    %\mH{2} & \mB
%\end{bmatrix}
%$,
the columns of 
$
\begin{bmatrix}
    \mA^{\alpha_B} & \mA^{\beta_B}  \\
    \mB^{\alpha_B} & \mB^{\beta_B}
\end{bmatrix}
$ is linearly independent of the columns of
$\begin{bmatrix} 
    \mH{1}^\Gamma & \mH{1}^{\alpha_H} \\
    \mH{2}^\Gamma & \mH{2}^{\alpha_H}
\end{bmatrix}$, 
which is a basis for 
$\begin{bmatrix}
    \mH{1} \\ \mH{2}
\end{bmatrix}$. Therefore, 
$\begin{bmatrix} \mA^{\alpha_B} \\ \mB^{\alpha_B}\end{bmatrix} 
\mathbf{c}^{\alpha_B} + \begin{bmatrix} \mA^{\alpha_B} \\ \mB^{\alpha_B}\end{bmatrix}  \mathbf{c}^{\beta_B}
\not \in \Span\left( 
\begin{bmatrix}
    \mH{1} \\ \mH{2}
\end{bmatrix}
\right)$. 

We are now ready to specify $\mathbf{f}$ satisfying (\ref{eqn:fcondition1}) and (\ref{eqn:fcondition2}). For any set $I \subseteq [M]$, let $\mathbf{f}^{I}$ denote the entries of vector $\mathbf{f}^{I}$ corresponding to set $I$.  We specify the $M \times 1$ vector $\mathbf{f}$ as follows: $\mathbf{f}^{\alpha_{B}} = \mathbf{c}^{\alpha_B},$  $\mathbf{f}^{\beta_{B}} = \mathbf{c}^{\beta_B},$ and $\mathbf{f}^{I} = \mathbf{0}$ for any set $I$ which is disjoint with $\alpha_{B} \cup \beta_B$. We note that with this choice of $\mathbf{f}$, we have 
$$\mathbf{B}\mathbf{f} = \mB^{\alpha_B} \mathbf{c}^{\alpha_B} + \mathbf{B}^{\beta_B} \mathbf{c}^{\beta_B} $$  
$$\begin{bmatrix}\mathbf{A}\\ \mathbf{B}\end{bmatrix} \mathbf{f} = \begin{bmatrix} \mA^{\alpha_B} \\ \mB^{\alpha_B}\end{bmatrix} 
\mathbf{c}^{\alpha_B} + \begin{bmatrix} \mA^{\alpha_B} \\ \mB^{\alpha_B}\end{bmatrix}  \mathbf{c}^{\beta_B}
 $$  
It can be readily verified that (\ref{eqn:fcondition1}) and (\ref{eqn:fcondition2}) are satisfied.

{
So far, we have showed that there is at least one vector $\mathbf{f}$ which satisfies \eqref{eqn:fcondition1} and
\eqref{eqn:fcondition2} and subsequently satisfies \eqref{eq:resultingcondition}. Next, we show that if
conditions $(i)$ and $(ii)$ hold, then
picking a random vector $\mathbf{v}$ from the nullspace of 
$
\begin{bmatrix}
    \mH{2} & \mathbf{B}
\end{bmatrix}
$ 
and setting $\mathbf{f}$ to be the last $M$ entries of the random vector $\mathbf{v}$ satisfies \eqref{eq:resultingcondition} with high probability,. In particular, we show that such a vector $\mathbf{f}$ satisfies \eqref{eqn:fcondition1} and \eqref{eqn:fcondition2} with high probability. 

Let $\mathbf{v} = \begin{bmatrix} \mathbf{v}_1 \\ \mathbf{v}_2 \end{bmatrix}$, $\mathbf{v}_1 \in \mathbb{F}^{Q_1},
\mathbf{v}_2 \in \mathbb{F}^{M}$, be a random vector from the nullspace of 
$
\begin{bmatrix}
    \mH{2} & \mathbf{B}
\end{bmatrix}~,
$
i.e. $\mH{2} \mathbf{v}_1 + \mB \mathbf{v}_2 = \mathbf{0}$. Clearly, $\mathbf{f} = \mathbf{v}_2$ satisfies
\eqref{eqn:fcondition2}. We need to show that $\mathbf{f}$ chosen this way satisfies (\ref{eqn:fcondition1}) with high probability. Note that the vector $\mathbf{v}_2$ fails to satisfy \eqref{eqn:fcondition1} if and only if there exists some
$\mathbf{v}_1^\prime \in \mathbb{F}^{Q_1}$, such that, 
\begin{align}
    \begin{bmatrix}
        \mH{1} & \mA  \\
        \mH{2} & \mB
    \end{bmatrix}
    \begin{bmatrix}
    \mathbf{v}_1^\prime \\
    \mathbf{v}_2
    \end{bmatrix}
    = \mathbf{0}~.
\end{align}
Let $\mathcal{R}$ be the set of all such vectors in the nullspace of 
$
\begin{bmatrix}
    \mH{2} & \mathbf{B}
\end{bmatrix}~,
$
i.e.
\begin{align}
    \mathcal{R} = \left\{ 
        \begin{bmatrix}
            \mathbf{v}_1 \\ \mathbf{v}_2
        \end{bmatrix}
        \in \text{Ker}\left( \begin{bmatrix}
    \mH{2} & \mathbf{B}
\end{bmatrix}
\right)
        : 
        \exists~ \mathbf{v}_1^\prime \in \mathbb{F}^{Q_1}, s.t. 
\begin{bmatrix}
        \mH{1} & \mA  \\
        \mH{2} & \mB
    \end{bmatrix}
    \begin{bmatrix}
    \mathbf{v}_1^\prime \\
    \mathbf{v}_2
    \end{bmatrix}
    = \mathbf{0}
    \right\}~.
\end{align}
Clearly, if
$
\begin{bmatrix}
    \mathbf{v}_1 \\ \mathbf{v}_2
\end{bmatrix}
~,
\begin{bmatrix}
    \mathbf{u}_1 \\ \mathbf{u}_2
\end{bmatrix}
\in \mathcal{R}
$, then there exists $\mathbf{v}_1^\prime, \mathbf{u}_1^\prime \in \mathbb{F}^{Q_1}$, s.t.
\begin{align}
 \begin{bmatrix}
        \mH{1} & \mA  \\
        \mH{2} & \mB
    \end{bmatrix}
    \begin{bmatrix}
        a \mathbf{v}_1^\prime + b \mathbf{u}_1^\prime\\
        a \mathbf{v}_2 + b \mathbf{u}_2
    \end{bmatrix}
    = \mathbf{0}
\end{align}
for any $a, b \in \mathbb{F}$, i.e. 
$
    \begin{bmatrix}
        a \mathbf{v}_1^\prime + b \mathbf{u}_1^\prime\\
        a \mathbf{v}_2 + b \mathbf{u}_2
    \end{bmatrix} \in \mathcal{R}~.
$
Hence, $\mathcal{R}$ is a subspace of 
$
\text{Ker}\left( \begin{bmatrix}
    \mH{2} & \mathbf{B}
\end{bmatrix}
\right)
$. 

Furthermore, note that $\mathcal{R}$ is a proper subspace. This is because we have already shown that there exists a vector in the null space of $\begin{bmatrix} \mathbf{H}_{2}&\mathbf{B}\end{bmatrix}$ that satisfy (\ref{eqn:fcondition1}), that is, we have already shown the existence of a vector in the nullspace $\begin{bmatrix} \mathbf{H}_{2}&\mathbf{B}\end{bmatrix}$  that does not lie in $\mathcal{R}$. 
Therefore, by picking $\mathbf{v}$ uniformly at random from  
$
\text{Ker}\left( \begin{bmatrix}
    \mH{2} & \mathbf{B}
\end{bmatrix}
\right)
$, the probability that both condition \eqref{eqn:fcondition1} and \eqref{eqn:fcondition2} are satisfied is
\begin{align}
    1 - P \left( \mathbf{v} \in \mathcal{R} \right)
    = 1 -  \frac{1}{|\mathbb{F}|^t}~,  t \in \mathbb{Z}^+
\end{align}
where $|\mathbb{F}|$ is the size of the underlying field and $t$ is the difference between the dimension of subspace
$\mathcal{R}$ and the dimension of 
$\text{Ker}\left( \begin{bmatrix}
    \mH{2} & \mathbf{B}
\end{bmatrix}
\right)$. Hence, as the field size increases arbitrarily, the probability of both \eqref{eqn:fcondition1} and
\eqref{eqn:fcondition2} hold approaches to $1$. This completes the proof.
}

\section{Proof of Lemma~\ref{lma:emptycols}}
\label{app:emtycolsproof}
    Clearly, at the initial stage, by the construction of the algorithm, every column of the matrix corresponds to a
    source edge and is assigned a unit vector. As a result, none of the columns in the matrix 
    $
    \begin{bmatrix}
        \HT{1}{N} \\  \HT{2}{N}
    \end{bmatrix}
    $
    is all-zero. 
    
    Suppose that there is no all-zero column in the matrix 
    $
    \begin{bmatrix}
        \HT{1}{i+1} \\  \HT{2}{i+1}
    \end{bmatrix}
    $. Clearly, in $\HTall$, no all-zero columns can be generated from random coding of columns of $
   \begin{bmatrix}
        \HT{1}{i+1} \\  \HT{2}{i+1}
    \end{bmatrix}
    $
    with a probability that tends to $1$ as the field size increases.  It remains to consider the case when alignment
    happens. But by
    Lemma~\ref{lem:alignment}, the alignment step will generate a column that does not belong to the column span of the
    matrix $\HUall$. Hence, any column generated by alignment cannot be all zero either. This completes the proof.

\section{Proof of Lemma~\ref{lma:emptyG2}}
\label{app:emtyG2proof}
Suppose that from stage $k+1$ to stage $k$, the algorithm performs some alignment step.  Since $A_2^{(k)} \subset
O_1^{(k)}$, we have $Q^{(k)} = T_2^{(k)} \backslash T_1^{(k)} = U_2^{(k)} \cup B_2^{(k)} \backslash U_1^{(k)}$.  By the
construction of the algorithm, if at some step $j$ in phase $2$, an alignment step takes place, then we have,
\begin{align*}
\Span\left(\mathbf{H}_2^{I^{(i)}_1}\right) &\not \subset
\Span\left(
\begin{bmatrix}
    \mathbf{H}_2^{U^{(i)}_1} & \mathbf{H}_2^{O_{1,j-1}^{(i)}}
\end{bmatrix}
\right),  \\
\Span\left(\mathbf{H}_2^{I^{(k)}_1}\right) &\subset
\Span\left(
\begin{bmatrix}
    \mathbf{H}_2^{U^{(k)}_1} & \mathbf{H}_2^{O_{1,j-1}^{(k)}} & \mathbf{G}_2^{U^{(k)}_2} &
    \mathbf{G}_2^{A^{(k)}_{2,j-1}} & \mathbf{G}_2^{B^{(k)}_2}
\end{bmatrix}
\right)~.
\end{align*}
Since $A_{2,j-1}^{(k)}  \subset O_{1,j-1}^{(k)}$, $\mathbf{G}_2^{A^{(k)}_{2,j-1}}$ is a submatrix of $
\mathbf{H}_2^{O_{1,j-1}^{(k)}} $. For the conditions to hold, we must have 
\begin{align*}
    \mathbf{G}_2^{U_2^{(k)} \cup B_2^{(k)} \backslash U_1^{(k)}} = \mathbf{G}_2^{Q^{(k)}} \neq \mathbf{0} ~.
\end{align*}
Now consider $Q^{(k-1)} = T_2^{(k-1)} \backslash T_1^{(k-1)}$. By the construction of the algorithm, $Q^{(k-1)}$ does
not communicate with $T_1$. Hence, all the columns corresponding to edges in $Q^{(k-1)}$ are either source edge columns
or columns generated by random coding in phase $1$ at some stage of the algorithm. On the other hand, since all edges in
$Q^{(k)}$ communicate with $T_2$ but not $T_1$, for each edge $e_i^{(k)} \in Q^{(k)}$ there exists an edge $e_i^{(k-1)}
\in Q^{(k-1)}$ such that either $e_i^{(k)} = e_i^{(k-1)}$ or $e_i^{(k)}$ is a parent edge of $e_i^{(k-1)}$. Thus, each
column in $\mathbf{G}_2^{Q^{(k)}}$ participates in random coding for at least one edge in $Q^{(k-1)}$. Now since
$\mathbf{G}_2^{Q^{(k)}} \neq \mathbf{0}$, we conclude that at least one column of $\mathbf{G}_2^{Q^{(k-1)}}$ is not
all-zero and thus $\mathbf{G}_2^{Q^{(k-1)}}$ is not a zero matrix. Subsequently we have, for all $0 \leq i \leq k$,
$\mathbf{G}_2^{Q^{(i)}} \neq \mathbf{0}$, which completes the proof.

\end{document}